\documentclass[12pt]{article}
\usepackage[utf8]{inputenc}
\usepackage[margin=1in]{geometry}
\usepackage{bm}
\usepackage{amsmath, color, natbib}
\usepackage{amssymb}
\usepackage{amsthm}
\usepackage{graphicx}
\usepackage{amsfonts}
\usepackage{mathrsfs}
\numberwithin{equation}{section}
\usepackage{booktabs}
\usepackage{float}
\usepackage{adjustbox, url}
\usepackage[table]{xcolor}
\usepackage[doublespacing]{setspace}
\usepackage{tikz}
\usepackage{authblk}
\usetikzlibrary{shapes,arrows,positioning}
\usepackage{algorithm}
\usepackage{algpseudocode}
\doublespacing 
\usepackage{changebar}
\usepackage{multirow}
\usepackage{hyperref}
\usepackage{subcaption} 
\usepackage{array} 
\usepackage{mathtools}
\usepackage{authblk}

\usepackage{enumitem}

\hypersetup{hidelinks}

\usepackage{bibunits}
\defaultbibliographystyle{apalike}   
\defaultbibliography{references}

\usepackage[english]{babel}
\theoremstyle{definition}

\newtheorem{theorem}{Theorem}

\newcommand{\yx}{\color{blue}\em } 

\newcommand{\xx}{\color{black}\rm } 








\newcommand{\bD}{{\mathbf{D}}}

\newcommand{\wst}{{w_h^*}}

\newcommand{\tilx}{{\tilde{x}}}


\newcommand{\email}[1]{\href{mailto:#1}{#1}}

\renewcommand{\yx}{\color{black}}

\renewcommand{\xx}{\color{black}}

\makeatletter
\@ifpackageloaded{chngcntr}{}{\usepackage{chngcntr}}
\makeatother

\pdfstringdefDisableCommands{%
  \def\yx{}%
  \def\xx{}%
}

\begin{document}

\title{\bf WAIC-Optimized Weight Gating for Mixture Priors in External Data Borrowing}

\author{\small
Shouhao Zhou$^{1,2,3,\dagger,*}$, 
Qiuxin Gao$^{4,\dagger}$, Chenqi Fu$^{1}$, and 
Yanxun Xu$^{4,5,\dagger,**}$ \\
\small $^{1}$Department of Public Health Sciences, Pennsylvania State University, Hershey, Pennsylvania, U.S.A. \\
\small $^{2}$Department of Statistics, Pennsylvania State University, University Park, Pennsylvania, U.S.A. \\
\small $^{3}$Penn State Cancer Institute, Hershey, Pennsylvania, U.S.A. \\
\small $^{4}$Department of Applied Mathematics and Statistics, Johns Hopkins University, Baltimore, Maryland, U.S.A. \\
\small $^{5}$Division of Quantitative Sciences, Johns Hopkins School of Medicine, Baltimore, Maryland, U.S.A. \\
\small $^{\dagger}$These authors contributed equally to this work. \\
\small \textit{*email: \email{szhou1@pennstatehealth.psu.edu}}\\
\small \textit{**email: \email{yanxun.xu@jhu.edu}}
}
\date{}
\maketitle

\begin{abstract}
The integration of external data using Bayesian mixture priors has become a powerful approach in clinical trials, offering significant potential to improve trial efficiency. Despite their strengths in analytical tractability and practical flexibility, existing methods such as the robust meta-analytic-predictive (rMAP) and self-adapting mixture (SAM) often presume borrowing without rigorously assessing whether external information is appropriate to incorporate. When external and concurrent data are discordant, excessive borrowing can bias estimation and lead to misleading conclusions. To address this, we introduce WOW, a Kullback–Leibler-based gating strategy guided by the widely applicable information criterion (WAIC). 
\yx
Within the mixture-prior framework, WAIC-Optimized Weighting (WOW) conducts a preliminary compatibility assessment between external and concurrent trial data to determine eligibility for borrowing. Only if this gating criterion is satisfied does borrowing proceed; a downstream mixture prior procedure, using user-specified fixed or adaptive weights, can then be applied to determine the amount of borrowing. \xx Simulation studies demonstrate that incorporating the WOW strategy before Bayesian mixture prior borrowing methods effectively mitigates excessive borrowing and improves estimation accuracy. \yx  A real-data illustration further highlights the feasibility and interpretability of the proposed gate-then-borrow strategy. By providing a practical safeguard against inappropriate borrowing, WOW strengthens the reliability of mixture-prior methods and supports better decision-making in clinical trials.\xx
\end{abstract}

\noindent {\bf Keywords: } Bayesian dynamic borrowing; clinical trials; mixture priors; real-world evidence; WAIC.

\section{Introduction}
Integrating external data into clinical trials holds significant potential for improving efficiency, particularly in rare diseases, pediatric populations, and studies with ethical constraints.
By leveraging historical trials and real-world data (RWD), researchers can  supplement control arm evidence and optimize resource allocation \citep{li_novel_2022}. 
However, the use of external controls introduces major validity concerns, as non-concurrent randomization may lead to bias from unobserved confounding or temporal shifts \citep{spanakis2023addressing}. Regulatory agencies, including the U.S. Food and Drug Administration (FDA) and the European Medicines Agency (EMA), recognize the great potential but emphasize the critical need for rigorous methodologies to ensure reliable integration \citep{EMA2020,us2021considerations}.

Statistical methods for integrating external data have advanced substantially, with Bayesian and frequentist approaches offering distinct strengths \citep{lesaffre2024review}. Bayesian methods are especially popular for their flexibility in adjusting the extent of borrowing based on the relevance of external data. For example, power priors use a discounting parameter to control the contribution of external data \citep{ibrahim_power_2000}. Commensurate priors quantify similarity between external and concurrent data through a commensurability parameter, thereby dynamically modulating the degree of borrowing \citep{hobbs_hierarchical_2011}. Hierarchical models enable information borrowing across multiple sources while accounting for study heterogeneity through exchangeability assumptions  \citep{berry2013bayesian}.


Among various developments, mixture prior methods have emerged as an attractive framework due to their practical flexibility and interpretability in handling prior-data conflicts. 
This approach represents the prior as a weighted mixture of an informative component (derived from external data) and a non-informative component, with the mixture weight controlling the extent of borrowing.  By varying this weight between 0 (no borrowing) and 1 (full borrowing), the method enables continuous, data-driven adjustment of external information integration. In conjugate settings, mixture priors maintain analytical tractability, yielding interpretable posterior distributions that preserve the mixture form.  

Earlier methods, such as the robust meta-analytic-predictive (rMAP) prior \citep{schmidli2014robust}, rely on fixed mixture weights informed by clinical judgment. However, such weights may not be readily or reliably available. To overcome this limitation, data-driven methods, also known as dynamic borrowing, have been proposed to determine mixture weights through conflict diagnostics that automatically downweight incompatible external data. \yx  For example, the predictive informative prior (PIP) calibrates mixture weights using prior-predictive conflict checks \citep{egidi2022avoiding}; the self-adapting mixture (SAM) prior uses posterior probability ratios to reduce borrowing as prior-data conflict increases \citep{yang2023sam}; and the empirical Bayes robust MAP (EB-rMAP) prior uses Box's prior predictive $p$-values to balance borrowing strength against potential incompatibility \citep{zhang2023adaptively}. \xx
Compared with alternative Bayesian approaches such as power priors and commensurate priors, adaptive mixture priors offer a flexible and interpretable framework for improving efficiency while mitigating bias under prior-data conflict.


Despite these advancements, adaptive mixture priors still depend on pre-specified tuning parameters to assess prior-data conflict, which may lead to overly aggressive borrowing when these parameters are misspecified. Moreover, existing dynamic borrowing methods primarily focus on adjusting the amount of borrowing after external information has entered the mixture-prior framework, rather than first evaluating whether its inclusion improves estimation or predictive performance for the concurrent trial data. This limitation can lead to misleading inference, highlighting the need for a principled, data-driven procedure to assess borrowing appropriateness before applying adaptive weighting schemes.

\yx To address this gap, we propose the WAIC-Optimized Weight (WOW) gating strategy, which uses the widely applicable information criterion (WAIC) to assess borrowing appropriateness before applying downstream mixture-prior methods. \xx WAIC estimates out-of-sample posterior predictive performance, making it well suited for assessing whether external-data borrowing improves fit to the concurrent trial data. By comparing WAIC under borrowing and no borrowing, WOW provides a principled, data-driven gate. \yx
If the historical-data prior worsens predictive performance for the concurrent control data, borrowing is avoided; otherwise, a downstream mixture-prior method, such as SAM or EB-rMAP, determines the degree of borrowing. Thus, WOW separates borrowing eligibility from borrowing magnitude, improving transparency and supporting more justifiable use of external data. \xx

This paper proceeds as follows. Section \ref{sec:methods} reviews mixture-prior borrowing methods, discusses weight specification, and presents an illustrative example showing how adaptive approaches can lead to overly aggressive borrowing.  In Section \ref{sec:WOW}, we introduce the proposed WOW gating strategy and its key properties. Section \ref{sec:wowbinary} develops the WOW procedure for binary and continuous endpoints. Section \ref{sec:simu} presents simulation studies that evaluate and compare the performance of WOW against existing methods, \yx and Section \ref{sec:real_data} illustrates its application using the RBesT binary responder example. \xx Finally, Section \ref{sec:conc} concludes with a discussion of broader implications and future extensions.

\section{Mixture-Prior Framework, Weight Specification, and Borrowing Challenges}
\label{sec:methods}

This section reviews Bayesian mixture priors as a flexible framework for borrowing external control information and examines the specification of the mixture weight. In existing approaches, the mixture weight is typically treated either as a fixed quantity, pre-specified or estimated from the data, or as a random variable assigned a prior distribution. \yx We show that, under the linear mixture-prior construction considered here, these formulations are equivalent for marginal inference on the parameter of interest, providing a useful simplification for subsequent methodological development. \xx Finally, we present an illustrative example highlighting how adaptive weighting methods can still lead to overly aggressive borrowing under prior-data conflict.

\subsection{Background} \label{sec:2.1}
Consider a standard two-arm randomized controlled trial (RCT) designed to compare a new treatment against a control. Let \( D = \{y_i\}_{i=1}^{n} \) denote the concurrent control data, where \( y_i \) denotes the outcome for subject \( i \) and follows density $f(\cdot\mid \theta)$,  \( i = 1, \dots, n \).  Suppose that historical or external data
are available only for the control group, denoted by \( D_h = \{y_{h,i}\}_{i=1}^{n_h} \). The goal is to properly leverage this external information to inform clinical trial analysis, particularly in estimating key parameters such as the treatment effect.

\cite{schmidli2014robust} introduced the robust meta-analytic predictive (rMAP) prior, which uses a fixed-weight mixture of priors to address potential conflict between historical and concurrent control data. The rMAP prior is expressed as:
\begin{equation}
    \pi(\theta| w_h) = w_h \pi_h(\theta) + (1 - w_h) \pi_0(\theta),
    \label{eq:mixture prior}
\end{equation}
where \( \pi_h(\theta) \) is an informative prior derived from historical data, \( \pi_0(\theta) \)  is  a vague or weakly informative prior, and \( w_h \) is a fixed weight representing the prior belief in the compatibility of historical and concurrent data. This weight controls the extent to which information is borrowed from historical data.

The mixture prior in (\ref{eq:mixture prior}) spans a continuum of borrowing strategies for the control arm.  At one extreme, complete borrowing (\( w_h = 1 \), i.e., \( \theta \sim \pi_h \)) assumes full compatibility between historical and concurrent data. At the other extreme, no borrowing (\( w_h = 0 \), i.e., \( \theta \sim \pi_0 \)) excludes historical data \( D_h \) completely. Intermediate values of \( w_h \) represent partial borrowing, while alternative formulations may treat \( w_h \) as a random variable \citep{yang2023sam}. The informative prior  \( \pi_h(\theta) \) itself can be flexible, for example representing a posterior distribution from a single historical study or a pooled prior from multiple studies.
\yx More recent adaptive approaches differ mainly in how \( w_h \) is determined from the data. For example, PIP, SAM, and EB-rMAP use different prior-data conflict diagnostics to reduce borrowing when historical and concurrent data appear incompatible.  In particular, SAM uses a user-specified clinically meaningful difference threshold $\delta$, whereas EB-rMAP uses a prior predictive $p$-value (PPV) threshold $\gamma$, both of which control the borrowing weight in response to prior-data conflict.\xx 


\subsection{Weight specification and marginal equivalence}
A central modeling choice in mixture-prior borrowing is how to specify the weight parameter $w_h$, as it governs the extent to which external data $D_h$ contribute to inference based on the concurrent trial data $D$. In many commonly used approaches, $w_h$ is treated as a fixed scalar, either pre-specified by user input or estimated from the data. This formulation is computationally simple and easy to interpret.

However, as noted by \citet{schmidli2014robust}, mis-specification of $w_h$ can inflate type I error and bias inference, particularly in the presence of prior-data conflict. To address uncertainty in the degree of borrowing, an alternative formulation treats $w_h$ as a random variable with prior distribution $\pi(w_h)$ \citep{yang2023sam}. Under this setting, inference for the parameter of interest $\theta$ is based on the marginal posterior 
$$\theta \sim p_M(\theta \mid D,D_h) = \int_{w_h} p(\theta, w_h \mid D,D_h) \, dw_h.$$ 

\yx Although this formulation appears more flexible, under the linear mixture-prior construction considered here, its effect on marginal inference for \(\theta\) is determined only by the prior mean of \(w_h\). The following result formalizes this property. \xx
\vspace{-5pt}

\begin{theorem}[Marginal prior-mean dependency]
Under the linear mixture-prior construction in (\ref{eq:mixture prior}) with conjugate mixture component distributions, the marginal posterior distribution \(p_M(\theta \mid D,D_h)\) depends on the prior distribution \(\pi(w_h)\) only through
$
\overline{w}_h = \int w_h \, \pi(w_h)\,dw_h,$
the prior mean of \(w_h\). Therefore, for marginal inference on \(\theta\), the random-weight formulation induces the same posterior as fixing \(w_h=\overline{w}_h\).
\label{theo:no w prior}
\end{theorem}

\begin{proof}
Given the prior distribution \( \pi(w_h) \), the joint prior distribution of \((\theta, w_h)\) is:
\[
\pi(\theta, w_h \mid D_h) = \left[ w_h \pi_h(\theta) + (1 - w_h) \pi_0(\theta) \right] \pi(w_h) .
\]

Integrating out \( w_h \), the marginal posterior distribution \( p(\theta \mid D, D_h) \) becomes:
\begin{equation*}
\begin{aligned}
p_M(\theta \mid D, D_h) &= \int_{w_h} \pi(\theta, w_h\mid D_h) \prod_{i=1}^{n} f(y_i \mid \theta)  \, dw_h ,\\ &= \overline{w}_h^* \, p_h(\theta \mid D, D_h) + (1 - \overline{w}_h^*) \, p_0(\theta \mid D),
\end{aligned}\label{eq: post.margin}
\end{equation*}
where the weight \( \overline{w}_h^* \) is given by
$\overline{w}_h^* = \frac{\overline{w}_hz_h}{\overline{w}_{h}z_h + (1 - \overline{w}_h) z_0},$
with $z_h=\int \pi_h(\theta) \prod_{i=1}^n f(y_i \mid \theta)\, d\theta$ and $z_0=\int \pi_0(\theta) \prod_{i=1}^n f(y_i \mid \theta)\, d\theta,$ and $$
p_h(\theta \mid D) = \pi_h(\theta) \prod_{i=1}^n f(y_i \mid \theta) / z_h, \quad
p_0(\theta \mid D) =\pi_0(\theta) \prod_{i=1}^n f(y_i \mid \theta) / z_0,
$$
are the posteriors under full borrowing of historical data and no borrowing, respectively.
\end{proof}

\yx This result should be interpreted as a marginal equivalence result for \(\theta\), not as a full equivalence between random-weight and fixed-weight formulations.  At the joint level, the random-weight formulation remains distinct because it induces posterior uncertainty in \(w_h\). For example, the alternative approach considered by \citet{yang2023sam}, which assigns \(w_h\) a noninformative uniform prior, is marginally equivalent to an rMAP prior with fixed weight $w_h=0.5$ for posterior inference on $\theta$. \xx


In light of Theorem \ref{theo:no w prior}, the remainder of this paper focuses on fixed or data-adaptively selected borrowing weights. Given a fixed prior weight $w_h$, the posterior distribution of $\theta$ is 
\begin{equation}
    \theta \sim p(\theta \mid D,D_h) = \wst \, p_h(\theta \mid D, D_h) + (1 - \wst) \, p_0(\theta \mid D),
    \label{eq: post.final}
\end{equation}
where $\wst={w_h z_h}/{[w_h z_h+(1-w_h) z_0]}$ is the posterior mixture weight.
This simplification allows us to concentrate on borrowing decisions and their practical implications within a unified mixture-prior framework.

\subsection{An illustrative toy 
example}  
 \yx We first present a toy example to illustrate how existing mixture prior methods can lead to inappropriate borrowing in the presence of prior-data conflict. \xx
Consider a trial with a binary endpoint, where the concurrent control group has a response rate of $\theta=0.435$, and the treatment group has $\theta_t=0.466$. Historical data, intended to supplement the concurrent control, has a response rate of $\theta_h=0.4$. In this case, a large sample size from the historical data might provide rich information, but could also indicate a lack of overlap with the concurrent control group. 

\begin{figure}[htbp]
  \centering
  \includegraphics[width=0.8\textwidth]{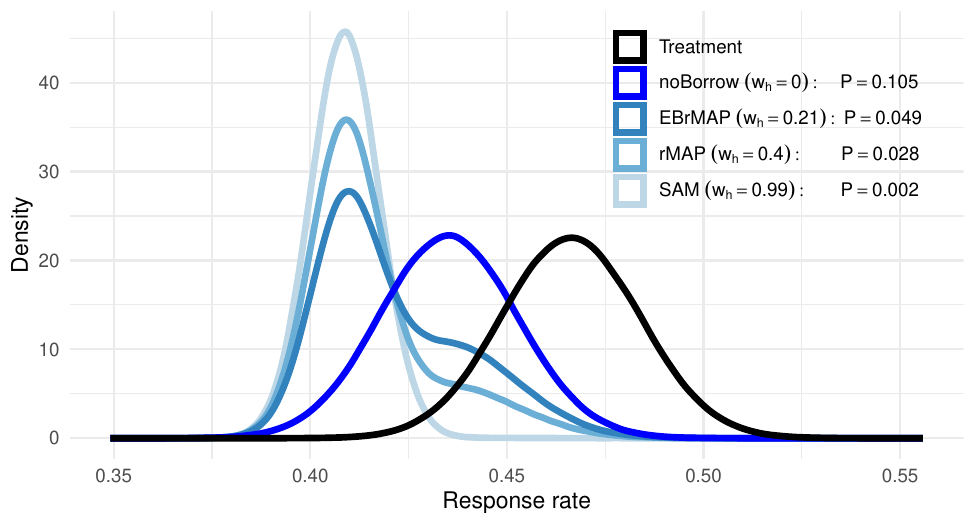}
  \caption{Motivating Example. This figure demonstrates how borrowing from incompatible historical data can artificially inflate significance and result in misleading inference. It compares the posterior distributions of the treatment effect across four borrowing strategies. }
  \label{fig:motivation}
\end{figure}

As shown in Figure~\ref{fig:motivation}, a standard data analysis for group comparison yields a non-significant treatment effect ($p$ = 0.105). However, when applying existing approaches: (1) SAM with $\delta$ = 0.1, (2) EB-rMAP with a PPP threshold $\gamma = 0.9$, and (3) rMAP with a fixed weight $w_h = 0.4$, all three methods incorporate non-negligible weights, either through dynamic weighting or fixed borrowing.  Despite observed incompatibility between historical and concurrent control data, these methods reduce the estimated control response rate through borrowing, thereby artificially inflating the estimated treatment difference. As a result, the originally non-significant finding crosses the conventional significance threshold ($p < 0.05$), raising concerns about biased inference due to inappropriate borrowing.



\section{WAIC-Optimized Weight (WOW) Gating Strategy}  
\label{sec:WOW}

\yx Motivated by the borrowing challenges described above, we propose the WOW gating strategy. WOW is designed to assess the empirical compatibility between historical data $D_h$ and concurrent trial data $D$ before external information is incorporated into the mixture prior, thereby enabling a data-driven pre-borrowing decision. \xx 

\subsection{A two-step gating + borrowing structure}
The WOW gating strategy implements a structured two-step process that separates the \textit{decision to borrow} from the \textit{method of borrowing}. In the first ``gating'' step, WAIC \citep{watanabe2021waic} is used to assess whether incorporating historical data into the prior improves out-of-sample predictive performance for the concurrent control data. Only when the gating criterion is satisfied does the second ``borrowing'' step proceed, in which historical and concurrent data are combined using a Bayesian mixture prior with adaptive weighting.

In contrast to existing borrowing methods that directly assign a borrowing weight, WOW provides a preliminary safeguard against inappropriate use of external information. The downstream borrowing step remains flexible and can accommodate a variety of weighting schemes, including rMAP, PIP, EB-rMAP, SAM, and user-defined alternatives. Importantly, the gating step is independent of the downstream weighting rule, allowing WOW to be integrated into existing mixture-prior borrowing frameworks. Implementing WOW requires formalizing the compatibility assessment and the resulting gating decision rule. In the next subsection, we present the mathematical formulation of the WOW procedure and characterize key properties of the proposed gate.

\subsection{WAIC specification for the gating decision}
WAIC is a Bayesian model selection criterion grounded in the predictive Kullback–Leibler divergence.  
It quantifies out-of-sample predictive accuracy by computing the expected log pointwise predictive density (elppd) across control observations $y_i$ in the concurrent dataset $D$.  Unlike conventional Bayesian model selection tools that rely heavily on regularity assumptions, WAIC is well-suited for dynamic borrowing settings where mixture models can be singular or misspecified \citep{watanabe2021waic}.

Given the posterior distribution \( p(\theta \mid D, D_h) \) in Eq.~(\ref{eq: post.final}), WAIC is specified as:
\begin{equation}
\text{WAIC}(w_h, D, D_h) =
-2 \sum_{i=1}^n \mathbb{E}_{\theta \sim p(\theta \mid D, D_h)} \log f(y_i \mid \theta)
+ 2 \sum_{i=1}^n \mathrm{Var}_{\theta \sim p(\theta \mid D, D_h)} \log f(y_i \mid \theta),
\label{eq: WAIC1}
\end{equation}
where $f(y_i \mid \theta)$ is the density function of $y_i$, and the expectation and variance are computed with respect to the posterior distribution $p(\theta \mid D, D_h)$. By evaluating WAIC as a function of the borrowing weight $w_h$, we determine whether adaptive integration of historical data  $D_h$ improves the model's ability to capture the true data-generating process of $D$.

Evaluating WAIC across all values of \( w_h \) in \([0,1]\) can pose computational challenges. The following result shows that, for the purpose of the WOW gating decision, it is sufficient to compare the two boundary cases corresponding to no borrowing and full borrowing.
\vspace{-3pt}

\begin{theorem}[Minimization of WAIC]
\yx For independent datasets \(D\) and \(D_h\), \(\text{WAIC}\) in Eq.~(\ref{eq: WAIC1}) is a concave quadratic function of the posterior mixture weight \(w_h^*\). Since \(w_h^*\) is a monotone transformation of the prior borrowing weight \(w_h\) that maps \([0,1]\) onto itself, the WAIC-based gating decision reduces to comparing the boundary cases \(w_h=0\) and \(w_h=1\).\xx \label{prop: 1}
\end{theorem}

The proof is provided in Supplementary Materials Section~A. \yx This result implies that the WOW gate can be constructed by comparing the no-borrowing model, $w_h=0$, with the full-borrowing model,  $w_h=1$.  \xx
Given $D_h$, we define the borrowing exclusion region  $\Omega_0$ for $D$ as:
\begin{equation}
    \Omega_0=\{ D \in \Omega: \text{WAIC}(w_h=0, D, D_h) < \text{WAIC}(w_h=1, D, D_h)\}, \label{borrowing exclusion region}
\end{equation} 
which is a compact subset of the sample space $\Omega$. \yx When $D \in \Omega_0$, the no-borrowing model is preferred because borrowing does not improve predictive performance. When $D \in \Omega \setminus \Omega_0$, borrowing is considered admissible. This does not imply that full pooling is appropriate; rather, a downstream mixture-prior method is still used to determine the amount of borrowing. 

Algorithm~\ref{alg:WOW} summarizes the proposed WOW-assisted gating-and-borrowing procedure. When the WOW gate supports borrowing, the prespecified downstream borrowing rule is applied with its corresponding weight; otherwise, the borrowing weight is set to zero. Thus, WOW can be viewed as a truncated borrowing rule that retains the downstream borrowing weight when the gate supports compatibility and truncates it to zero otherwise.
\begin{algorithm}[!h]
\caption{\ WOW-assisted two-step gating and borrowing procedure}
\label{alg:WOW} \vspace{12pt}
\begin{algorithmic}[1]\yx
\Require historical data $D_h$, concurrent control data $D$, prior $\pi_0(\theta)$ for mixture prior $\pi(\theta; w) = (1-w)\pi_0(\theta) + w\,\pi_h(\theta \mid D_h)$ with $w\in [0,1]$, 
and a prespecified downstream borrowing rule $\mathcal{A}$ (e.g., SAM) that determines the borrowing weight when the WOW gate is open. 

\hspace{-46pt} \textbf{Step 1:} WAIC-Optimized Weight (WOW) gating

\hspace{-46pt} Compute WAIC using Eq.~(\ref{eq: WAIC1}) with posterior distributions under no borrowing and full borrowing. Construct borrowing exclusion region $\Omega_0$ using Eq.~(\ref{borrowing exclusion region}).

\hspace{-46pt} \textbf{if} {$D \in \Omega_0$}\  \textbf{then}

\hspace{-26pt} Gate \textit{closed}: {Stop} borrowing with $ w_h = \wst = 0 $; posterior 
$p^{\star}(\theta \mid D, D_h) \leftarrow p_0(\theta \mid D)$

\hspace{-46pt} \textbf{else}\quad {Gate remains \textit{open}: continue to \textbf{Step 2} for borrowing}


\hspace{-46pt} \textbf{Step 2:} external borrowing

\hspace{-46pt} Apply the user-prespecified downstream borrowing rule $\mathcal{A}$ to determine $w_h$, and obtain
$p^{\star}(\theta \mid D, D_h) \leftarrow p(\theta \mid D, D_h)$ using Eq.~(\ref{eq: post.final})


\hspace{-46pt} \Return $p^{\star}(\theta \mid D, D_h)$
\end{algorithmic}
\end{algorithm}
\xx

The WOW gate can be applied prospectively during trial design or retrospectively after data collection. Prospectively, WAIC can be evaluated over possible realizations of \( D \) within the sample space $\Omega$. This enables identification of the borrowing exclusion region, where 
WAIC yields its minimum at $w_h=0$, indicating that borrowing is not supported. Retrospectively, once $D$ is observed, WAIC can be computed directly to determine whether borrowing is warranted for the specific dataset.

\section{Gating Strategy for Binary and Continuous Endpoints}
\label{sec:wowbinary}

In this Section, we demonstrate the WOW gating strategy for binary endpoints, where both concurrent data $D$ and historical data $D_h$ follow binomial distributions with conjugate beta mixture priors. 
The continuous endpoint case is discussed in Supplementary Materials Section F. 

Consider a binary outcome, \( y_i \sim \text{Bernoulli}(\theta) \), where \( \theta \) is the probability of success. Let \( x = \sum_{i=1}^n y_i \) denote the number of successes observed among $n$ individuals in the concurrent control group. Suppose that for the historical data \( D_h \), there are \( x_h \) successes out of \( n_h \) individuals. A commonly-used informative prior for \( \theta \) derived from $D_h$, assuming full borrowing, is:  $\pi_h(\theta\mid D_h) \propto \pi_0(\theta)p(D_h\mid \theta) =  \text{Beta}(a + x_h, b + n_h - x_h)$, where $\pi_0(\theta) = \text{Beta}(a, b)$ is a non-informative or vague prior. \yx Therefore, in Eq.~(\ref{eq: post.final}) we have
$p_h(\theta \mid D,D_h)=\mathrm{Beta}(a+x+x_h,\; b+n+n_h-x-x_h)$ and
$p_0(\theta \mid D)=\mathrm{Beta}(a+x,\; b+n-x)$, 
with
$z_0=\dfrac{B(a+x,\; n-x+b)}{B(a,b)}$
and
$z_h=\dfrac{B(a+x_h+x,\; b+n_h+n-x_h-x)}{B(a+x_h,\; b+n_h-x_h)}$,
which are used to compute the posterior mixture weight $\wst$.\xx

Given the historical data $(n_h,x_h)$ and a fixed control sample size $n$, the WAIC for binary outcomes can be expressed as a quadratic function of $\wst$:
\begin{equation}
\mathrm{WAIC}_B(\wst,D,D_h)
=
-I_1\,\wst^2+I_2\,\wst+I_3,
\label{eq:WAIC_binary}
\end{equation}
where \yx $I_1=2\Bigl\{(n-x)\Delta_{1-\theta}^{2}
+
x\Delta_{\theta}^{2}\Bigr\}$, $I_2=2(n-x)\Bigl\{V_{1-\theta}+\Delta_{1-\theta}^{2}-\Delta_{1-\theta}\Bigr\}
+
2x\Bigl\{V_{\theta}+\Delta_{\theta}^{2}-\Delta_{\theta}\Bigr\}$, and $I_3$ is constant with respect to $\wst$. Here, $\Delta_{\theta}$ and $\Delta_{1-\theta}$ denote the differences in posterior expectations of $\log\theta$ and $\log(1-\theta)$, respectively, between the full-borrowing and no-borrowing posteriors; $V_{\theta}$ and $V_{1-\theta}$ denote the corresponding differences in posterior variances. Detailed expressions are provided in Supplementary Materials Section~B. This quadratic form leads to the following structural property of the borrowing region. \xx



\begin{theorem}[Existence of a Single Connected Borrowing Region]
Given historical data $(n_h, x_h)$ and a fixed concurrent control sample size $n$,  there exists a single connected region \( G = [x_L^*, x_U^*] \subseteq [0, n] \) such that borrowing from the historical data is beneficial if and only if \( x \in G = \Omega \setminus \Omega_0\), where \( x \) is the number of observed successes in the concurrent control.
\end{theorem}
The proof is provided in Supplementary Materials Section C.

This result provides the foundation for a simple and practical gating rule: the borrowing region is always a single connected interval rather than a disjoint subset of possible outcomes. \xx Given  \( n_h \), \( x_h \), and \( n \), the thresholds \( x_L^* \) and \( x_U^* \) defining the borrowing region $G=\Omega \setminus \Omega_0$ can be computed numerically before trial implementation. Specifically, this is done by evaluating $\mathrm{WAIC}_B(0, D, D_h)$ and $\mathrm{WAIC}_B(1, D, D_h)$ across all possible values of $x$, and identifying the smallest and largest $x$ for which borrowing improves predictive performance. 

\yx  Algorithm S1 in Supplementary Materials Section~D summarizes the resulting binary-endpoint implementation of the two-step WOW procedure. In practice, the Step~1 gating thresholds defining $G$ can be pre-tabulated in advance. Figure~\ref{fig:SAM and EB binary} illustrates $[x_L^*, x_U^*]$ across varying historical sample sizes $n_h$ for SAM and EB-rMAP priors. Importantly, the proposed Step 1 gating region is invariant to the Step 2 borrowing rule, whether fixed or adaptive. This allows WOW to serve as a general pre-borrowing safeguard for existing mixture-prior methods. As shown in Figure~\ref{fig:SAM and EB binary}, EB-rMAP can exhibit asymmetric borrowing behavior, while SAM does not directly account for the historical sample size \(n_h\) in its weighting rule, making its borrowing behavior sensitive to the user-specified threshold \(\delta\). In contrast, the WOW borrowing region narrows as  \(n_h\) increases, reflecting greater precision in the historical data and a stricter evidence-based compatibility requirement. By incorporating both empirical compatibility and historical-data precision before applying any downstream mixture-prior rule, WOW improves the robustness and interpretability of external-data borrowing. Further discussion is provided in Supplementary Materials Section~E.  \xx
\begin{figure}[htbp]
  \centering
  \hspace{-0.05\textwidth}  
  \includegraphics[width=1.05\textwidth]{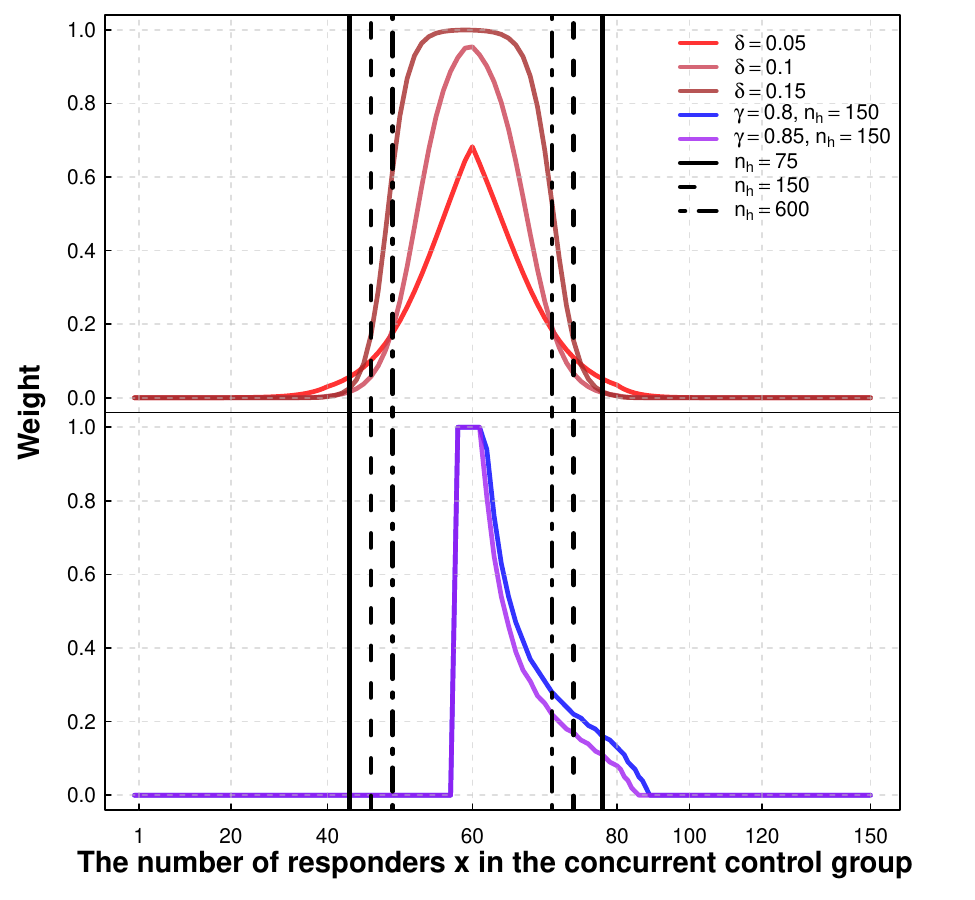}
  \caption{Comparison of borrowing regions and weight behavior across different borrowing strategies. Upper panel: SAM prior with varying $\delta$ values; lower panel: EB-rMAP prior with varying PPP threshold $\gamma$. Vertical black lines indicate the borrowing region determined by the WOW gating strategy. }
  \label{fig:SAM and EB binary}
\end{figure}

\yx 
\section{Simulation Study}
\label{sec:simu}

We conducted simulation studies to evaluate the WOW gating strategy for fixed weight (Mix50 with $w_h = 0.5$) and adaptive weight (PIP, SAM, and EB-rMAP) mixture priors by comparing each method with its WOW-gated counterpart. No borrowing (NP) and test-then-pool (TTP) were included as benchmarks. The remainder of this Section focuses on binary endpoints; parallel continuous endpoint simulations, reported in Supplementary Materials Section H, yielded similar conclusions: WOW reduced bias and MSE under prior-data conflict while preserving comparable operational characteristics performance under compatibility. \xx

\subsection{Simulation setup}

We generated historical control data as $D_h \sim \mathrm{Binomial}(n_h,\theta_h)$ and concurrent control data as $D \sim \mathrm{Binomial}(n,\theta)$, where $\theta$ was systematically varied to induce different levels of prior-data conflict. In all scenarios, the concurrent control arm sample size was fixed at $n=150$, with a 2:1 randomization ratio yielding $n_t=300$ for the treatment arm. 

\yx 
To assess borrowing performance for estimating $\theta$, we fixed $\theta_h=0.3$ and considered historical sample sizes $n_h=150,600$, and $1500$. The concurrent control rate $\theta$ varied from $0.1$ to $0.5$, with 2000 simulation replicates at each setting. Methods were evaluated using relative bias, ratio mean squared error (MSE), and absolute bias. To complement these point-estimation metrics, we also evaluated interval-estimation performance using coverage probability and interval score. We further considered a stochastic historical-data setting at  \(n_h=1500\), in which \(D_h\) was generated from its sampling distribution in each replicate to reflect realistic historical-data collection. This contrasts with fixed historical-data settings, which isolate the effect of the borrowing strategy from random historical-sampling variability. 

For trial operating characteristics, we considered three calibrated-power scenarios at $n_h=600$: fixed historical data with $\theta_h=0.30$, fixed historical data with $\theta_h=0.40$, and stochastic historical data generated under $\theta_h=0.40$. 
Treatment efficacy was declared when
$\Pr(\theta_t-\theta>0\mid D,D_t,D_h)>C$, where $C$ was calibrated separately for each method to control the type I error rate at 5\% under the corresponding null setting. 
Implementation details, including hyperparameters, tuning parameters for all comparator methods, definitions of performance metrics, and additional simulation settings, are provided in Supplementary Materials Section~G.

\subsection{Simulation results}

Figure~\ref{fig:RelativeBias} reports relative bias in estimating $\theta$ for SAM, EB-rMAP, Mix50, and PIP, together with their WOW-gated versions, over $\theta\in[0.1,0.5]$ with $\theta_h=0.3$. Results are shown for fixed historical sample sizes and for an additional stochastic historical-data scenario; TTP is included as a benchmark. Non-gated methods generally exhibit larger relative bias as $\theta$ departs from $\theta_h$, particularly in moderate-conflict regions where borrowing remains non-negligible despite increasing incompatibility. SAM and Mix50 are nearly unbiased at $\theta=\theta_h$, but their bias increases as $\theta$ moves away from $\theta_h$, reaches a peak, and then decreases. PIP shows a similar pattern near $\theta_h$ but often has larger bias under conflict, whereas EB-rMAP exhibits a comparable but more asymmetric pattern due to its PPP-based weight adjustment. The benefit of WOW gating becomes more pronounced as $n_h$ increases, because larger historical samples make the informative prior more influential and amplify bias when borrowing is inappropriate. By restricting borrowing when concurrent and historical controls are incompatible, WOW-gated methods substantially reduce bias and generally outperform TTP across most of the displayed range. The stochastic historical-data scenario closely follows the fixed $n_h=1500$ setting, with WOW-gated methods continuing to reduce bias relative to their non-gated counterparts.

\begin{figure}[htbp]
\centering
\includegraphics[width=1\textwidth]{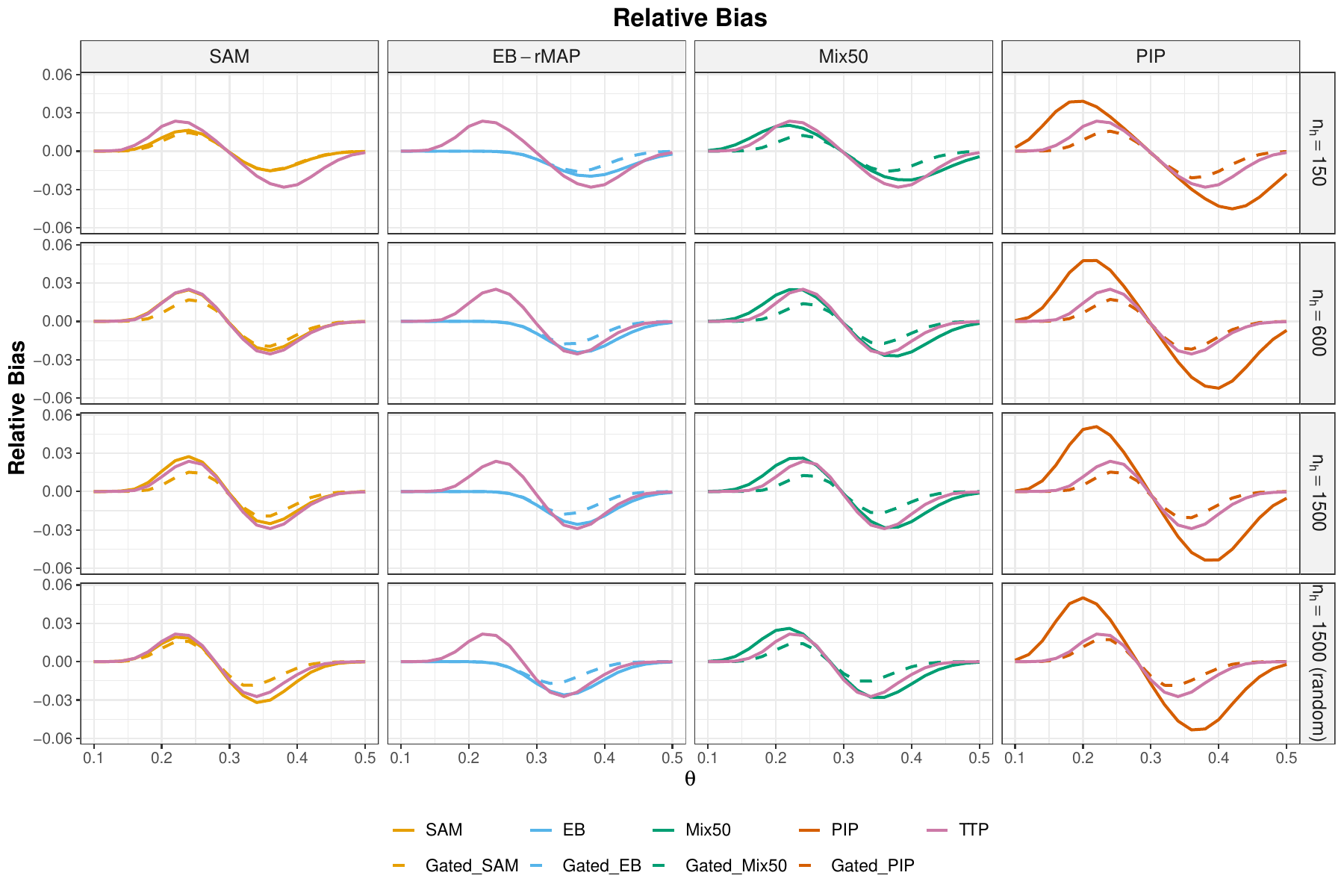}
  \caption{Relative bias in estimating the concurrent control response rate $\theta$ for the binary endpoint, comparing the original and WOW-gated versions of SAM, EB-rMAP, Mix50, and PIP across different historical sample sizes and an additional stochastic historical-data scenario. TTP is included as a benchmark.
}
  \label{fig:RelativeBias}
\end{figure}

Figure~\ref{fig:RatioMSE} reports the corresponding ratio MSE. 
The findings are consistent with the bias results: WOW-gated methods reduce MSE under prior--data conflict while maintaining comparable performance when $\theta$ is close to $\theta_h$. The improvement is most apparent for larger $n_h$, where non-gated methods have greater potential to overweight incompatible historical information. Additional results for estimating \(\theta\), including absolute bias, coverage probability, and interval score, are reported in Figure~S1 and Tables~S1--S2; additional results for the treatment effect \(\theta_t-\theta\), including bias and coverage, are reported in Tables~S3--S4. These results support the same conclusion.

\begin{figure}[htbp]
  \centering
  \includegraphics[width=1\textwidth]{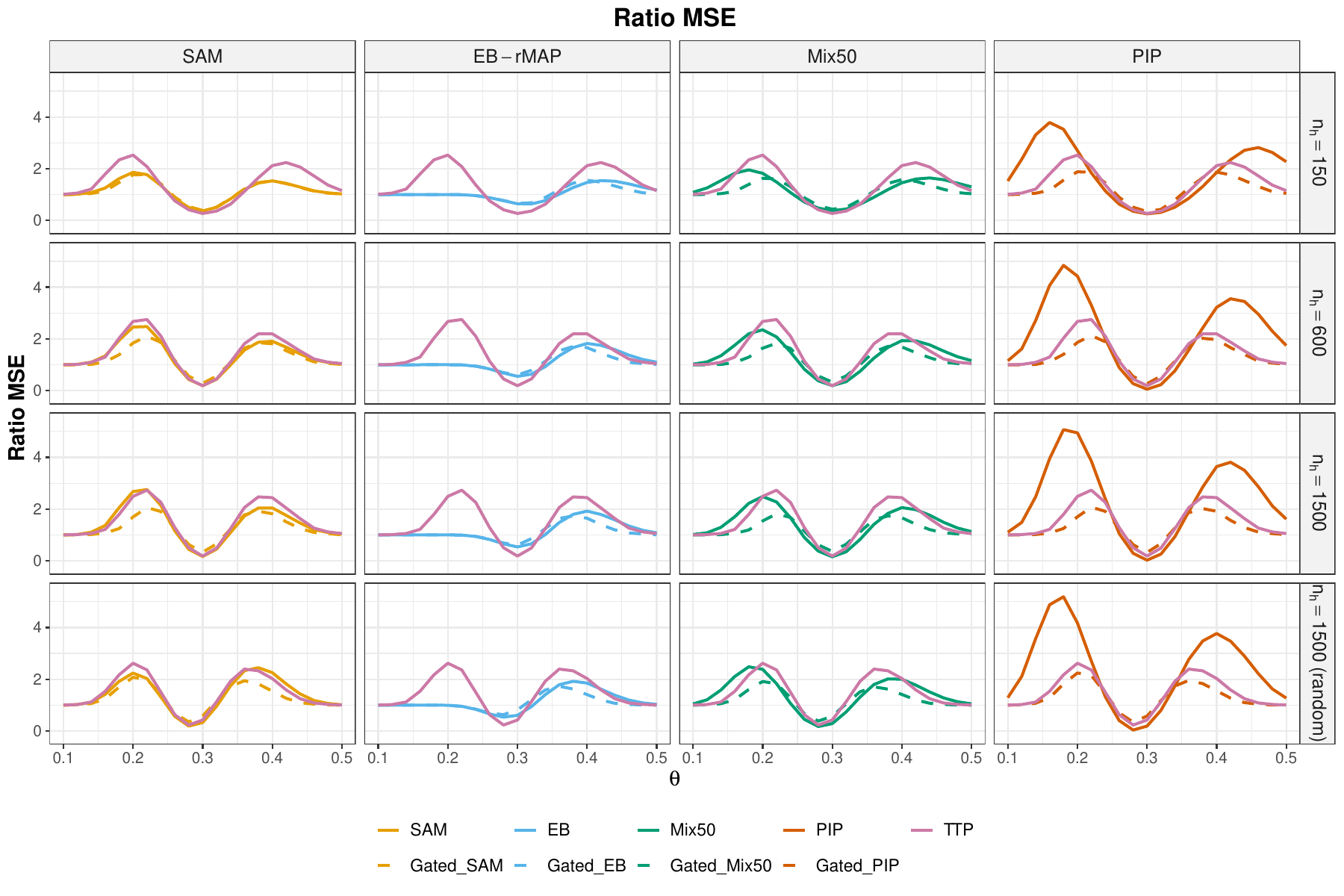}
  \caption{Ratio MSE in estimating the concurrent control response rate $\theta$ for the binary endpoint, comparing the original and WOW-gated versions of SAM, EB-rMAP, Mix50, and PIP across different historical sample sizes and an additional stochastic historical-data scenario. TTP is included as a benchmark. 
}
  \label{fig:RatioMSE}
\end{figure}

We next examined whether the estimation gains from WOW translate into improved trial-level operating characteristics across three scenarios (Table~\ref{tab:main:bin:calibrated-power-revised}). In Case 1, when $\theta=0.16$, the concurrent control rate is well below $\theta_h=0.3$, so substantial borrowing is inappropriate. Borrowing pulls the posterior control rate toward the historical value, leading to underestimation of the treatment effect. In this setting, WOW-gated methods improve over their non-gated counterparts, with Gated SAM outperforming SAM and larger gains observed for Gated Mix50 and Gated PIP relative to Mix50 and PIP. Similar improvements appear at $\theta=0.20$, and the gated methods also outperform TTP at $\theta=0.20$ and $0.22$, where TTP can still lead to full pooling under moderate conflict and consequently lower power. When $\theta=\theta_h=0.3$, the gated methods achieve power similar to their original versions. Notably, Gated SAM, Gated EB-rMAP, and Gated PIP still outperform NP, showing that WOW preserves efficiency when borrowing is appropriate while adding protection under conflict.

\begin{table}[htbp]
\centering
\caption{Calibrated power results for binary endpoints using a non-informative prior (NP) and four borrowing methods (SAM, EB-rMAP, Mix50, and PIP), their WOW-gated versions, and TTP as a benchmark. Boldface indicates the better-performing method within each original/WOW-gated pair for a given scenario. }

\label{tab:main:bin:calibrated-power-revised}
\resizebox{\textwidth}{!}{%
\begin{tabular}{ccccccccccccc}
\toprule
Scenario & $\theta$ & $\theta_t$ & NP & SAM & Gated SAM & EB-rMAP & Gated EB-rMAP & Mix50 & Gated Mix50 & PIP & Gated PIP & TTP \\
\midrule
\multicolumn{13}{c}{\textbf{Case 1: $\theta_h=0.3$ with fixed $\bD_h$}} \\
\addlinespace[0.5ex]
1.1 & 0.16 & 0.26 & 0.797 & 0.780 & \textbf{0.817} & 0.771 & \textbf{0.805} & 0.659 & \textbf{0.801} & 0.466 & \textbf{0.797} & 0.797 \\
1.2 & 0.18 & 0.28 & 0.771 & 0.707 & \textbf{0.769} & \textbf{0.773} & 0.771 & 0.545 & \textbf{0.765} & 0.357 & \textbf{0.769} & 0.747 \\
1.3 & 0.20 & 0.30 & 0.736 & 0.590 & \textbf{0.722} & 0.733 & \textbf{0.739} & 0.487 & \textbf{0.702} & 0.401 & \textbf{0.699} & 0.643 \\
1.4 & 0.22 & 0.32 & 0.715 & 0.533 & \textbf{0.638} & \textbf{0.712} & 0.710 & 0.608 & \textbf{0.653} & \textbf{0.773} & 0.638 & 0.555 \\
1.5 & 0.30 & 0.40 & 0.676 & \textbf{0.928} & 0.919 & \textbf{0.874} & 0.872 & \textbf{0.918} & 0.892 & \textbf{0.959} & 0.931 & 0.945 \\
1.6 & 0.34 & 0.44 & 0.654 & \textbf{0.724} & 0.722 & \textbf{0.689} & 0.684 & \textbf{0.721} & 0.714 & \textbf{0.891} & 0.811 & 0.848 \\
1.7 & 0.44 & 0.54 & 0.649 & 0.512 & \textbf{0.530} & 0.536 & \textbf{0.551} & 0.530 & \textbf{0.550} & 0.153 & \textbf{0.249} & 0.102 \\
\specialrule{\heavyrulewidth}{1pt}{1pt}
\multicolumn{13}{c}{\textbf{Case 2: $\theta_h=0.4$ with fixed $\bD_h$}} \\
\addlinespace[0.5ex]
2.1 & 0.24 & 0.34 & 0.694 & 0.690 & \textbf{0.702} & 0.697 & \textbf{0.702} & 0.599 & \textbf{0.709} & 0.444 & \textbf{0.695} & 0.696 \\
2.2 & 0.26 & 0.36 & 0.677 & 0.670 & \textbf{0.685} & 0.662 & \textbf{0.664} & 0.494 & \textbf{0.694} & 0.324 & \textbf{0.663} & 0.660 \\
2.3 & 0.30 & 0.40 & 0.676 & 0.544 & \textbf{0.630} & 0.666 & 0.666 & 0.452 & \textbf{0.630} & 0.409 & \textbf{0.628} & 0.576 \\
2.4 & 0.32 & 0.42 & 0.671 & 0.515 & \textbf{0.585} & \textbf{0.676} & 0.673 & \textbf{0.589} & 0.577 & \textbf{0.785} & 0.569 & 0.496 \\
2.5 & 0.40 & 0.50 & 0.654 & \textbf{0.910} & 0.901 & \textbf{0.848} & 0.847 & \textbf{0.900} & 0.879 & \textbf{0.945} & 0.917 & 0.934 \\
2.6 & 0.46 & 0.56 & 0.655 & \textbf{0.571} & 0.568 & \textbf{0.538} & 0.533 & \textbf{0.626} & 0.591 & \textbf{0.792} & 0.629 & 0.743 \\
2.7 & 0.54 & 0.64 & 0.663 & 0.523 & \textbf{0.566} & 0.571 & \textbf{0.589} & 0.515 & \textbf{0.583} & 0.182 & \textbf{0.297} & 0.118 \\
\specialrule{\heavyrulewidth}{1pt}{1pt}
\multicolumn{13}{c}{\textbf{Case 3: $\theta_h=0.4$ with random $\bD_h$}} \\
\addlinespace[0.5ex]
3.1 & 0.24 & 0.34 & 0.710 & 0.706 & \textbf{0.709} & 0.711 & 0.711 & 0.616 & \textbf{0.709} & 0.468 & \textbf{0.711} & 0.709 \\
3.2 & 0.26 & 0.36 & 0.695 & 0.680 & \textbf{0.693} & \textbf{0.697} & 0.696 & 0.523 & \textbf{0.696} & 0.351 & \textbf{0.696} & 0.689 \\
3.3 & 0.30 & 0.40 & 0.667 & 0.548 & \textbf{0.632} & \textbf{0.672} & 0.667 & 0.457 & \textbf{0.632} & 0.420 & \textbf{0.630} & 0.575 \\
3.4 & 0.32 & 0.42 & 0.664 & 0.515 & \textbf{0.584} & \textbf{0.674} & 0.673 & \textbf{0.602} & 0.592 & \textbf{0.783} & 0.584 & 0.500 \\
3.5 & 0.40 & 0.50 & 0.662 & \textbf{0.921} & 0.911 & \textbf{0.854} & 0.853 & \textbf{0.909} & 0.887 & \textbf{0.955} & 0.925 & 0.944 \\
3.6 & 0.46 & 0.56 & 0.650 & \textbf{0.574} & 0.571 & \textbf{0.542} & 0.534 & \textbf{0.627} & 0.596 & \textbf{0.791} & 0.643 & 0.745 \\
3.7 & 0.54 & 0.64 & 0.675 & 0.500 & \textbf{0.537} & 0.567 & \textbf{0.600} & 0.513 & \textbf{0.574} & 0.167 & \textbf{0.317} & 0.117 \\
\bottomrule
\end{tabular}%
}
\vspace{0.5ex}
\end{table}

When $\theta=0.34$ or $0.44$, the concurrent control rate exceeds the historical rate, so borrowing pulls the control estimate downward and can inflate the estimated treatment effect. Accordingly, WOW-gated methods yield slightly lower power than their non-gated counterparts, reflecting reduced reliance on incompatible historical information. This interpretation is supported by Figure~\ref{fig:RelativeBias}, where non-gated methods show considerable bias in these scenarios. By limiting borrowing when incompatibility is detected, WOW provides a safeguard against misleading inference and supports more credible integration of external evidence.

Cases~2 and~3 show similar patterns with $\theta_h=0.4$. Under prior--data conflict, WOW-gated methods mitigate the adverse impact of incompatible borrowing by preventing external information from exaggerating treatment evidence. When the concurrent and historical controls are compatible, WOW-gated methods maintain power comparable to their non-gated counterparts. The stochastic historical-data setting in Case~3 shows a similar pattern to Case~2, suggesting that the observed gains are not driven by fixing $D_h$. 

Additional operating characteristics under the common cutoff  $C=0.95$ are reported in Tables~S5--S6. These results support the calibrated-power findings: WOW-gated methods reduce the impact of incompatible borrowing on power and type I error, with especially clear improvements for more aggressive borrowing rules such as PIP.

\xx

\yx
\section{Real-Data Example}
\label{sec:real_data}

We considered treatment response as the primary binary endpoint in an RCT, with external placebo data as a potential source of borrowing. The example was adapted from the RBesT dataset \citep{RBEST}, which contains placebo response rates from studies of ankylosing spondylitis. For illustration, Study~6 was treated as the concurrent control dataset, with $n=20$ and $x=6$ responders, whereas Study~7 was treated as the external control dataset, with $n_h=78$ and $x_h=9$ responders. The corresponding observed response rates were 0.300 for the concurrent control and 0.115 for the historical source, indicating substantial discordance.

\begin{figure}[htbp]
\centering
\begin{minipage}[t]{0.48\textwidth}
    \centering
    \includegraphics[width=\linewidth]{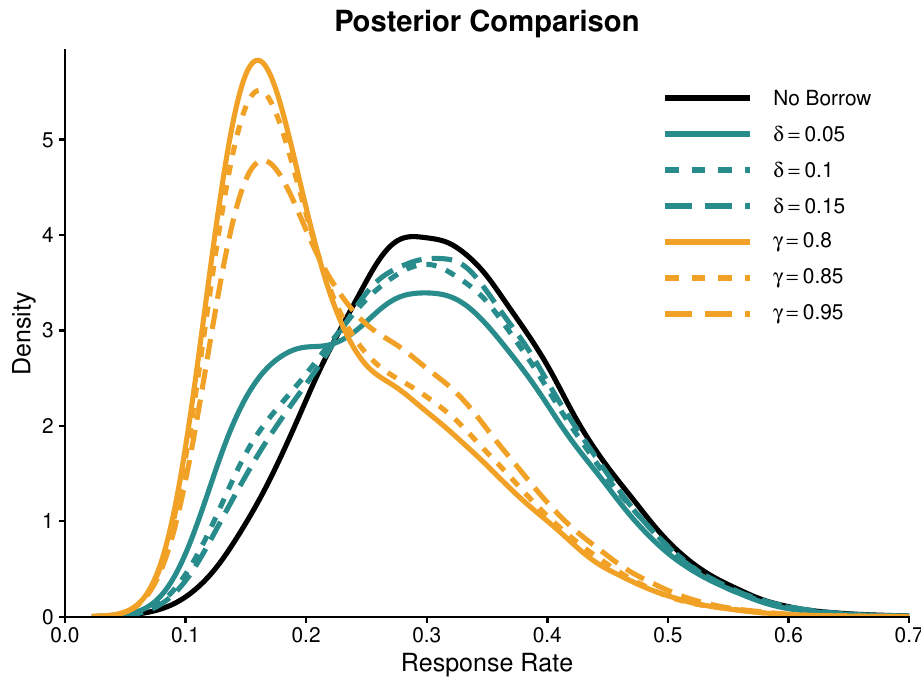}
    \\[2pt]
    {\footnotesize \textbf{(a)} Posterior comparison under no borrowing, SAM, and EB-rMAP.}
\end{minipage}\hfill
\begin{minipage}[t]{0.48\textwidth}
    \centering
    \includegraphics[width=\linewidth]{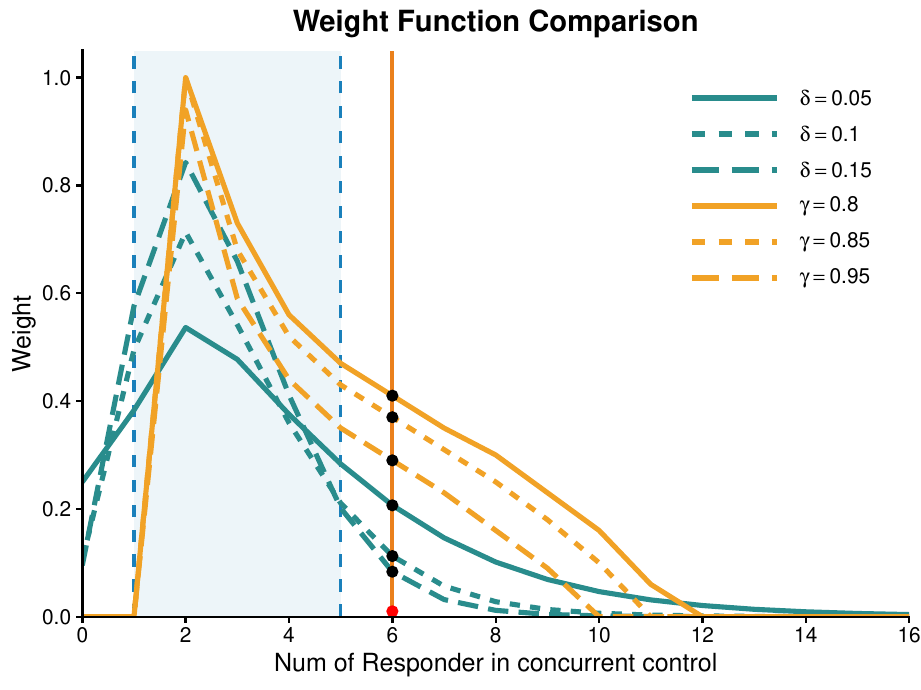}
    \\[2pt]
    {\footnotesize \textbf{(b)} Borrowing weights as a function of the concurrent control response count for varying SAM PPR cutoffs $\delta$ and EB-rMAP PPP thresholds $\gamma$.}
\end{minipage}

\caption{Real-data illustration using the RBesT dataset. 
(a) Posterior comparison under no borrowing, SAM, and EB-rMAP. 
(b) Borrowing weights as a function of the concurrent control response count for representative SAM and EB-rMAP settings.}
\label{fig:realdata_rbest}
\end{figure}

Figure~\ref{fig:realdata_rbest}(a) compares the posterior distributions of the concurrent control response rate under NP, SAM, and EB-rMAP. Because the observed historical response rate was much lower than the concurrent control response rate, borrowing shifted the posterior distribution toward the historical source and away from the no-borrowing posterior. This shift was particularly pronounced for EB-rMAP, indicating that adaptive weighting alone may still allow substantial borrowing from a discordant external control. Panel~(b) shows the borrowing weights as a function of the concurrent control response count $x$ under representative SAM and EB-rMAP hyperparameter settings. The vertical line marks the observed count $x=6$. Across these settings, both SAM and EB-rMAP assigned non-negligible borrowing weights at the observed value. In contrast, WOW rejected borrowing because the observed concurrent control count lay outside the WOW-supported borrowing region; consequently, the gated posterior coincided with the NP posterior. This example illustrates how WOW can prevent borrowing from an apparently incompatible historical source and provide a transparent safeguard against inappropriate external-data integration.

\xx

\section{Discussion}
\label{sec:conc}
This paper introduces WOW, a data-driven gating strategy that strengthens the robustness of external data borrowing using mixture priors in clinical trials. By leveraging WAIC-based model comparison, WOW serves as a preliminary assessment step to evaluate the compatibility between historical and concurrent trial data, \yx determining whether borrowing should proceed before applying any downstream borrowing rule.
This two-step structure emphasizes the modularity of WOW: the gating step is independent of the downstream borrowing method and can be integrated with fixed or adaptive weighting approaches. A practical advantage of the WAIC-based gate is that it avoids direct computation of marginal likelihoods or Bayes factors. This feature may be particularly useful in more complex nonconjugate settings, including many survival models, where closed-form marginal likelihoods are typically unavailable except in special cases such as simple exponential models. \xx Across simulations, WOW mitigates bias and error inflation in the presence of prior-data conflict while preserving comparable performance when borrowing is supported. 

In practical applications, WOW offers a transparent and reproducible approach that complements the growing demand for rigorous external evidence integration. Regulatory agencies, including the FDA, emphasize that borrowing should be justified by sound methodology with explicit attention to assumptions about data comparability. As described in FDA guidance, study-level exchangeability is critical for leveraging prior data while accounting for potential differences \citep{us2021considerations}. 
\yx  WOW addresses this need by providing a data-driven mechanism for assessing borrowing eligibility based on empirical data congruence. The RBesT binary responder example further shows that WOW can prevent borrowing from a clearly discordant historical source.
\xx  Although WOW may lead to modest efficiency loss when external and trial data are highly compatible, it improves robustness and protects against error inflation when they are not, aligning with regulatory expectations for reliable methodology. Moreover, WOW relies on aggregate-level statistics rather than patient-level data, simplifying implementation while supporting privacy protection and regulatory feasibility. This feature is particularly relevant in real-world evidence settings, where data alignment and justification are critical for regulatory acceptance.

\yx 
A limitation of the current work is that the theoretical development is presented under a simple i.i.d.\ setting without covariate adjustment; therefore, the compatibility assessment is conducted at the marginal rather than conditional level. Accordingly, the WAIC-based gate should be interpreted as an empirical compatibility check rather than as evidence of exchangeability. It does not eliminate potential bias from unmeasured confounding or replace clinical and regulatory justification for using external data. Extending WOW to covariate-adjusted settings is therefore an interesting direction for future work. For example, incorporating propensity score weighting, matching, or outcome-regression approaches could help account for covariate shift between historical and concurrent populations \citep{fu2023covariate} and broaden the practical applicability of the gate-then-borrow framework. Future work could also extend WOW to time-to-event endpoints, platform or multi-stage trial settings involving longitudinal or stage-wise borrowing, and other borrowing frameworks such as power priors and commensurate priors.
\xx

\section*{Acknowledgements}
Shouhao Zhou is supported in part by NIH Grant U24MD020517, Pennsylvania Department of Health TSF CURE Program, and Four Diamonds Faculty Research Award. 
Yanxun Xu is supported in part by National Institute of Health grants R01MH128085 and R01AI197147, and National Science Foundation grant  DMS-2610267.

\bibliographystyle{apalike}
\bibliography{reference.bib}

\label{lastpage}

\clearpage

\begingroup
\counterwithout{equation}{section}
\setcounter{section}{0}
\setcounter{equation}{0}
\setcounter{figure}{0}
\setcounter{table}{0}
\setcounter{algorithm}{0}

\renewcommand{\thesection}{\Alph{section}}
\renewcommand{\theequation}{S\arabic{equation}}
\renewcommand{\thefigure}{S\arabic{figure}}
\renewcommand{\thetable}{S\arabic{table}}
\renewcommand{\thealgorithm}{S\arabic{algorithm}}
\renewcommand{\figurename}{Figure}
\renewcommand{\tablename}{Table}

\begin{center}
  \Large\bfseries Supplementary Materials for ``WAIC-Optimized Weight Gating for Mixture Priors in External Data Borrowing''\\[0.75ex]
  \normalsize Shouhao Zhou, Qiuxin Gao, Chenqi Fu, and Yanxun Xu
\end{center}

\section{Proof of Theorem 2}
\label{sec:suppA}

\yx
\setcounter{theorem}{1}

\begin{theorem}[Minimization of WAIC]
For independent datasets $D$ and $D_h$, the  \(\text{WAIC}\) for a mixture model with borrowing weight $w_h$ is a quadratic concave function of the posterior mixture weight $w_h^*$. Since $w_h^*$ is a monotonic function of the prior borrowing weight $w_h$ mapping the interval $[0,1]$ onto itself, the WAIC achieves its minimum at the boundaries $w_h=0$ or $w_h=1$. 
\label{thm:waic-min}

\end{theorem}
\xx

\begin{proof}
With $f(y_i\mid\theta)$ denoting the log-likelihood contribution of observation \(y_i\), the WAIC \citep{watanabe2009algebraic} is defined as:

\begin{equation}
\label{eq:WAIC_org}
    \mathrm{WAIC}(w_h, D,D_h)
  = -2\sum_{i=1}^{n}\mathbb{E}_{p(\theta\mid D,D_h)}\!\bigl[f(y_i\mid\theta)\bigr]
    +2\sum_{i=1}^{n}\mathrm{Var}_{p(\theta\mid D,D_h)}\!\bigl[f(y_i\mid\theta)\bigr].
\end{equation}

The posterior distribution is a mixture of the complete borrowing posterior \( p_{h}(\theta \mid D, D_h) \) and the non-borrowing posterior \( p_{0}(\theta \mid D) \), with weight \( w_h^* \):
\[
p(\theta\mid D,D_h)=w_h^{*}\,p_{h}(\theta\mid D,D_h)+(1-w_h^{*})\,p_{0}(\theta\mid D).
\]

Substituting this mixture into the WAIC formula and expanding, we obtain:


\begin{equation}
\label{eq:WAIC_quad_long}
\begin{aligned}
\mathrm{WAIC}(w_h,D,D_h)
&= -2\sum_{i=1}^{n}\Bigl[\mathbb{E}_{p_{h}}\{f(y_i\mid\theta)\}
                         -\mathbb{E}_{p_{0}}\{f(y_i\mid\theta)\}\Bigr]^{2}(w_h^{*})^{2}\\
&\quad+2\sum_{i=1}^{n}\left[\mathbb{E}_{p_{h}}\{f^{2}(y_i\mid\theta)\}
      -\mathbb{E}_{p_{0}}\{f^{2}(y_i\mid\theta)\}
      -2\,\mathbb{E}_{p_{h}}\{f(y_i\mid\theta)\}\,\mathbb{E}_{p_{0}}\{f(y_i\mid\theta)\} \right. \\
&\hspace{30mm}
     \left. +\{\mathbb{E}_{p_{0}}\{f(y_i\mid\theta)\}\}^{2}
      -\mathbb{E}_{p_{h}}\{f(y_i\mid\theta)\}
      +\mathbb{E}_{p_{0}}\{f(y_i\mid\theta)\}\right]\,w_h^{*}
      + L,
\end{aligned}
\end{equation}
where $L$ is a function of $D$ and $D_h$ only.

\yx
Observe that the coefficient of the quadratic term in \eqref{eq:WAIC_quad_long} is non-negative, confirming that \(\mathrm{WAIC}\) is concave with respect to the posterior weight $w_h^*$. Consequently, its minimum over the unit interval $w_h^* \in [0,1]$ must occur at the boundaries $w_h^* = 0$ or $w_h^* = 1$.

Finally, we consider the relationship between $w_h^*$ and the prior weight $w_h$:
\[
w_h^* = \frac{w_h z_h}{w_h z_h + (1 - w_h) z_0}.
\]
Since the marginal likelihoods $z_h$ and $z_0$ are positive and independent of $w_h$, the derivative 
\[
\frac{d w_h^*}{d w_h} = \frac{z_h z_0}{(w_h z_h + (1 - w_h) z_0)^2} > 0.
\]
The derivative is strictly positive. This confirms that $w_h^*$ is a strictly increasing function of $w_h$ that maps $w_h \in [0,1]$ onto $w_h^* \in [0,1]$. Therefore, the boundary optima $w_h^* \in \{0, 1\}$ correspond directly to the prior weight boundaries $w_h \in \{0, 1\}$. This completes the proof that the gating decision relies solely on a comparison between the no-borrowing ($w_h=0$) and full-borrowing ($w_h=1$) scenarios.

\xx




\end{proof}

\section{Binary Endpoints: WAIC Derivation}
\phantomsection         
\label{sec:binaryWAIC}

The posterior distribution for the binary case is a mixture of two Beta distributions:
\begin{equation}
\label{eq:suppBpos}
  p(\theta \mid D, D_h) = \wst p_h(\theta \mid D,D_h) + (1-\wst) p_0(\theta \mid D),  
\end{equation}
where the components are defined as follows. The partial borrowing posterior is given by $p_h(\theta \mid D,D_h) = Beta(a_h, b_h)$ with parameters $a_h=
a+ x+x_h$ and $b_h=b +n +n_h -x-x_h$. The non-borrowing posterior is  $p_0(\theta \mid D) = Beta(a_0, b_0)$ with $a_0=a+x$ and $b_0=b+n-x$. The mixture weight takes the form $\wst=\frac{w_h z_h}{(w_h z_h+(1-w_h) z_0)}$, with $z_0=\frac{B(a_0, b_0)}{B(a, b)}$ and $z_h = \frac{B\left(a_h, b_h\right)}{B\left(a+x_h, b+n_h-x_h\right)}$, where $B(\cdot, \cdot)$ denotes the beta function.  For simplicity, we write expectations/variances as $\mathbb{E}_{p_h},\operatorname{Var}_{p_h}$, etc.

For a Bernoulli likelihood,  $\log f(y_i \mid \theta)=y_i\log\theta + (1-y_i)\log(1-\theta)$. 
Substituting the mixture posterior \eqref{eq:suppBpos} into \eqref{eq:WAIC_quad_long}  yields a quadratic expression in $\wst$:
\begin{equation}
\mathrm{WAIC}_B(\wst,D,D_h) = -I_1 \cdot {\wst}^2 + I_2 \cdot \wst + I_3,\label{eq: WAIC_binary}
\end{equation}
where 

$$
\begin{aligned}
I_1 \;=\;& 2 \Bigl[ (n-x)
           \Bigl\{\,
             \mathbb{E}_{p_h}\!\bigl\{\log(1-\theta)\bigr\}-
             \mathbb{E}_{p_0}\!\bigl\{\log(1-\theta)\bigr\}
           \Bigr\}^{2}
         + x\,
           \Bigl\{\,
             \mathbb{E}_{p_h}\!\bigl\{\log\theta\bigr\}-
             \mathbb{E}_{p_0}\!\bigl\{\log\theta\bigr\}
           \Bigr\}^{2} \Bigr];\\[6pt]
I_2 \;=\;& 2(n-x)\Bigl[
             \operatorname{Var}_{p_h}\!\bigl\{\log(1-\theta)\bigr\}-
             \operatorname{Var}_{p_0}\!\bigl\{\log(1-\theta)\bigr\} \\[-2pt]
         &\hphantom{(n-x)\Bigl[}+
             \Bigl\{\,
               \mathbb{E}_{p_h}\!\bigl\{\log(1-\theta)\bigr\}-
               \mathbb{E}_{p_0}\!\bigl\{\log(1-\theta)\bigr\}
             \Bigr\}^{2}
             +\mathbb{E}_{p_0}\!\bigl\{\log(1-\theta)\bigr\}-
              \mathbb{E}_{p_h}\!\bigl\{\log(1-\theta)\bigr\}
           \Bigr] \\[2pt]
         &+\,2x\Bigl[
             \operatorname{Var}_{p_h}\!\bigl\{\log\theta\bigr\}-
             \operatorname{Var}_{p_0}\!\bigl\{\log\theta\bigr\} \\[-2pt]
         &\hphantom{+\,x\Bigl[}+
             \Bigl\{\,
               \mathbb{E}_{p_h}\!\bigl\{\log\theta\bigr\}-
               \mathbb{E}_{p_0}\!\bigl\{\log\theta\bigr\}
             \Bigr\}^{2}
             +\mathbb{E}_{p_0}\!\bigl\{\log\theta\bigr\}-
              \mathbb{E}_{p_h}\!\bigl\{\log\theta\bigr\}
           \Bigr] ;
\end{aligned}
$$

$I_3$ corresponds to $L$ in \eqref{eq:WAIC_quad_long}, which is a constant independent of $\wst$.

The expectations and variances can be computed exactly using properties of the Beta distribution. 
Let $\psi(\,\cdot\,)$ and $\psi_1(\,\cdot\,)$ denote the digamma and trigamma functions, we have 
\(\psi(p)-\psi(p+q)=\displaystyle-\sum_{i=0}^{q-1}\tfrac1{p+i}\) and  
\(\psi_1(p)-\psi_1(p+q)=\displaystyle\sum_{i=0}^{q-1}\tfrac1{(p+i)^2}\). Then terms in the $I_1$ and  $I_2$ can be computed in the following closed–form:

\[
\begin{aligned}
\mathbb{E}_{p_h}\!\bigl[\log(1-\theta)\bigr]
  &=\psi(b_h)-\psi(a_h+b_h)
  =-\sum_{i=0}^{a_h-1}\frac{1}{b_h+i},\\
\mathbb{E}_{p_0}\!\bigl[\log(1-\theta)\bigr]
  &=\psi(b_0)-\psi(a_0+b_0)
  =-\sum_{i=0}^{a_0-1}\frac{1}{b_0+i},\\
\mathbb{E}_{p_h}\!\bigl[\log\theta\bigr]
  &=\psi(a_h)-\psi(a_h+b_h)
  =-\sum_{i=0}^{b_h-1}\frac{1}{a_h+i},\\
\mathbb{E}_{p_0}\!\bigl[\log\theta\bigr]
  &=\psi(a_0)-\psi(a_0+b_0)
  =-\sum_{i=0}^{b_0-1}\frac{1}{a_0+i},
\end{aligned}
\]

\[
\begin{aligned}
\operatorname{Var}_{p_h}\!\bigl[\log(1-\theta)\bigr]
  &=\psi_1(b_h)-\psi_1(a_h+b_h)
  =\sum_{i=0}^{a_h-1}\frac{1}{(b_h+i)^2},\\
\operatorname{Var}_{p_0}\!\bigl[\log(1-\theta)\bigr]
  &=\psi_1(b_0)-\psi_1(a_0+b_0)
  =\sum_{i=0}^{a_0-1}\frac{1}{(b_0+i)^2},\\
\operatorname{Var}_{p_h}\!\bigl[\log\theta\bigr]
  &=\psi_1(a_h)-\psi_1(a_h+b_h)
  =\sum_{i=0}^{b_h-1}\frac{1}{(a_h+i)^2},\\
\operatorname{Var}_{p_0}\!\bigl[\log\theta\bigr]
  &=\psi_1(a_0)-\psi_1(a_0+b_0)
  =\sum_{i=0}^{b_0-1}\frac{1}{(a_0+i)^2}.
\end{aligned}
\]

\yx 
To further simplify the expression of $I_1$ and $I_2$ to help the reader apply the approach, we define:
\[
\begin{aligned}
\Delta_{1-\theta}
&=
\Bigl[\psi^{(0)}(b_h)-\psi^{(0)}(a_h+b_h)\Bigr]
-
\Bigl[\psi^{(0)}(b_0)-\psi^{(0)}(a_0+b_0)\Bigr],\\[4pt]
\Delta_{\theta}
&=
\Bigl[\psi^{(0)}(a_h)-\psi^{(0)}(a_h+b_h)\Bigr]
-
\Bigl[\psi^{(0)}(a_0)-\psi^{(0)}(a_0+b_0)\Bigr],\\[4pt]
V_{1-\theta}
&=
\Bigl[\psi^{(1)}(b_h)-\psi^{(1)}(a_h+b_h)\Bigr]
-
\Bigl[\psi^{(1)}(b_0)-\psi^{(1)}(a_0+b_0)\Bigr],\\[4pt]
V_{\theta}
&=
\Bigl[\psi^{(1)}(a_h)-\psi^{(1)}(a_h+b_h)\Bigr]
-
\Bigl[\psi^{(1)}(a_0)-\psi^{(1)}(a_0+b_0)\Bigr].
\end{aligned}
\]
Therefore, the simplified expressions for \(I_1\) and \(I_2\) are
\[
\begin{aligned}
I_1
&=
2\Bigl\{(n-x)\Delta_{1-\theta}^{2}
+
x\Delta_{\theta}^{2}\Bigr\},\\[6pt]
I_2
&=
2(n-x)\Bigl\{V_{1-\theta}+\Delta_{1-\theta}^{2}-\Delta_{1-\theta}\Bigr\}
+
2x\Bigl\{V_{\theta}+\Delta_{\theta}^{2}-\Delta_{\theta}\Bigr\}.
\end{aligned}
\]
\xx

\section{Proof of Theorem 3}
\phantomsection         
\label{sec:suppproof3}

\begin{theorem}[Existence of a Single Connected Borrowing Region]
Given historical data $D_h$ with $(n_h, x_h)$ and a fixed concurrent control sample size $n$, there exists a single connected region \( G = [x_L^*, x_U^*] \subseteq [0, n] \) such that borrowing from the historical data is beneficial if and only if \( x \in G \), where \( x \) is the number of observed successes in the concurrent control.
\end{theorem}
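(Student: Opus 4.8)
The plan is to reduce the borrowing decision to the sign of a single scalar function of $x$, to show that this function is convex in $x$, and to exhibit one point where it is strictly negative; convexity together with one strictly negative value then forces the ``borrow'' set to be a nondegenerate closed interval. By Theorem~2, for fixed $D_h$ the map $w_h\mapsto\mathrm{WAIC}_B(w_h,D,D_h)$ is concave in $\wst$ and hence minimized over $[0,1]$ at an endpoint, and (as its proof records) $w_h=0$ gives $\wst=0$ while $w_h=1$ gives $\wst=1$. Therefore $D=\{x,n\}$ lies in the borrowing region $\Omega\setminus\Omega_0$ if and only if $\mathrm{WAIC}_B(1,D,D_h)\le\mathrm{WAIC}_B(0,D,D_h)$. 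Evaluating the quadratic form~\eqref{eq: WAIC_binarySs} at $\wst=1$ and $\wst=0$ and subtracting,
\[
k(x):=\mathrm{WAIC}_B(1,D,D_h)-\mathrm{WAIC}_B(0,D,D_h)=-I_1+I_2 ,
\]
so $G=\{x\in[0,n]:k(x)\le 0\}$. I would treat $x$ as a real variable in $[0,n]$, which is legitimate since $I_1$ and $I_2$ are expressed in Section~\ref{sec:binaryWAIC} through digamma and trigamma functions whose arguments are affine in $x$.

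\textbf{Convexity of $k$.} Writing $a_0=a+x$, $b_0=b+n-x$, $a_h=a_0+x_h$, $b_h=b_0+(n_h-x_h)$, one has $a_0+b_0=a+b+n$ and $a_h+b_h=a+b+n+n_h$, both independent of $x$. Inserting the Beta-distribution formulas of Section~\ref{sec:binaryWAIC} into $k=-I_1+I_2$, every $\psi$- or $\psi_1$-term evaluated at $a_0+b_0$ or $a_h+b_h$ collapses (after multiplication by $x$ and $n-x$) into a constant, and what remains reduces, up to an additive constant, to
\[
k(x)=-2\bigl[x\,\phi_a(x)+(n-x)\,\phi_b(x)+x\,\sigma_a(x)+(n-x)\,\sigma_b(x)\bigr],
\]
with $\phi_a(x)=\sum_{j=0}^{x_h-1}(a+x+j)^{-1}$ and $\phi_b(x)=\sum_{j=0}^{n_h-x_h-1}(b+n-x+j)^{-1}$ coming from the expectation terms and $\sigma_a,\sigma_b$ the corresponding sums of squared reciprocals coming from the variance terms. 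The goal is then to show the bracket is concave in $x$, i.e., $\Delta^2 k(x)\ge 0$ for the second forward difference, as in the proof sketch. The ingredients are summands $t\mapsto t/(c+t)^m$ with $m\in\{1,2\}$ after the affine substitution $t=x$ or $t=n-x$: the first-order pieces $x/(a+x+j)=1-(a+j)/(a+x+j)$ are individually concave, while the second-order ($m=2$) pieces need not have second derivative of one sign. The main obstacle is precisely this step: because term-by-term concavity fails for the variance contributions, one must group, index by index, the first- and second-order pieces sharing the same shift and argue that the dominant concave first-order part controls the second-order remainder, tracking the cancellations across the $a$- and $b$-families.

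\textbf{Nonemptiness.} It remains to find one $x_\alpha\in[0,n]$ with $k(x_\alpha)<0$. I would take $x_\alpha$ to be the value at which the no-borrowing posterior $p_0(\theta\mid D)=\mathrm{Beta}(a_0,b_0)$ and the full-borrowing posterior $p_h(\theta\mid D,D_h)=\mathrm{Beta}(a_h,b_h)$ share a common mean, namely $x_\alpha=\frac{x_h(a+b+n)-a\,n_h}{n_h}$, which lies in $(0,n)$ under the mild condition $0<x_h<n_h$ with $a,b$ small relative to $n$ (in the degenerate cases $G$ is an interval anchored at $0$ or $n$, or empty). At $x_\alpha$ the two Beta posteriors have a common mean $\mu$ but $p_h$ has strictly larger total mass $s_h=a_h+b_h>s_0=a_0+b_0$. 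Using $\psi^{(m)}(z)=(-1)^{m+1}m!\sum_{k\ge0}(z+k)^{-m-1}$ together with a Riemann-sum comparison --- a coarser uniform mesh overestimates the integral of a positive decreasing convex integrand by more --- one checks that, at fixed mean $\mu$, the map $s\mapsto\psi(\mu s)-\psi(s)$ is increasing and $s\mapsto\psi_1(\mu s)-\psi_1(s)$ is decreasing. Hence $\mathbb{E}_{p_h}[\log\theta]>\mathbb{E}_{p_0}[\log\theta]$, $\mathbb{E}_{p_h}[\log(1-\theta)]>\mathbb{E}_{p_0}[\log(1-\theta)]$, $\mathrm{Var}_{p_h}[\log\theta]<\mathrm{Var}_{p_0}[\log\theta]$, and $\mathrm{Var}_{p_h}[\log(1-\theta)]<\mathrm{Var}_{p_0}[\log(1-\theta)]$. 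Substituting into
\begin{align*}
k(x_\alpha)={}&-2x_\alpha\bigl(\mathbb{E}_{p_h}-\mathbb{E}_{p_0}\bigr)[\log\theta]-2(n-x_\alpha)\bigl(\mathbb{E}_{p_h}-\mathbb{E}_{p_0}\bigr)[\log(1-\theta)]\\
&+2x_\alpha\bigl(\mathrm{Var}_{p_h}-\mathrm{Var}_{p_0}\bigr)[\log\theta]+2(n-x_\alpha)\bigl(\mathrm{Var}_{p_h}-\mathrm{Var}_{p_0}\bigr)[\log(1-\theta)]
\end{align*}
makes each of the four summands strictly negative, so $k(x_\alpha)<0$.

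\textbf{Conclusion.} Since $k$ is continuous and convex on $[0,n]$ and strictly negative at the interior point $x_\alpha$, the sublevel set $G=\{x\in[0,n]:k(x)\le0\}$ is a nondegenerate closed interval $[x_L^*,x_U^*]\subseteq[0,n]$ containing $x_\alpha$; thus borrowing is beneficial exactly on $[x_L^*,x_U^*]$, and the integer decision table follows by restricting to $x\in\{0,1,\dots,n\}$. The analytic heart of the argument, and where the real work lies, is the convexity step.
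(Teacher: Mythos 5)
Your overall strategy coincides with the paper's: reduce the gating decision to the sign of $k(x)=\mathrm{WAIC}_B(1,D,D_h)-\mathrm{WAIC}_B(0,D,D_h)$, establish convexity of $k$ in $x$, and exhibit an interior point where $k<0$; moreover your choice of $x_\alpha$ (the point where the no-borrowing and full-borrowing Beta posteriors share a common mean) is exactly the paper's $\tilde{x}$, and your four sign claims ($\mathbb{E}_{p_h}-\mathbb{E}_{p_0}>0$ for both log terms, $\mathrm{Var}_{p_h}-\mathrm{Var}_{p_0}<0$ for both) are precisely the paper's $E_0,E_1>0$ and $V_0,V_1<0$, proved there by monotonicity of $\psi(t)-\psi(\lambda t)$ and $\psi_1(t)-\psi_1(\lambda t)$ via integral representations. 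However, there is a genuine gap: the convexity of $k$ is asserted as a goal but never established. You yourself identify it as ``the main obstacle'' and ``the analytic heart,'' and correctly note that term-by-term concavity fails for the trigamma (variance) contributions, but you stop at describing the grouping that ``must'' be done. The paper's proof does exactly this heavy lifting: it writes the second-order difference as $\Delta^2 k(x)=g_1(x)+g_2(x)$, splitting the $\theta$- and $(1-\theta)$-families, and proves $g_1\ge 0$ and $g_2\ge 0$ by induction on $x_h$ and on $n_h-x_h$ respectively, with base cases $g_i(x,0)=0$, $g_i(x,1)\ge 0$ and nonnegative forward differences, each verified through explicit common-denominator expansions and polynomial positivity checks that use the Beta$(1,1)$ prior ($a=b=1$). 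Without this step your sublevel-set conclusion (a single interval) is unsupported, so the proposal as written does not constitute a proof.

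A secondary gap concerns the degenerate regime you dismiss parenthetically. When the common-mean point $x_\alpha$ falls outside $[0,n]$ (equivalently $x_h\notin\mathcal{A}_h$ in the paper's notation), you allow that $G$ might be ``empty,'' which is weaker than the theorem's assertion of a nonempty interval $[x_L^*,x_U^*]$. The paper closes this case by proving that $k(0)<0$ when $x_h<\frac{a n_h}{n+a+b}$ and $k(n)<0$ when $x_h>\frac{(n+a)n_h}{n+a+b}$, via telescoping digamma/trigamma sums and the decreasing function $S(t)=\tfrac{1}{t}+\tfrac{1}{t^2}$, so that the borrowing region is always nonempty (anchored at an endpoint in those cases). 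To complete your argument you would need both the convexity proof and this boundary analysis, and you should also state the prior restriction $a=b=1$ on which the paper's algebraic positivity arguments rely.
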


\begin{proof}
According to Theorem \ref{thm:waic-min}, the WAIC-optimal weight $w_h^*$ must be either 0 or 1 due to the quadratic form's properties. For any observed number of responders $x$, given the concurrent control sample size $n$, historical sample size $n_h$, and historical responders $x_h$, we define the key comparison function:
\begin{equation}\label{eq:kx}
k(x) := \mathrm{WAIC}(1, D, D_h) - \mathrm{WAIC}(0, D, D_h) = -I_1 + I_2,
\end{equation}
where borrowing from historical data is beneficial  when $k(x) \leq 0$.

To prove the existence of a single connected borrowing region $G = [x_L^{*}, x_U^{*}] \subseteq [0, n]$, we establish two sufficient conditions:
\begin{enumerate}
\item First, we demonstrate that $k(x)$ is convex on $[0,n]$ by showing its second-order difference is non-negative. This convexity guarantees that $k(x)$ has at most one local minimum in $[0,n]$, which implies the equation $k(x) = 0$ has at most two solutions. 
\item Second, we prove there exists at least one point $\widetilde{x} \in [0,n]$ where $k(\widetilde{x}) < 0$, ensuring the borrowing region $G$ is non-empty. The intermediate value theorem applied to the continuous function $k(x)$ then guarantees the connectedness of $G$.
\end{enumerate}

\subsection{Proof of condition 1}
The second-order difference of $k(x)$ can be written as:
\begin{equation}
\Delta^{2}k(x)=[k(x+2)-k(x+1)]-[k(x+1)-k(x)], \quad x \in {0,1, \cdots, n-2}.
\end{equation}

According to \eqref{eq:kx}, we express $\Delta^{2}k(x)$ as the sum $g_1(x)+g_2(x)$, where $g_1(x)$ and $g_2(x)$ are:

\[
\begingroup
\setlength{\jot}{4pt}            
\begin{aligned}
    g_{1}(x)
  &= x\Bigl[
        - \frac{1}{\bigl(1+a+x+x_h\bigr)^{2}}
        + \frac{1}{\bigl(a+x+x_h\bigr)^{2}}
        + \frac{1}{\bigl(1+a+x\bigr)^{2}}
        - \frac{1}{\bigl(a+x\bigr)^{2}}
    \Bigr] \\[2pt]
  &\quad + x\Bigl[
        - \frac{1}{\,1+a+x+x_h}
        + \frac{1}{\,a+x+x_h}
        + \frac{1}{\,1+a+x}
        - \frac{1}{\,a+x}
    \Bigr] \\[2pt]
  &\quad + 2\Bigl[
        -\frac{1}{\bigl(1+a+x+x_h\bigr)^{2}}
        +\frac{1}{\bigl(1+a+x\bigr)^{2}}
        -\frac{1}{\,1+a+x+x_h}
        +\frac{1}{\,1+a+x}
    \Bigr], \\[2pt]
g_{2}(x)
  &= (n-x)\Bigl[
        \frac{1}{\bigl(b+n-x-2+n_{h}-x_{h}\bigr)^{2}}
      - \frac{1}{\bigl(b+n-x-1+n_{h}-x_{h}\bigr)^{2}} \\[2pt]
  &\quad - \frac{1}{\bigl(b+n-x-2\bigr)^{2}}
      + \frac{1}{\bigl(b+n-x-1\bigr)^{2}}
    \Bigr] \\[2pt]
  &\quad + (n-x)\Bigl[
        \frac{1}{\,b+n-x-2+n_{h}-x_{h}}
      - \frac{1}{\,b+n-x-1+n_{h}-x_{h}} \\[2pt]
   &\quad - \frac{1}{\,b+n-x-2}
      + \frac{1}{\,b+n-x-1}
    \Bigr] \\[2pt]
  &\quad + 2\Bigl[
        -\frac{1}{\bigl(b+n-x-2+n_{h}-x_{h}\bigr)^{2}}
        +\frac{1}{\bigl(b+n-x-2\bigr)^{2}} \\[2pt]
  &\quad -\frac{1}{\,b+n-x-2+n_{h}-x_{h}}
        +\frac{1}{\,b+n-x-2}
    \Bigr].
\end{aligned}
\endgroup
\]

We now verify that $g_1(x)\geq 0$ and $g_2(x)\geq 0$ for all valid $x$. For clarity, we treat $g_1$ and $g_2$ as functions of additional parameters:  $g_1(x) = g_1(x,x_h)$, and $g_2(x) = g_2(x, n_h - x_h)$. 

\subsubsection{Proof of \texorpdfstring{$g_1(x,x_h)\ge 0$}{g1(x,xh)>=0}}

We proceed by induction on $x_h$. Define:
\[
\begin{aligned}
g_{1}(x,x_{h}) &=
\begin{cases}
\displaystyle g_{1}(x,0), 
  &\text{when } x_{h}=0, \\[6pt]
\displaystyle g_{1}(x,1), 
  &\text{when } x_{h}=1, \\[6pt]
\displaystyle g_{1}(x,1)+\sum_{j=1}^{x_{h}-1}\Delta_{x_{h}} g_{1}(x,j), 
  & \text{when } 2\le x_{h}\le n_{h},
\end{cases} \\[10pt]
\text{where}\quad
\Delta_{x_{h}} g_{1}(x,j) &= g_{1}(x,j+1)-g_{1}(x,j).
\end{aligned}
\]

To prove that $g_1\left(x, x_h\right) \geq 0$, we establish three claims:
(a.1) $g_1(x, 0)=0$;
(a.2) $g_1(x, 1) \geq 0$;
(a.3) the forward difference $\Delta_{x_{h}} g_1(x, j)=g_1(x, j+1)-g_1(x, j)$ satisfies $\Delta_{x_{h}} g_1(x, j) \geq 0$ for all $j \geq 1$.

For claim (a.1), we expand  $g_1(x, 0)$ as  follows: 
$$
\begin{aligned}
    g_{1}(x, 0)
  &= x\Bigl[
        - \frac{1}{\bigl(1+a+x\bigr)^{2}}
        + \frac{1}{\bigl(a+x\bigr)^{2}}
        + \frac{1}{\bigl(1+a+x\bigr)^{2}}
        - \frac{1}{\bigl(a+x\bigr)^{2}}
    \Bigr] \\[2pt]
  &\quad + x\Bigl[
        - \frac{1}{\,1+a+x}
        + \frac{1}{\,a+x}
        + \frac{1}{\,1+a+x}
        - \frac{1}{\,a+x}
    \Bigr] \\[2pt]
  &\quad + 2\Bigl[
        -\frac{1}{\bigl(1+a+x\bigr)^{2}}
        +\frac{1}{\bigl(1+a+x\bigr)^{2}}
        -\frac{1}{\,1+a+x}
        +\frac{1}{\,1+a+x}
    \Bigr].\\[2pt]
\end{aligned}
$$
Observing the expansion, we see that $g_1(x, 0) = 0$. Each positive term cancels with its corresponding negative counterpart. For example, in the first line,  the term $\frac{1}{(1+a+x)^2}$ cancels with  $-\frac{1}{(1+a+x)^2}$. Thus $g_{1}(x, 0)=0$ for all $x$.  

For claim (a.2),  we expand $g_{1}(x, 1)$ as follows:
$$\begin{aligned} 
g_1(x,1)= & x\left[-\frac{1}{(a+x+2)^2}+\frac{1}{(a+x+1)^2}+\frac{1}{(a+x+1)^2}-\frac{1}{(a+x)^2}\right] \\ 
& +x\left[-\frac{1}{a+x+2}+\frac{1}{a+x+1}+\frac{1}{a+x+1}-\frac{1}{a+x}\right] \\ 
& +2\left[-\frac{1}{(a+x+2)^2}+\frac{1}{(a+x+1)^2}-\frac{1}{a+x+2}+\frac{1}{a+x+1}\right].
\end{aligned}
$$
After common-denominator expansion and simplification:
$$\begin{aligned} 
g_1(x,1) & =  \frac{M_0(x)}{(a+x)^2(1+a+x)^2(2+a+x)^2}, \\ 
\text{where} \\
M_0(x) &= a^4+5 a^3+5 a^2+(a-1)\left[x\left(3 a^2+6 a+2\right)+x^2(3 a+3)+x^3\right].
\end{aligned}
$$
With the Beta$(1,1)$ prior (i.e., $a=b=1$), the denominator is positive, and $M_{0}(x)\geq 0$ for $x\ge 0$. Therefore, $g_1(x,1)\geq 0$.

For claim (a.3), we expand the forward difference as:
$$
\begin{aligned}
   \Delta_{x_h} g_1\left(x, j\right) &= g_1\left(x, j+1\right) - g_1\left(x, j\right)\\
  &= x\Bigl[
        - \frac{1}{\bigl(2+a+x+j\bigr)^{2}}
        + \frac{1}{\bigl(1+a+x+j\bigr)^{2}}
        + \frac{1}{\bigl(1+a+x+j\bigr)^{2}}
        - \frac{1}{\bigl(a+x+j\bigr)^{2}}
    \Bigr] \\[2pt]
  &\quad + x\Bigl[
        - \frac{1}{\,2+a+x+j}
        + \frac{1}{\,1+a+x+j}
        + \frac{1}{\,1+a+x+j}
        - \frac{1}{\,a+x+j}
    \Bigr] \\[2pt]
  &\quad + 2\Bigl[
        -\frac{1}{\bigl(2+a+x+j\bigr)^{2}}
        -\frac{1}{2+a+x+j}
        +\frac{1}{\bigl(1+a+x+j\bigr)^{2}}
        +\frac{1}{\,1+a+x+j}
    \Bigr]. 
\end{aligned}
$$

After arrangement, we have:
\[
\begin{aligned}
\Delta_{x_h} g_1(x,j)
  &= 
  \frac{M_1(x)}
       {(x+a+j)^{2}(1+x+a+j)^{2}(2+x+a+j)^{2}}\,, \\[6pt]
M_1(x)
  &=
  2\left[\left((a+j)-1\right) x^3+3\left((a+j)^2+(a+j)-1\right) x^2 \right.\\
  &+\left. \left(3 (a+j)^3+9 (a+j)^2+2 (a+j)-2\right) x+(a+j)^2\left((a+j)^2+5 (a+j)+5\right)\right] .
\end{aligned}
\]

The denominator is always positive. So the sign of $\Delta_{x_h} g_1(x,j)$ is determined by its numerator $M_1(x)$. We can readily verify $M_1(x)$ is increasing in $x$ because its derivative is non-negative. Therefore, $$M_1(x) \ge M_1(0) = \left( 2(a+j)^4+10 (a+j)^3+10 (a+j)^2\right) > 0.$$




Combining these results, $g_1\left(x, x_h\right) \geq 0$ for all $x$ and $x_h$.

\subsubsection{Proof of \texorpdfstring{$g_2(x,n_h-x_h)\ge 0$}{g2(x,nh-xh)>=0}}

Similar to the proof of $g_1(x,x_h)$, we write the $g_2(x,n_h - x_h)$ as forward induction of $n_h - x_h$:
\[
\begin{aligned}
g_{2}(x,n_h- x_{h}) &=
\begin{cases}
\displaystyle g_{2}(x,0), 
  &\text{if } n_h - x_{h}=0, \\[6pt]
\displaystyle g_{2}(x,1), 
  &\text{if } n_h - x_{h}=1, \\[6pt]
\displaystyle g_{2}(x,1)+\sum_{j=1}^{n_h - x_{h}-1}
          \Delta_{n_h - x_{h}} g_{2}(x,j), 
  &\text{if } 2\le n_h - x_{h}\le n_{h},
\end{cases} \\[10pt]
\text{where}\quad
\Delta_{n_h - x_{h}} g_{2}(x,j) &= g_{2}(x,j+1)-g_{2}(x,j).
\end{aligned}
\]

We establish three analogous claims:
(b.1) $g_2(x, 0)=0$;
(b.2) $g_2(x, 1) \geq 0$;
(b.3) the forward difference $\Delta_{n_h - x_{h}} g_2(x, j)=g_2(x, j+1)-g_2(x, j)$ satisfies $\Delta_{n_h - x_{h}} g_2(x, j) \geq 0$ for all $j \geq 1$.

For claim (b.1), expand the $g_2(x, 0)$ as 
$$
\begin{aligned}
g_{2}(x,0)
  &= (n-x)\Bigl[
        \frac{1}{\bigl(b+n-x-2\bigr)^{2}}
      - \frac{1}{\bigl(b+n-x-1\bigr)^{2}} \\[2pt]
  &\quad - \frac{1}{\bigl(b+n-x-2\bigr)^{2}}
      + \frac{1}{\bigl(b+n-x-1\bigr)^{2}}
    \Bigr] \\[2pt]
  &\quad + (n-x)\Bigl[
        \frac{1}{\,b+n-x-2}
      - \frac{1}{\,b+n-x-1} \\[2pt]
   &\quad - \frac{1}{\,b+n-x-2}
      + \frac{1}{\,b+n-x-1}
    \Bigr] \\[2pt]
  &\quad + 2\Bigl[
        -\frac{1}{\bigl(b+n-x-2\bigr)^{2}}
        +\frac{1}{\bigl(b+n-x-2\bigr)^{2}} \\[2pt]
  &\quad -\frac{1}{\,b+n-x-2}
        +\frac{1}{\,b+n-x-2}
    \Bigr].
\end{aligned}
$$
The explicit expansion of $g_{2}(x,0)$ shows that every positive term is cancelled by its negative counterpart, yielding $g_{2}(x,0)=0$ for all $x$. 

For claim (b.2), taking $n_h - x_h = 1$ into $g_{2}(x, n_h - x_h)$ to get:
$$\begin{aligned} 
g_{2}(x, 1)
  &= (n-x)\Bigl[
        \frac{1}{\bigl(b+n-x-1\bigr)^{2}}
      - \frac{1}{\bigl(b+n-x\bigr)^{2}} \\[2pt]
  &\quad - \frac{1}{\bigl(b+n-x-2\bigr)^{2}}
      + \frac{1}{\bigl(b+n-x-1\bigr)^{2}}
    \Bigr] \\[2pt]
  &\quad + (n-x)\Bigl[
        \frac{1}{\,b+n-x-1}
      - \frac{1}{\,b+n-x} \\[2pt]
   &\quad - \frac{1}{\,b+n-x-2}
      + \frac{1}{\,b+n-x-1}
    \Bigr] \\[2pt]
  &\quad + 2\Bigl[
        -\frac{1}{\bigl(b+n-x-1\bigr)^{2}}
        +\frac{1}{\bigl(b+n-x-2\bigr)^{2}} \\[2pt]
  &\quad -\frac{1}{\,b+n-x-1}
        +\frac{1}{\,b+n-x-2}
    \Bigr].
\end{aligned}
$$
Denote $u = n-x-2$, after common- expansion and simplification:
\[
\begin{aligned}
g_{2}\bigl(x,\,1\bigr)
  &=\frac{M_{3}\bigl(u\bigr)}
          {(b+n-x-2)^{2}\,(b+n-x-1)^{2}\,(b+n-x)^{2}},
\\[6pt]
\text{where} \\\qquad
M_{3}\bigl(u\bigr)
  &= (2 b-2) u^3+\left(6 b^2+6 b-6\right) u^2+\left(6 b^3+18 b^2+4 b-4\right) u+\left(2 b^4+10 b^3+10 b^2\right).
\end{aligned}
\]
 Since the second order difference domain requires $0\le x \le n-2$, so $u = n-x-2 \ge 0$. The denominator is positive for all $n -x -2\ge 0$, and $M_3(u)$ is increasing for $u\ge 0$, so 
$M_3(u) > M_3(0)= 2 b^4+10 b^3+10 b^2 > 0$.

For claim (b.3), we expand the $\Delta_{n_h - x_h} g_2\left(x, j\right)$ as:

$$
\begin{aligned}
   \Delta_{n_h - x_h} g_2\left(x, j\right)
  &= g_2\left(x, j + 1\right)  -  g_2\left(x, j \right)\\
  & = (n-x)\Bigl[
        \frac{1}{\bigl(b+n-x-1+j\bigr)^{2}}
      - \frac{1}{\bigl(b+n-x-2+j\bigr)^{2}} \\[2pt]
  &\quad - \frac{1}{\bigl(b+n-x+j\bigr)^{2}}
      + \frac{1}{\bigl(b+n-x-1+j\bigr)^{2}}
    \Bigr] \\[2pt]
  &\quad + (n-x)\Bigl[
        \frac{1}{\,b+n-x-1+j}
      - \frac{1}{\,b+n-x-2+j} \\[2pt]
   &\quad - \frac{1}{\,b+n-x+j}
      + \frac{1}{\,b+n-x-1+j}
    \Bigr] \\[2pt]
  &\quad + 2\Bigl[
        -\frac{1}{\bigl(b+n-x-1+j\bigr)^{2}}
        +\frac{1}{\bigl(b+n-x-2+j\bigr)^{2}} \\[2pt]
  &\quad -\frac{1}{\,b+n-x-1+j}
        +\frac{1}{\,b+n-x-2+j}
    \Bigr].
\end{aligned}
$$

Denote $ u = n -x- 2 , m = b+j $, after some algebraic rearrangement:
\[
\begin{aligned}
\Delta_{n_h - x_h} g_2(x,j)
  &= 
\frac{M_4(u)}{(b+n-x-2+j)^2(b+n-x-1+j)^2(b+n-x+j)^2}, \\[6pt]
M_4(u)
  &=
  2(m-1) u^3+2\left(3 m^2+3 m-3\right) u^2+2\left(3 m^3+9 m^2+2 m-2\right) u\\
  &+2\left(m^4+5 m^3+5 m^2\right).
\end{aligned}
\]

Since $j \ge 1$, we have $m = b+j \ge 2$. Therefore the denominator of $\Delta_{n_h - x_h} g_2\left(x, j\right)$ is always positive. So the sign of $\Delta_{n_h - x_h} g_2\left(x, n_h - x_h\right)$ is determined by the numerator $M_4(u)$. And we take the derivative of $M_4(u)$ over $u$ as:
$$
\begin{aligned}
\frac{d}{du} M_4\bigl(u\bigr) 
&= 2\left[3(m-1) u^2+2\left(3 m^2+3 m-3\right) u+\left(3 m^3+9 m^2+2 m-2\right)\right] .
\end{aligned}
$$

Because $u \ge 0$  and \(m \ge 2\), every term in $\frac{d}{du} M_4\bigl(u\bigr) $ is non-negative, 
so $M_4(u)$ is increasing in $u$. Hence $M_4(u) \geq M_4(0)=2\left(m^4+5 m^3+5 m^2\right)>0$. The numerator of $\Delta_{n_h-x_h} g_2(x, j)$ is therefore strictly positive, and its denominator is positive as shown above; consequently $\Delta_{n_h-x_h} g_2(x, j) \geq 0$ for every $j \geq 1$.
Combining this with $g_2(x, 1) \geq 0$ yields

$$
g_2\left(x, n_h-x_h\right)=g_2(x, 1)+\sum_{j=1}^{n_h-x_h-1} \Delta_{n_h-x_h} g_2(x, j) \geq 0 \quad \text { for } 2 \leq n_h -x_h \leq n_h .
$$

Together with $g_2(x, 0)=0$ established in (b.1) and $g_2(x,1) \ge 0$ established in (b.2), we conclude that $g_2\left(x, n_h-x_h\right) \geq 0$.

\end{proof}

\subsection{Proof of condition 2}
\label{lemma1}
\begin{proof}
To determine $\widetilde{x}$ such that $k(\widetilde{x}) < 0$, recall that this condition is equivalent to  $\mathrm{WAIC}(1, D, D_h) < \mathrm{WAIC}(0, D, D_h)$, indicating that borrowing is beneficial at $\widetilde{x}$. A natural candidate for $\widetilde{x}$ is the point when the historical mean aligns precisely with the posterior mean of the concurrent control arm under a non-informative prior.  


We express the condition as:
\begin{equation}
\label{eq:tildx}
    \frac{n+n_h+a+b}{x+x_h+a}= \frac{n+a+b}{x+a} = \lambda \quad (\lambda >1),
\end{equation}
whose solution is $$\widetilde{x} = \frac{x_h(n + a+ b)}{n_h} - a.$$ Since $\widetilde{x}$ must satisfy  $0\le \widetilde{x} \le n$, substituting into this constraint yields the admissible range for $x_h$:  $$ \frac{a n_h}{(n+a+b)} \le x_h\le  \frac{(n+a)n_h}{(n+a+b)} .$$ We denote this set of valid $x_h$ values as 
$$
\mathcal{A}_h=\left\{x_h :  \frac{a n_h}{(n+a+b)}  \leq x_h \leq  \frac{(n+a) n_h}{(n+a+b)}  \right\}.
$$

Then we complete the proof in two steps:
\begin{itemize}
\item Case 1: if \( x_h \notin \mathcal{A}_h \), then either \( x_h < \frac{a n_h}{n + a + b} \), where \( k(\widetilde{x} = 0) < 0 \), or \( x_h > \frac{(n + a) n_h}{n + a + b} \), where \( k(\widetilde{x} = n) < 0 \).  
\item Case 2: if $x_h \in \mathcal{A}_h$, the solution $\tilx$ to \eqref{eq:tildx} satisfies $k(\tilx) < 0$.
\end{itemize}


\subsubsection{Case \texorpdfstring{$x_h \notin \mathcal{A}_h$}{xh not in Ah}}

We begin with the condition 
\begin{equation}
\label{eq:C.2.1.cond.1}
      x_h <  \frac{a n_h}{n+a+b}. 
\end{equation}
By \eqref{eq:kx}, we have
$$
\begin{aligned}
    k(0) =  
        &-\psi(b + n + n_h - x_h) + \psi(a + b + n + n_h) 
        + \psi_1(b + n + n_h - x_h) - \psi_1(a + b + n + n_h) \\
        & - \left( -\psi(b + n) + \psi(a + b + n) 
        + \psi_1(b + n) - \psi_1(a + b + n) \right).
\end{aligned}
$$

Using the recurrence relations of the digamma and trigamma functions, \( \psi(z+1) = \psi(z) + \frac{1}{z} \) and \( \psi_1(z+1) = \psi_1(z) - \frac{1}{z^2} \), define  
\[
S(t) = \psi(t+1) - \psi_1(t+1) - \bigl(\psi(t) - \psi_1(t)\bigr) = \frac{1}{t} + \frac{1}{t^2}.
\]  

For a Beta(1,1) prior (i.e., \( a = b = 1 \)), we expand \( k(0) \) as follows:  
\[
\begin{aligned}
k(0) &= \left( \sum_{i=n+2}^{n+n_h+1} S(i) \right) - \sum_{i=n+1}^{n+n_h-x_h} S(i) \\
     &= -S(n+1) - \left[ \sum_{i=n+2}^{n+n_h-x_h} S(i) - \left( \sum_{i=n+2}^{n+n_h+1} S(i) \right) \right] \\
     &= -S(n+1) + \left( \sum_{i=n+n_h-x_h+1}^{n+n_h+1} S(i) \right).
\end{aligned}
\]

When \( x_h = 0 \), the summation simplifies to:  
\[
\begin{aligned}
k(0) &= -S(n+1) + S(n+n_h+1).
\end{aligned}
\]  
Since \( S(t) \) is a decreasing function, it follows that \( S(n+1) > S(n+n_h+1) \), and therefore \( k(0) < 0 \).  

When \( x_h > 0 \), proving \( k(0) < 0 \) reduces to verifying:  
\begin{equation}  
\label{eq:C.2.1.2}  
\sum_{i=n+n_h-x_h+1}^{n+n_h+1} S(i) < S(n+1).
\end{equation}  

In this case, the summation consists of \( x_h + 1 \) terms. Since \( S(t) \) is decreasing, the maximum value in the summation occurs at the smallest index \( i = n + n_h - x_h + 1 \). Thus, an upper bound for the summation is:  
\[
\sum_{i=n+n_h-x_h+1}^{n+n_h+1} S(i) \leq (x_h + 1) S(n + n_h - x_h + 1).  
\]  

A sufficient condition for \eqref{eq:C.2.1.2} to hold is:  
\[
(x_h + 1) S(n + n_h - x_h + 1) < S(n + 1).
\]

Substituting the explicit form $S(t)=\frac{1}{t}+\frac{1}{t^2}$ into the inequality yields:

$$
(x_h+1)\Bigl[\frac{1}{n+1+n_h-x_h}
+\frac{1}{(n+1+n_h-x_h)^2}\Bigr]<\frac{1}{n+1}+\frac{1}{(n+1)^2},
$$

which is equivalent to

\begin{equation}
\label{eq:C.2.1.1}
      x_h+1<\frac{(n+2)}{(n+1)^2} \times \frac{\left(n+1+n_h-x_h\right)^2}{n+2+n_h-x_h}.
\end{equation}

By \eqref{eq:C.2.1.cond.1} we have \(x_h < \frac{n_h}{n+2}\), hence \(n_h > x_h(n+2)\), and therefore
\[
n+1+n_h-x_h > n+1+(n+2)x_h - x_h = (n+1)(x_h+1).
\]
Consider the function \(\eta(x)=\frac{x^2}{x+1}\), which is strictly increasing for \(x>0\). Using this monotonicity gives a lower bound for the right-hand side of \eqref{eq:C.2.1.1}:
\[
\frac{((n+1)+n_h-x_h)^2}{(n+1)+n_h-x_h+1}
= \eta(n+1+n_h-x_h)
> \eta\bigl((n+1)(x_h+1)\bigr)
= \frac{(n+1)^2(x_h+1)^2}{(n+1)(x_h+1)+1}.
\]
Consequently, a sufficient condition for \eqref{eq:C.2.1.1} is
\[
x_h+1 \;<\; \frac{n+2}{(n+1)^2}\,
\frac{(n+1)^2(x_h+1)^2}{(n+1)(x_h+1)+1}.
\]
Since \(x_h\ge 0\) implies \(x_h+1>0\), dividing both sides by \(x_h+1\) yields
\[
1 \;<\; \frac{(n+2)(x_h+1)}{(n+1)(x_h+1)+1}
\;\Longleftrightarrow\;
0 \;<\; x_h,
\]
and thus \(k(0)<0\).

Second, we show that if
\begin{equation}
\label{eq:C.2.2.cond.1}
x_h > \frac{(n + a) n_h}{n + a + b},
\end{equation}
then $k(n) < 0$. By \eqref{eq:kx}, 
$$
\begin{aligned}
    k(n) =  &-\psi(a + n + x_h) + \psi(a + b + n + n_h) 
        + \psi_1(a + n + x_h) - \psi_1(a + b + n + n_h) \\
        &- \left( -\psi(a + n) + \psi(a + b + n) 
        + \psi_1(a + n) - \psi_1(a + b + n) \right). \\
\end{aligned}
$$

With prior Beta$(1,1)$ and the same definition of $S(t)$, we obtain:
$$
\begin{aligned}
     k(n) &= \sum_{i=n+2}^{n+n_h+1} S(i) - \sum_{i=n+1}^{n+x_h} S(i)   \\
       & = \left(  \sum_{i=n+2}^{n+n_h+1} S(i)\right) - \left( S(n+1) + \sum_{i=n+2}^{n+x_h} S(i) \right)  \\
&= \left(  \sum_{i=n+x_h+1}^{n+n_h+1} S(i)\right)  -     S(n+1). 
\end{aligned}
$$
When $n_h - x_h = 0$, we have
$$
\begin{aligned}
     k(n) &= S(n+1+n_h)  -     S(n+1). 
\end{aligned}
$$
Since $S(t)$ is a decreasing function, it follows that $S\left(n+1+n_h\right)<S(n+1)$, and therefore $k(n)<0$.

When $n_h - x_h > 0$, showing $k(n)<0$ is equivalent to:
\begin{equation}
\label{eq:C.2.2.1}
\sum_{i = n + x_h + 1}^{n + n_h + 1} S(i) < S(n + 1).
\end{equation}
The summation comprises $n_h - x_h+1$ terms. Since $S(t)$ is decreasing, the maximum value in the sum occurs at the smallest index $i=n+x_h+1$. Hence, an upper bound of the left-hand side of \eqref{eq:C.2.2.1} is
$$
\left(n_h-x_h+1\right) S\left(n+x_h+1\right)<S(n+1),
$$
Substituting $S(t)=1 / t+1 / t^2$ gives
$$
\left(n_h-x_h+1\right)\left(\frac{1}{n+1+x_h}+\frac{1}{\left(n+1+x_h\right)^2}\right)<\frac{1}{n+1}+\frac{1}{(n+1)^2},
$$
which is equivalent to
\begin{equation}
\label{eq:C.2.2.2}
n_h-x_h+1<\frac{n+2}{(n+1)^2} \frac{\left(n+1+x_h\right)^2}{n+2+x_h}.
\end{equation}

By \eqref{eq:C.2.2.cond.1}, $x_h>\frac{(n+1) n_h}{n+2}$, implying $x_h>(n+1)\left(n_h-x_h\right)$, and thus
$$
n+1+x_h>n+1+(n+1)\left(n_h-x_h\right)=(n+1)\left(n_h-x_h+1\right) .
$$
By monotonicity of $\eta(x)$, the right-hand side of the \eqref{eq:C.2.2.2} admits a lower bound:
$$
\frac{((n+1) +x_h)^2}{n+1+x_h + 1} = \eta(n+1+x_h) > \eta((n+1)(n_h-x_h+1)) =  \frac{(n+1)^2(n_h-x_h+1)^2}{(n+1)(n_h-x_h+1)+1}.
$$

Consequently, a sufficient condition for \eqref{eq:C.2.2.2} is
$$
n_h-x_h+1<\frac{n+2}{(n+1)^2} \frac{(n+1)^2\left(n_h-x_h+1\right)^2}{(n+1)\left(n_h-x_h+1\right)+1}.
$$
Since $n_h-x_h+1>0$, dividing both sides by $n_h-x_h+1$ yields

$$
1<\frac{(n+2)\left(n_h-x_h+1\right)}{(n+1)\left(n_h-x_h+1\right)+1} \Longleftrightarrow n_h-x_h>0
$$

and hence $k(n)<0$.


\subsubsection{Case \texorpdfstring{$x_h \in \mathcal{A}_h$}{xh in Ah}}

Following the notation in Section \ref{sec:binaryWAIC},   $k(\tilx)$ can be expressed as:

\[
\begin{aligned}
k(\tilx) =& -2 \Bigl[ \sum_{i=1}^{n}\mathbb{E}_{p_h}\!\bigl\{f(y_i\mid\theta)\bigr\} - \sum_{i=1}^{n}\mathbb{E}_{p_0}\!\bigl\{f(y_i\mid\theta)\bigr\}\Bigr] \\
   & +2\Bigl[\sum_{i=1}^{n}\mathrm{Var}_{p_h}\!\bigl\{f(y_i\mid\theta)\bigr\} - \sum_{i=1}^{n}\mathrm{Var}_{p_0}\!\bigl\{f(y_i\mid\theta)\bigr\}\Bigr] \\
    = & -2 \Bigl[ (n - \tilx) \Bigl\{ \mathbb{E}_{p_{h}}\!\bigl\{\log(1-\theta)\bigr\}-
  \mathbb{E}_{p_{0}}\!\bigl\{\log(1-\theta)\bigr\} \Bigr\}+
 \tilx \Bigl\{ \mathbb{E}_{p_{h}}\!\bigl\{\log\theta\bigr\}-
  \mathbb{E}_{p_{0}}\!\bigl\{\log\theta\bigr\} \Bigr\} \Bigr]\\
    & + 2 \Bigl[ (n - \tilx) \Bigl\{ \operatorname{Var}_{p_{h}}\!\bigl\{\log(1-\theta)\bigr\}
      -\operatorname{Var}_{p_{0}}\!\bigl\{\log(1-\theta)\bigr\} \Bigr\} + \tilx  \Bigl\{ \operatorname{Var}_{p_{h}}\!\bigl\{\log\theta\bigr\}
      -\operatorname{Var}_{p_{0}}\!\bigl\{\log\theta\bigr\} \Bigr\} \Bigr].
\end{aligned}
\]

Denote
$$
\begin{aligned}
  E_0(\tilx) &=
  \mathbb{E}_{p_{h}}\!\bigl[\log\theta\bigr]-
  \mathbb{E}_{p_{0}}\!\bigl[\log\theta\bigr],\\
E_1(\tilx) &=
  \mathbb{E}_{p_{h}}\!\bigl[\log(1-\theta)\bigr]-
  \mathbb{E}_{p_{0}}\!\bigl[\log(1-\theta)\bigr],\\
V_0(\tilx) &=
      \operatorname{Var}_{p_{h}}\!\bigl[\log\theta\bigr]
      -\operatorname{Var}_{p_{0}}\!\bigl[\log\theta\bigr],\\
V_1(\tilx) &=
      \operatorname{Var}_{p_{h}}\!\bigl[\log(1-\theta)\bigr]
      -\operatorname{Var}_{p_{0}}\!\bigl[\log(1-\theta)\bigr]. \\
\end{aligned}
$$

Next, we show that each component of $k(\tilx)$ is negative, i.e., 
$$
\begin{aligned}
-\Bigl[ \mathbb{E}_{p_{h}}\!\bigl\{\log\theta\bigr\}-
  \mathbb{E}_{p_{0}}\!\bigl\{\log\theta\bigr\} \Bigr] = -E_0(\tilx) &< 0, \\
-\Bigl[ \mathbb{E}_{p_{h}}\!\bigl\{\log(1-\theta)\bigr\}-
  \mathbb{E}_{p_{0}}\!\bigl\{\log(1-\theta)\bigr\} \Bigr] = - E_1(\tilx) &< 0, \\
\Bigl[ \operatorname{Var}_{p_{h}}\!\bigl\{\log\theta\bigr\}
      -\operatorname{Var}_{p_{0}}\!\bigl\{\log\theta\bigr\} \Bigr] = V_0(\tilx) &< 0, \\
\Bigl[ \operatorname{Var}_{p_{h}}\!\bigl\{\log(1-\theta)\bigr\}
      -\operatorname{Var}_{p_{0}}\!\bigl\{\log(1-\theta)\bigr\} \Bigr] = V_1(\tilx) &< 0. \\
\end{aligned}
$$

Recall that 
$$
a_0 = a + \tilx,\quad a_h = a + \tilx + x_h, \quad b_0 = b +n - \tilx, \quad b_h = b +n +n_h - \tilx - x_h .
$$
By construction of the scaling constant in \eqref{eq:tildx}, the following relationships hold:
\begin{equation}\label{eq:properties}
a_h + b_h = \lambda a_h, \quad
a_0 + b_0 = \lambda a_0, \quad
a_h + b_h = \frac{\lambda}{\lambda -1}b_h,\quad
a_0 + b_0 = \frac{\lambda}{\lambda -1}b_0, \quad
\lambda > 1.
\end{equation}

Using these relationships, the explicit forms of $E_0(\tilx)$ and $E_1(\tilx)$ are given by:
\[
\begin{aligned}
E_0(\tilx)&=\psi(a_h)-\psi(a_h + b_h) -  \left( \psi(a_0)-\psi(a_0+b_0) \right),\\
E_1(\tilx)&=\psi(b_h)-\psi(a_h + b_h) -  \left( \psi(b_0)-\psi(a_0 + b_0) \right).
\end{aligned}                                                              \]

By substituting \eqref{eq:properties} into \( E_0(\tilx) \), we obtain:  
\[
E_0(\tilx) = \psi(a_h) - \psi(\lambda a_h) - \bigl[\psi(a_0) - \psi(\lambda a_0)\bigr].
\]  

It is clear that \( E_0(\tilx) \) can be interpreted as the difference between two instances of the function, evaluated at \( a_0 \) and \( a_h \), respectively. To simplify the notation, let \( H_0(t) = \psi(t) - \psi(\lambda t) \), and consider the forward difference:  
\[
\begin{aligned}
H_0(x+1) - H_0(x)
  &= \psi(x+1) - \psi(\lambda(x+1)) - \bigl[\psi(x) - \psi(\lambda x)\bigr] \\
  &= \frac{1}{x} - \int_{0}^{\infty} \frac{e^{-\lambda x t} \bigl(1 - e^{-\lambda t}\bigr)}{1 - e^{-t}} dt \\
  &\geq \frac{1}{x} - \int_{0}^{\infty} \lambda e^{-\lambda x t} dt = \frac{1}{x} - \frac{1}{\lambda x} > 0,
\end{aligned}
\]  
where the bound \( \bigl(1 - e^{-\lambda t}\bigr) / \bigl(1 - e^{-t}\bigr) \leq \lambda \) holds for \( \lambda > 1 \). The expressions \( \psi(\lambda x) \) and \( \psi(\lambda(x+1)) \) are evaluated using the integral representation:  
\[
\psi(z) = \int_0^{\infty} \left(\frac{e^{-t}}{t} - \frac{e^{-z t}}{1 - e^{-t}}\right) dt.
\]  

From the condition \( x_h \in \mathcal{A}_h \), it follows that \( a n_h / (n + a + b) \leq x_h \). This guarantees \( x_h > 0 \) since \( a n_h / (n + a + b) > 0 \). Therefore, \( a_0 = a + \widetilde{x} < a + x_h + \widetilde{x} = a_h \).

Since \( H_0(x) \) is strictly increasing and \( a_0 < a_h \), we have \( H_0(a_0) < H_0(a_h) \). Hence,  
\[
E_0(\tilx) = H_0(a_h) - H_0(a_0) > 0.
\]

By substituting \eqref{eq:properties} into \( E_1(\tilx) \), we obtain:  
\[
E_1(\tilx) = \bigl[\psi(b_h) - \psi\bigl(\frac{\lambda}{\lambda - 1} b_h\bigr)\bigr] - \bigl[\psi(b_0) - \psi\bigl(\frac{\lambda}{\lambda - 1} b_0\bigr)\bigr].
\]  

Let \( \lambda^{\prime} = \frac{\lambda}{\lambda - 1} > 1 \). Similar to the proof that \( H_0(x) \) is increasing, the function \( H_1(x) = \psi(x) - \psi(\lambda^{\prime} x) \), where \( \lambda^{\prime} > 1 \), is also increasing.  

From the condition \( x_h \in \mathcal{A}_h \), we know \( x_h \leq \frac{(n + a) n_h}{n + a + b} \). This ensures \( x_h < n_h \), which implies \( n_h - x_h > 0 \). As a result, \( b_h = b + n + n_h - x - x_h > b + n - x = b_0 \).  

Because \( H_1(x) \) is increasing and \( b_h > b_0 \), we conclude:  
\[
E_1(\tilx) = H_1(b_h) - H_1(b_0) > 0.  
\]

By the notation in Section \ref{sec:binaryWAIC}, the explicit forms of \( V_0(\tilx) \) and \( V_1(\tilx) \) are given by:  
\[
\begin{aligned}
V_0(\tilx) &= \psi_1(a_h) - \psi_1(a_h + b_h) - \bigl[\psi_1(a_0) - \psi_1(a_0 + b_0)\bigr], \\
V_1(\tilx) &= \psi_1(b_h) - \psi_1(a_h + b_h) - \bigl[\psi_1(b_0) - \psi_1(a_0 + b_0)\bigr].
\end{aligned}
\]

Substituting \eqref{eq:properties} into \( V_0(\tilx) \), we find:  
\[
V_0(\tilx) = \bigl[\psi_1(a_h) - \psi_1(\lambda a_h)\bigr] - \bigl[\psi_1(a_0) - \psi_1(\lambda a_0)\bigr].
\]  

It is straightforward to observe that \( V_0(\tilx) \) represents the difference between two instances of the function, evaluated at \( a_0 \) and \( a_h \), respectively. 

To simplify the notation, define 
\( H_2(x)=\psi_1(x)-\psi_1(\lambda x) \). Using the recurrence
\(\psi_1(x+1)-\psi_1(x)=-1/x^2\) and the integral representation of
\(\psi_1(\cdot)\), we have
\[
\begin{aligned}
H_2(x+1)-H_2(x)
&=
\bigl[\psi_1(x+1)-\psi_1(x)\bigr]
-
\bigl[\psi_1(\lambda(x+1))-\psi_1(\lambda x)\bigr] \\
&=
-\frac{1}{x^2}
+
\int_{0}^{\infty}
\frac{t e^{-\lambda x t}\bigl(1-e^{-\lambda t}\bigr)}
     {1-e^{-t}}\,dt \\
&\le
-\frac{1}{x^2}
+
\lambda\int_{0}^{\infty} t e^{-\lambda x t}\,dt \\
&=
-\frac{1}{x^2}+\frac{1}{\lambda x^2}<0 ,
\end{aligned}
\]
where the bound
\[
\frac{1-e^{-\lambda t}}{1-e^{-t}}\le \lambda
\]
holds for \(\lambda>1\). Therefore, \(H_2(x)\) is strictly decreasing
along unit increments. Since \(a_h-a_0=x_h>0\), we have
\(H_2(a_h)<H_2(a_0)\). Hence,
\[
V_0(\tilx)=H_2(a_h)-H_2(a_0)<0.
\]



Similarly, substituting \eqref{eq:properties} into \( V_1(\tilx) \), we obtain:  
\[
V_1(\tilx) = \bigl[\psi_1(b_h) - \psi_1\bigl(\frac{\lambda}{\lambda - 1} b_h\bigr)\bigr] - \bigl[\psi_1(b_0) - \psi_1\bigl(\frac{\lambda}{\lambda - 1} b_0\bigr)\bigr].
\]  

Denote \( \lambda^{\prime} = \frac{\lambda}{\lambda - 1} > 1 \). Similar to the proof for \( H_2(x) \), the function \( H_3(x) = \psi_1(x) - \psi_1(\lambda^{\prime} x) \), where \( \lambda^{\prime} > 1 \), is strictly decreasing. Since \( b_h > b_0 \), we conclude:  
\[
V_1(\tilx) = H_3(b_h) - H_3(b_0) < 0.
\]

To summarize, we have demonstrated that:  
\[
\begin{aligned}
-\Bigl[ \mathbb{E}_{p_{h}}\!\bigl\{\log(1-\theta)\bigr\} - \mathbb{E}_{p_{0}}\!\bigl\{\log(1-\theta)\bigr\} \Bigr] = -E_1(\tilx) &< 0, \\
-\Bigl[ \mathbb{E}_{p_{h}}\!\bigl\{\log\theta\bigr\} - \mathbb{E}_{p_{0}}\!\bigl\{\log\theta\bigr\} \Bigr] = -E_0(\tilx) &< 0, \\
\Bigl[ \operatorname{Var}_{p_{h}}\!\bigl\{\log(1-\theta)\bigr\} - \operatorname{Var}_{p_{0}}\!\bigl\{\log(1-\theta)\bigr\} \Bigr] = V_1(\tilx) &< 0, \\
\Bigl[ \operatorname{Var}_{p_{h}}\!\bigl\{\log\theta\bigr\} - \operatorname{Var}_{p_{0}}\!\bigl\{\log\theta\bigr\} \Bigr] = V_0(\tilx) &< 0.
\end{aligned}
\]  
Thus, \( k(\tilx) < 0 \).

\end{proof}

\yx
\section{Binary-Endpoint Illustration of the WOW Procedure}

\renewcommand{\thealgorithm}{S\arabic{algorithm}}
\setcounter{algorithm}{0}

\begin{algorithm}[htbp]
\small
\caption{Implementation of Algorithm~1 for Binary Endpoint Borrowing}
\label{alg:wow-general}
\begin{algorithmic}[1]
\Require Historical data \(D_h=\{x_h, n_h\}\), concurrent control data \(D=\{x, n\}\),
sample space \(\Omega=\{0,\cdots,n\}\), priors \(\pi_0(\theta)=\text{Beta}(a,b)\) and \(\pi_h(\theta\mid D_h)=\text{Beta}(a + x_h, b + n_h - x_h)\),
and a user-preferred borrowing rule \(\mathcal A\) (e.g, \(\mathcal A_{SAM}\) for SAM prior)

\vspace{12pt}   
\Statex \textbf{Step 1: WAIC-optimized gating}

\State Evaluate
\[
k(x)=\mathrm{WAIC}_B(1,D,D_h)-\mathrm{WAIC}_B(0,D,D_h)=-I_1+I_2
\]
over possible \(x\in\Omega\), and construct the  borrowing exclusion region
\[
    \Omega_0
    =
    \{x\in\Omega:
    k(x)>0\}.
\]
\State Set the borrowing-supported region as
$G=\Omega\setminus\Omega_0$.

\vspace{12pt}   
\Statex \textbf{Step 2: external borrowing only if the gate is open.}

\If{\(x\in\Omega_0\)}
    \State Gate closed: set \(w_h=0\) and \(\wst=0\).
    \State \textbf{assign}
    \[
        p^\star(\theta\mid D,D_h) \leftarrow p_0(\theta\mid D).
    \]
\Else
    \State Gate open: compute the downstream prior weight
    \[
        w_h \leftarrow w_h^{\mathcal A}(D,D_h)
    \]
    using the prespecified rule \(\mathcal A\).
    \State Compute the corresponding posterior mixture weight
    \[
        \wst
        =
        \frac{w_h z_h}
        {w_h z_h+(1-w_h)z_0}.
    \]
    \State \textbf{assign}
    \[
        p^\star(\theta\mid D,D_h)
        \leftarrow
        \wst\,p_h(\theta\mid D,D_h)
        +
        (1-\wst)\,p_0(\theta\mid D).
    \]
\EndIf
\vspace{12pt}   

\State \Return $p^{\star}(\theta \mid D, D_h)$ \Statex (If {\(x\) is not observed under the prospective planning setting}, then \Return \(G=\Omega\setminus\Omega_0\) instead to apply WOW-\(\mathcal A\) for prospective use.)
\end{algorithmic}
\end{algorithm}
Algorithm~S1 illustrates the implementation of Algorithm~1 for the binary-endpoint setting. By Theorem~2, the WAIC gate does not require optimization over \(w_h\in[0,1]\); instead, it compares only the two boundary specifications \(w_h=0\) and \(w_h=1\). For binary endpoints, Theorem~3 further shows that the borrowing-supported region is a single interval \(G=[x_L^*,x_U^*]\), which can be pre-tabulated before trial
conduct. The downstream borrowing rule  \(\mathcal A\) (e.g, \(\mathcal A_{SAM}\) for SAM prior) is invoked only when the gate is open; otherwise, the analysis defaults to the no-borrowing posterior \(p_0(\theta\mid D)\).
In implementation,  \(\mathcal A\) can be any fixed-weight rule, such as \(\mathcal A_{Mix50}\) or by another data-adaptive rule, such as \(\mathcal A_{EB-rMAP}\). \xx


\section{Binary-Endpoint Illustration of the Borrowing Region and Downstream Weights}

Figure~2 in the main text displays the borrowing regions and weight profiles for the SAM and EB-rMAP priors across varying historical sample sizes \( n_h \in \{75, 150, 600\} \). In each case, the historical control group is assumed to have a response rate of 0.4, yielding \( x_h = n_h \times 0.4 \) responders. For the concurrent control group, we fix the sample size at \( n = 150 \). For the WOW gating procedure, the identified borrowing regions \(G=[x_L^*, x_U^*]\) are [43,78], [46,74], and [49,71] for \(n_h=75, 150\), and \(600\), respectively. 
As the size of the historical dataset increases, the borrowing region becomes narrower with fixed concurrent control samples, reflecting greater precision of the historical information and a stricter threshold for compatibility with the concurrent control data. In contrast, smaller historical sample sizes yield broader borrowing regions, permitting borrowing across a wider range of observed outcomes in the current trial.

The WOW gating step addresses several key limitations of existing adaptive borrowing methods. In the SAM prior (upper panel), the borrowing decision depends solely on the user-specified clinical threshold \(\delta\), not on the sample size of the historical data. As a result, SAM does not adapt to the precision of the historical information: it may assign substantial weight to noisy historical data or discount highly informative data, simply based on proximity in point estimates. This behavior is concerning, as larger historical datasets should warrant stricter compatibility before borrowing is allowed when the concurrent control sample size is fixed. Moreover, SAM’s borrowing weights are highly sensitive to the choice of \(\delta\), making the approach unstable and challenging to calibrate. In contrast, the WOW gating strategy adaptively adjusts the borrowing region based on both the degree of compatibility and the informativeness of the historical data. As the historical sample size \(n_h\) increases, WOW tightens the borrowing region progressively, permitting borrowing only when the historical data show strong compatibility with the concurrent control. 

The EB-rMAP prior (lower panel) offers more robustness to its tuning parameter \(\gamma\), but introduces a different issue: its borrowing behavior is asymmetric. Even when the current response count is close to the historical mean (e.g., slightly below 60 when \(\theta_h = 0.4\)), EB-rMAP often assigns weights close to zero, leading to overly cautious rejection of compatible data. This asymmetry can limit power in realistic settings. The WOW gating strategy offers a principled, data-driven, and interpretable solution that uses predictive performance to determine whether borrowing is admissible before applying a downstream mixture-prior procedure. By ensuring borrowing occurs only when supported by the data, WOW avoids instability from manual thresholding, accounts for the precision of historical information, and aligns with the intuitive principle that larger, more reliable datasets should be subject to stricter compatibility criteria. WOW provides a practical and transparent framework for adaptive borrowing that balances statistical rigor with real-world usability.

 \section{Gating Strategy for Continuous Endpoints}

\subsection{Mixture prior and posterior}

Let the concurrent control observations be
\(y_1,\dots ,y_n \stackrel{\mathrm{iid}}{\sim} N(\theta,\sigma^2)\),
with \(\sigma^2\) known. Denote $\overline y=\frac{\sum_{i=1}^{n}y_i}{n}$.  Historical data
\(D_h=\{y_{h1},\dots ,y_{h n_h}\}\) have empirical mean
\(\overline y_h\) and unbiased variance estimate \(s^2\).  Following the structure of the binary case, we assign to \(\theta\) a normal–mixture prior
\[
\pi(\theta)\;=\;w_h\,\pi_h(\theta)\;+\;(1-w_h)\,\pi_0(\theta),
\]
where
\[
\pi_h(\theta)=N\!\left(\overline y_h,\;\frac{s^2}{n_h}\right),
\qquad
\pi_0(\theta)=N\!\bigl(\overline y_h,\sigma_0^{2}\bigr),
\quad
\sigma_0^{2}\gg s^2/n_h .
\]
The informative component \(\pi_h(\theta)\) is the approximate posterior distribution of 
\(\theta\) under the Jeffreys prior
\(p(\theta,\sigma^2)\propto\sigma^{-2}\), derived solely from the historical data \(D_h\). The variance $\sigma_h^{2}=s^2/n_h$ reflects the precision of the historical empirical mean $\overline y_h$. The noninformative component $\pi_0(\theta)$ represents minimal prior information.

The posterior can be derived as the mixture: 

\begin{equation}
\label{eq:suppDmix}
p(\theta\mid D,D_h)
  =w_h^{\ast}\,p_h(\theta\mid D,D_h)
   +(1-w_h^{\ast})\,p_0(\theta\mid D). 
\end{equation}

Here 
\[
p_h(\theta\mid D,D_h)=N(\mu_h,\tau_h^{2}),
\qquad
p_0(\theta\mid D)=N(\mu_0,\tau_0^{2}),
\]
where
\[
\mu_h
  =\frac{\overline y_h/\sigma_h^{2}+n\overline y/\sigma^{2}}
         {1/\sigma_h^{2}+n/\sigma^{2}},
\quad
\mu_0
  =\frac{\overline y_h/\sigma_0^{2}+n\overline y/\sigma^{2}}
         {1/\sigma_0^{2}+n/\sigma^{2}},
\]
\[
\bigl(\tau_h\bigr)^{-2}
      =\frac{1}{\sigma_h^{2}}+\frac{n}{\sigma^{2}},
\qquad
\bigl(\tau_0\bigr)^{-2}
      =\frac{1}{\sigma_0^{2}}+\frac{n}{\sigma^{2}}.
\]

 The posterior borrowing weight is 
\[
w_h^{\ast}
   =\frac{w_h\,z_h}{w_h\,z_h+(1-w_h)\,z_0},
\]
where 
\[
z_h=\phi\!\left(\overline y;\,\overline y_h,\;\sigma_h^{2}+\sigma^{2}/n\right),
\quad
z_0=\phi\!\left(\overline y;\,\overline y_h,\;\sigma_0^{2}+\sigma^{2}/n\right),
\]
with \(\phi(\cdot;\mu,\tau^{2})\) the normal density.



\subsection{Continuous endpoints: derivation of WAIC}

For $y_i\sim N(\theta, \sigma^2)$ with known $\sigma^2$, the log-likelihood is: $\log f(y_i \mid \theta) = -\frac{1}{2}\log(2\pi\sigma^2) - \frac{(y_i - \theta)^2}{2\sigma^2}.
$  Substituting the mixture posterior (\ref{eq:suppDmix}) in (\ref{eq:WAIC_quad_long}), we get the WAIC$_C$ for the continuous endpoints in quadratic form expression: 
\begin{equation}\label{eq:WAIC_continuous}
\mathrm{WAIC}_C\bigl(w_h,D,D_h\bigr)
   = -I_{4}{w_{h}^{*}}^{2}
     + I_{5}\,w_h^{\ast}
     + I_{6},
\end{equation}
where

$$
\begin{aligned}
I_4 \;=\;& 2 \Bigl[ 
          \sum_{i=1}^{n}\Bigl\{\mathbb{E}_{p_h}\!\bigl\{f(y_i\mid\theta)\bigr\}\Bigr\}^{2}
         + \sum_{i=1}^{n}\Bigl\{\mathbb{E}_{p_0}\!\bigl\{f(y_i\mid\theta)\bigr\}\Bigr\}^{2}
           -2 \sum_{i=1}^{n}
   \mathbb{E}_{p_h}\!\bigl\{f(y_i\mid\theta)\bigr\}\,
   \mathbb{E}_{p_0}\!\bigl\{f(y_i\mid\theta)\bigr\} \Bigr];\\[6pt]
I_5 \;=\;& 2\Bigl[
             \sum_{i=1}^{n}\mathbb{E}_{p_h}\!\bigl\{f^{2}(y_i\mid\theta)\bigr\} - 
             \sum_{i=1}^{n}\mathbb{E}_{p_0}\!\bigl\{f^{2}(y_i\mid\theta)\bigr\} -
             2 \sum_{i=1}^{n}
   \mathbb{E}_{p_h}\!\bigl\{f(y_i\mid\theta)\bigr\}\,
   \mathbb{E}_{p_0}\!\bigl\{f(y_i\mid\theta)\bigr\} \\[-2pt]
         & + \sum_{i=1}^{n}\Bigl\{\mathbb{E}_{p_h}\!\bigl\{f(y_i\mid\theta)\bigr\}\Bigr\}^{2} 
          - \sum_{i=1}^{n}\mathbb{E}_{p_h}\!\bigl\{f(y_i\mid\theta)\bigr\}
          +  \sum_{i=1}^{n}\mathbb{E}_{p_0}\!\bigl\{f(y_i\mid\theta)\bigr\}
           \Bigr].\\[2pt]
\end{aligned}
$$

$I_6$ is a constant independent of $\wst$. The terms in $I_4$ and $I_5$ can be computed in the following closed-form:

\[
\begin{aligned}
\sum_{i=1}^{n}\mathbb{E}_{p_h}[f(y_i\mid\theta)]
  &= -\,n\log\sigma
    -\frac{1}{2\sigma^{2}}
       \Bigl\{n{\tau_h}^{2}+n\mu_h^{2}-2\mu_h \sum_{i=1}^{n}y_i+\sum_{i=1}^{n}y_i^2\Bigr\},\\
\sum_{i=1}^{n}\mathbb{E}_{p_0}[f(y_i\mid\theta)]
  &= -\,n\log\sigma
    -\frac{1}{2\sigma^{2}}
       \Bigl\{n{\tau_0}^{2}+n\mu_0^{2}-2\mu_0 \sum_{i=1}^{n}y_i+\sum_{i=1}^{n}y_i^2\Bigr\},\\
\sum_{i=1}^{n}\mathbb{E}_{p_h}[f^{2}(y_i\mid\theta)]
 &= n\log^{2}\sigma
    +\frac{\log\sigma}{\sigma^{2}}
       \bigl[\sum_{i=1}^{n}y_i^2-2\mu_h \sum_{i=1}^{n}y_i+n(\tau_h^{2}+\mu_h^{2})\bigr]\\
 &\quad
    +\frac{1}{4\sigma^{4}}
       \Bigl[\sum_{i=1}^{n}y_i^4-4\mu_h \sum_{i=1}^{n}y_i^3
             +6(\tau_h^{2}+\mu_h^{2})\sum_{i=1}^{n}y_i^2
             -4\mu_h(3\tau_h^{2}+\mu_h^{2})\sum_{i=1}^{n}y_i\\
 &\hspace{46mm}
             +n\bigl(3\tau_h^{4}+6\mu_h^{2}\tau_h^{2}+\mu_h^{4}\bigr)
       \Bigr],\\
\sum_{i=1}^{n}\mathbb{E}_{p_0}[f^{2}(y_i\mid\theta)]
 &= n\log^{2}\sigma
    +\frac{\log\sigma}{\sigma^{2}}
       \bigl[\sum_{i=1}^{n}y_i^2-2\mu_0 \sum_{i=1}^{n}y_i+n(\tau_0^{2}+\mu_0^{2})\bigr]\\
 &\quad
    +\frac{1}{4\sigma^{4}}
       \Bigl[\sum_{i=1}^{n}y_i^4-4\mu_0 \sum_{i=1}^{n}y_i^3
             +6(\tau_0^{2}+\mu_0^{2})\sum_{i=1}^{n}y_i^2
             -4\mu_0(3\tau_0^{2}+\mu_0^{2})\sum_{i=1}^{n}y_i\\
 &\hspace{46mm}
             +n\bigl(3\tau_0^{4}+6\mu_0^{2}\tau_0^{2}+\mu_0^{4}\bigr)
       \Bigr],\\
\end{aligned}
\]

\[
\begin{aligned}
\sum_{i=1}^{n}
   \mathbb{E}_{p_h}[f(y_i\mid\theta)]\,
   \mathbb{E}_{p_0}[f(y_i\mid\theta)]
 &= n \Bigl[\log\sigma+\{\mu_h^{2}+\tau_h^{2}\}/(2\sigma^{2})\Bigr]\Bigl[\log\sigma+\{\mu_0^{2}+\tau_0^{2}\}/(2\sigma^{2})\Bigr] \\
    &-\frac{\Bigl[\log\sigma+\{\mu_h^{2}+\tau_h^{2}\}/(2\sigma^{2})\Bigr]\mu_0}{\sigma^{2}}\,\Bigl(\sum_{i=1}^{n}y_i\Bigr) \\[4pt]
    &-\frac{\Bigl[\log\sigma+\{\mu_0^{2}+\tau_0^{2}\}/(2\sigma^{2})\Bigr]\mu_h}{\sigma^{2}}\,\Bigl(\sum_{i=1}^{n}y_i\Bigr) \\[4pt]
 &+\Bigl[
         \frac{\log\sigma+\{\mu_h^{2}+\tau_h^{2}\}/(2\sigma^{2})}{2\sigma^{2}}
      \Bigr]\Bigl(\sum_{i=1}^{n}y_i^4\Bigr)\\
&+\Bigl[\frac{\log\sigma+\{\mu_0^{2}+\tau_0^{2}\}/(2\sigma^{2})}{2\sigma^{2}}
      \Bigr]\Bigl(\sum_{i=1}^{n}y_i^4\Bigr)\\
&+\frac{\mu_h\mu_0}{\sigma^{4}}\Bigl(\sum_{i=1}^{n}y_i^4\Bigr)\\
    &-\frac{\mu_h+\mu_0}{2\sigma^{4}}\,\Bigl(\sum_{i=1}^{n}y_i^3\Bigr)
    +\frac{1}{4\sigma^{4}}\Bigl(\sum_{i=1}^{n}y_i^4\Bigr).
\end{aligned}
\]

\subsection{Continuous endpoints: WOW gating strategy}  

Extending the approach from the binary endpoint case, we propose a pre-specified borrowing rule for continuous endpoints. Specifically, the borrowing region is determined by comparing \( \text{WAIC}(0, D, D_h) \) and \( \text{WAIC}(1, D, D_h) \). Before observing the data \( D \), the sample mean \( \overline{y} \) and standard deviation \( \sigma \) are random variables that depend on the realizations of \( D \). For every possible realization of \( D \), thresholds \( \overline{y}_L(\sigma) \) and \( \overline{y}_U(\sigma) \) are computed to define the borrowing region based on the historical data \( D_h \). Borrowing occurs if the sample mean satisfies \( \overline{y} \in [\overline{y}_L(\sigma), \overline{y}_U(\sigma)] \), given the corresponding observed standard deviation \( \sigma \).  

Importantly, as in the binary case, for each fixed value of \( \sigma \), the borrowing region remains a single continuous interval as determined by WAIC-based compatibility. This ensures that borrowing decisions account for both central tendency (mean) and variability (standard deviation) in the observed data. Furthermore, similar to the binary case, the WOW gating strategy is independent of downstream borrowing methods, providing a robust, principled framework for data-driven borrowing decisions.

\section{Simulation Study Details and Additional Binary Endpoint Results}
\label{sec:supp-simulation-binary}
\yx
This section provides implementation details and additional results for the binary-endpoint simulation study in Section 5 of the main text.  

The comparator methods were SAM, EB-rMAP, Mix50, PIP, their WOW-gated versions, no borrowing (NP), and test-then-pool (TTP). The tuning parameters were $\delta=0.15$ for SAM, $\gamma=0.8$ for EB-rMAP, and $\alpha=0.05$ for PIP. TTP used Fisher's exact test of $H_0:\theta_h=\theta$ at level $0.05$, with full pooling when the null was not rejected and no borrowing otherwise. 

Performance for estimating  the concurrent control response rate \(\theta\) was summarized by relative bias, ratio MSE, absolute bias, coverage probability, and interval score. Relative bias was computed as the difference between the posterior mean under a given method and the posterior mean under NP; ratio MSE was defined as \(\mathrm{MSE}(\mathrm{method})/\mathrm{MSE}(\mathrm{NP})\); and absolute bias was defined as \(|\mathbb{E}(\theta\mid D,D_h)-\theta|\).

Interval-estimation performance for \(\theta\) was evaluated using coverage probability and interval score in two settings with \(\theta_h=0.4\) and \(n_h=600\): one with fixed historical data and one with stochastic historical data. Coverage probability was defined as the proportion of simulation replicates in which the nominal 95\% posterior credible interval contained the true value of \(\theta\). The interval score for a 95\% interval \([L,U]\) was computed as
\[
(U-L) + \frac{2}{0.05}(L-\theta)\mathbf{1}\{\theta<L\}
+ \frac{2}{0.05}(\theta-U)\mathbf{1}\{\theta>U\},
\]
where smaller values indicate better interval performance.

For the treatment effect \(\theta_t-\theta\), we evaluated bias and coverage in the same two settings with \(\theta_h=0.4\) and \(n_h=600\), considering both fixed and stochastic historical data.

Figure~\ref{fig:supp:bin:theta-absolute-bias} reports absolute bias for the concurrent control response rate $\theta$. The NP curve in  Figure~\ref{fig:supp:bin:theta-absolute-bias} is not flat because NP estimates \(\theta\) using the posterior mean under a uniform \(\mathrm{Beta}(1,1)\) prior, \(\hat{\theta}=(x+1)/(n+2)\). Since this estimator is shrunk toward the prior mean \(1/2\), its bias can be computed: $
\mathbb{E}[\hat{\theta}]-\theta=\frac{1-2\theta}{n+2}.
$
Thus, the absolute bias is proportional to \(|1/2-\theta|\), decreasing as \(\theta\) approaches \(0.5\). This pattern reflects shrinkage induced by the baseline prior and is unrelated to the historical rate \(\theta_h\), since NP performs no borrowing.
Consistent with the relative-bias and ratio-MSE results in the main text, the WOW-gated methods generally reduce absolute bias under prior--data conflict, especially when the historical sample size is large. Tables~\ref{tab:supp:bin:theta-c-coverage} and~\ref{tab:supp:bin:theta-c-interval-score} report coverage probability and interval score for $\theta$, respectively. 
Tables~\ref{tab:supp:bin:treatment-effect-bias} and~\ref{tab:supp:bin:treatment-effect-coverage} report bias and coverage for the treatment effect \(\theta_t-\theta\). 
For trial-level operating characteristics, Tables~\ref{tab:supp:bin:prespecified-power} and~\ref{tab:supp:bin:type1-error} report pre-specified power and type I error under the fixed posterior probability cutoff $C=0.95$. In all supplementary tables, boldface indicates the better-performing method within each original/WOW-gated pair for a given scenario. 
Overall, these supplementary results support the main calibrated-power findings: WOW-gated methods reduce the adverse impact of incompatible borrowing while preserving comparable performance when the historical and concurrent controls are compatible.

\begin{figure}[htbp]
\centering
\includegraphics[width=1\textwidth]{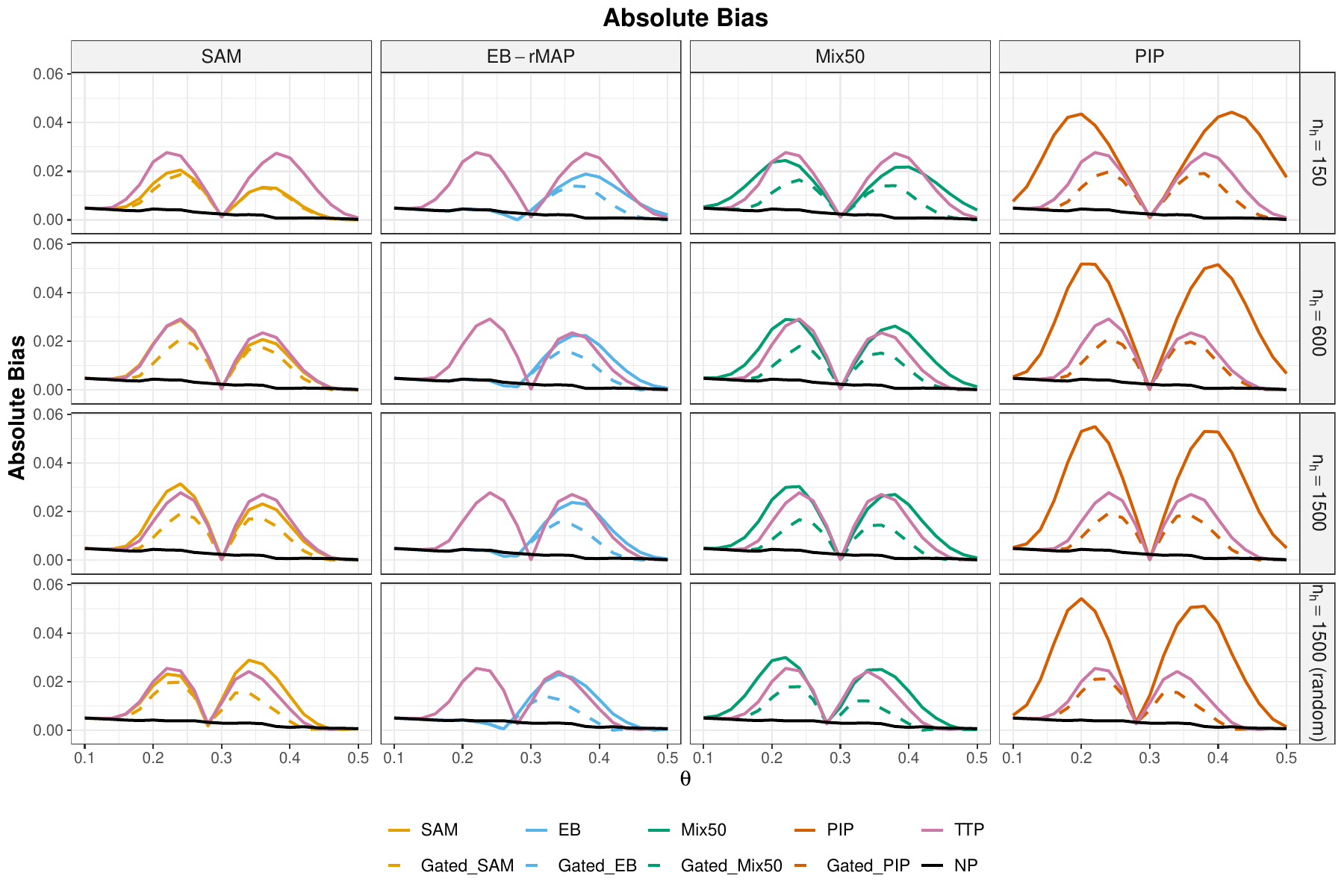}
\caption{Absolute bias in estimating the concurrent control response rate $\theta$ for the binary endpoint. Original and WOW-gated versions of SAM, EB-rMAP, Mix50, and PIP are compared across historical sample sizes, with an additional stochastic historical-data scenario. TTP is included as a benchmark.}
\label{fig:supp:bin:theta-absolute-bias}
\end{figure}

\begin{table}[htbp]
\centering
\caption{Coverage probability for estimating the concurrent control response rate $\theta$ in the binary endpoint simulations.}
\label{tab:supp:bin:theta-c-coverage}
\resizebox{\textwidth}{!}{%
\begin{tabular}{ccccccccccccc}
\toprule
Scenario & $\theta$ & $\theta_t$ & NP & SAM & Gated SAM & EB-rMAP & Gated EB-rMAP & Mix50 & Gated Mix50 & PIP & Gated PIP & TTP \\
\midrule
\multicolumn{13}{c}{\textbf{Case 1: $\theta_h=0.4$ with fixed $\bD_h$}} \\
\addlinespace[0.5ex]
1.1 & 0.30 & 0.40 & 0.949 & 0.767 & \textbf{0.768} & \textbf{0.949} & 0.947 & \textbf{0.866} & 0.861 & 0.503 & \textbf{0.704} & 0.583 \\
1.2 & 0.32 & 0.42 & 0.961 & 0.637 & 0.637 & 0.922 & 0.922 & \textbf{0.820} & 0.815 & 0.328 & \textbf{0.510} & 0.372 \\
1.3 & 0.40 & 0.50 & 0.950 & \textbf{0.985} & 0.950 & \textbf{0.969} & 0.950 & \textbf{0.993} & 0.950 & \textbf{1.000} & 0.950 & 0.979 \\
1.4 & 0.46 & 0.56 & 0.950 & 0.514 & 0.514 & \textbf{0.773} & 0.762 & \textbf{0.815} & 0.798 & 0.214 & \textbf{0.338} & 0.208 \\
1.5 & 0.54 & 0.64 & 0.950 & 0.917 & \textbf{0.918} & 0.927 & \textbf{0.929} & 0.902 & \textbf{0.919} & 0.728 & \textbf{0.921} & 0.864 \\
\specialrule{\heavyrulewidth}{1pt}{1pt}
\multicolumn{13}{c}{\textbf{Case 2: $\theta_h=0.4$ with random $\bD_h$}} \\
\addlinespace[0.5ex]
2.1 & 0.30 & 0.40 & 0.951 & 0.769 & \textbf{0.770} & \textbf{0.950} & 0.949 & \textbf{0.864} & 0.860 & 0.511 & \textbf{0.715} & 0.590 \\
2.2 & 0.32 & 0.42 & 0.962 & 0.652 & 0.652 & 0.917 & 0.917 & \textbf{0.824} & 0.818 & 0.325 & \textbf{0.524} & 0.375 \\
2.3 & 0.40 & 0.50 & 0.948 & \textbf{0.984} & 0.948 & \textbf{0.969} & 0.949 & \textbf{0.996} & 0.948 & \textbf{0.999} & 0.948 & 0.979 \\
2.4 & 0.46 & 0.56 & 0.951 & \textbf{0.511} & 0.509 & \textbf{0.770} & 0.760 & \textbf{0.813} & 0.796 & 0.216 & \textbf{0.322} & 0.209 \\
2.5 & 0.54 & 0.64 & 0.951 & 0.921 & 0.921 & 0.930 & 0.930 & 0.905 & \textbf{0.922} & 0.735 & \textbf{0.923} & 0.866 \\
\bottomrule
\end{tabular}%
}
\vspace{0.5ex}

\end{table}

\begin{table}[htbp]
\centering
\caption{Interval score for the concurrent control response rate \(\theta\) in the binary endpoint simulations.}
\label{tab:supp:bin:theta-c-interval-score}
\resizebox{\textwidth}{!}{%
\begin{tabular}{ccccccccccccc}
\toprule
Scenario & $\theta$ & $\theta_t$ & NP & SAM & Gated SAM & EB-rMAP & Gated EB-rMAP & Mix50 & Gated Mix50 & PIP & Gated PIP & TTP \\
\midrule
\multicolumn{13}{c}{\textbf{Case 1: $\theta_h=0.4$ with fixed $\bD_h$}} \\
\addlinespace[0.5ex]
1.1 & 0.30 & 0.40 & 0.171 & 0.507 & \textbf{0.503} & 0.205 & 0.205 & 0.288 & \textbf{0.271} & 1.086 & \textbf{0.741} & 0.971 \\
1.2 & 0.32 & 0.42 & 0.170 & 0.617 & \textbf{0.614} & 0.244 & 0.244 & 0.293 & \textbf{0.284} & 0.989 & \textbf{0.804} & 0.955 \\
1.3 & 0.40 & 0.50 & 0.187 & \textbf{0.098} & 0.110 & \textbf{0.127} & 0.139 & \textbf{0.104} & 0.132 & \textbf{0.072} & 0.106 & 0.098 \\
1.4 & 0.46 & 0.56 & 0.191 & 0.462 & \textbf{0.460} & 0.310 & \textbf{0.308} & 0.260 & 0.260 & 0.568 & \textbf{0.556} & 0.584 \\
1.5 & 0.54 & 0.64 & 0.191 & 0.294 & \textbf{0.291} & 0.241 & \textbf{0.231} & 0.267 & \textbf{0.247} & 0.796 & \textbf{0.372} & 0.558 \\
\specialrule{\heavyrulewidth}{1pt}{1pt}
\multicolumn{13}{c}{\textbf{Case 2: $\theta_h=0.4$ with random $\bD_h$}} \\
\addlinespace[0.5ex]
2.1 & 0.30 & 0.40 & 0.170 & 0.511 & \textbf{0.507} & 0.202 & 0.202 & 0.291 & \textbf{0.274} & 1.071 & \textbf{0.722} & 0.957 \\
2.2 & 0.32 & 0.42 & 0.172 & 0.601 & \textbf{0.597} & 0.248 & 0.248 & 0.297 & \textbf{0.288} & 0.987 & \textbf{0.788} & 0.951 \\
2.3 & 0.40 & 0.50 & 0.184 & \textbf{0.095} & 0.108 & \textbf{0.125} & 0.137 & \textbf{0.103} & 0.130 & \textbf{0.072} & 0.104 & 0.096 \\
2.4 & 0.46 & 0.56 & 0.187 & 0.459 & \textbf{0.457} & 0.311 & \textbf{0.309} & \textbf{0.258} & 0.259 & 0.569 & \textbf{0.559} & 0.584 \\
2.5 & 0.54 & 0.64 & 0.187 & 0.289 & \textbf{0.286} & 0.235 & \textbf{0.225} & 0.262 & \textbf{0.241} & 0.793 & \textbf{0.365} & 0.555 \\
\bottomrule
\end{tabular}%
}
\vspace{0.5ex}

\end{table}

\xx

\begin{table}[htbp]
\centering
\caption{Bias for the treatment effect $\theta_t-\theta$ in the binary endpoint simulations. Values are reported on the $10^{-2}$ scale.}
\label{tab:supp:bin:treatment-effect-bias}
\resizebox{\textwidth}{!}{%
\begin{tabular}{ccccccccccccc}
\toprule
Scenario & $\theta$ & $\theta_t$ & NP & SAM & Gated SAM & EB-rMAP & Gated EB-rMAP & Mix50 & Gated Mix50 & PIP & Gated PIP & TTP \\
\midrule
\multicolumn{13}{c}{\textbf{Case 1: $\theta_h=0.4$ with fixed $\bD_h$}} \\
\addlinespace[0.5ex]
1.1 & 0.30 & 0.40 & -0.28 & -1.83 & \textbf{-1.33} & -0.29 & -0.29 & -2.69 & \textbf{-1.17} & -5.51 & \textbf{-1.44} & -2.23 \\
1.2 & 0.32 & 0.42 & -0.36 & -2.49 & \textbf{-1.95} & -0.39 & -0.39 & -3.07 & \textbf{-1.71} & -5.34 & \textbf{-2.07} & -2.97 \\
1.3 & 0.40 & 0.50 & -0.36 & -0.20 & \textbf{-0.19} & 0.44 & \textbf{0.39} & -0.20 & -0.20 & \textbf{-0.15} & -0.17 & -0.18 \\
1.4 & 0.46 & 0.56 & -0.09 & 2.13 & \textbf{1.67} & 1.90 & \textbf{1.31} & 2.58 & \textbf{1.44} & 4.28 & \textbf{1.83} & 2.68 \\
1.5 & 0.54 & 0.64 & 0.12 & 0.57 & \textbf{0.34} & 0.77 & \textbf{0.28} & 1.53 & \textbf{0.31} & 4.24 & \textbf{0.37} & 0.74 \\
\specialrule{\heavyrulewidth}{1pt}{1pt}
\multicolumn{13}{c}{\textbf{Case 2: $\theta_h=0.4$ with random $\bD_h$}} \\
\addlinespace[0.5ex]
2.1 & 0.30 & 0.40 & -0.27 & -1.78 & \textbf{-1.27} & -0.28 & -0.28 & -2.67 & \textbf{-1.12} & -5.49 & \textbf{-1.37} & -2.18 \\
2.2 & 0.32 & 0.42 & -0.25 & -2.36 & \textbf{-1.78} & -0.28 & -0.28 & -2.95 & \textbf{-1.55} & -5.24 & \textbf{-1.91} & -2.87 \\
2.3 & 0.40 & 0.50 & -0.20 & -0.08 & \textbf{-0.07} & 0.57 & \textbf{0.52} & -0.08 & -0.08 & \textbf{-0.05} & -0.06 & -0.06 \\
2.4 & 0.46 & 0.56 & -0.11 & 2.14 & \textbf{1.72} & 1.89 & \textbf{1.34} & 2.56 & \textbf{1.49} & 4.25 & \textbf{1.90} & 2.65 \\
2.5 & 0.54 & 0.64 & 0.05 & 0.50 & \textbf{0.26} & 0.69 & \textbf{0.20} & 1.44 & \textbf{0.24} & 4.15 & \textbf{0.30} & 0.67 \\
\bottomrule
\end{tabular}%
}
\vspace{0.5ex}

\end{table}

\begin{table}[htbp]
\centering
\caption{Coverage for the treatment effect \(\theta_t-\theta\) in the binary endpoint simulations.}
\label{tab:supp:bin:treatment-effect-coverage}
\resizebox{\textwidth}{!}{%
\begin{tabular}{ccccccccccccc}
\toprule
Scenario & $\theta$ & $\theta_t$ & NP & SAM & Gated SAM & EB-rMAP & Gated EB-rMAP & Mix50 & Gated Mix50 & PIP & Gated PIP & TTP \\
\midrule
\multicolumn{13}{c}{\textbf{Case 1: $\theta_h=0.4$ with fixed $\bD_h$}} \\
\addlinespace[0.5ex]
1.1 & 0.30 & 0.40 & 0.950 & 0.815 & 0.815 & \textbf{0.945} & 0.944 & \textbf{0.879} & 0.876 & 0.606 & \textbf{0.760} & 0.666 \\
1.2 & 0.32 & 0.42 & 0.950 & \textbf{0.767} & 0.765 & \textbf{0.927} & 0.926 & \textbf{0.868} & 0.854 & 0.629 & \textbf{0.707} & 0.641 \\
1.3 & 0.40 & 0.50 & 0.953 & \textbf{0.959} & 0.953 & \textbf{0.953} & 0.947 & \textbf{0.971} & 0.957 & \textbf{0.970} & 0.952 & 0.960 \\
1.4 & 0.46 & 0.56 & 0.957 & \textbf{0.823} & 0.818 & \textbf{0.893} & 0.881 & \textbf{0.903} & 0.886 & 0.761 & \textbf{0.781} & 0.754 \\
1.5 & 0.54 & 0.64 & 0.955 & 0.924 & \textbf{0.931} & 0.933 & \textbf{0.942} & 0.915 & \textbf{0.935} & 0.752 & \textbf{0.918} & 0.871 \\
\specialrule{\heavyrulewidth}{1pt}{1pt}
\multicolumn{13}{c}{\textbf{Case 2: $\theta_h=0.4$ with random $\bD_h$}} \\
\addlinespace[0.5ex]
2.1 & 0.30 & 0.40 & 0.952 & 0.816 & \textbf{0.818} & 0.945 & 0.945 & \textbf{0.880} & 0.879 & 0.613 & \textbf{0.767} & 0.674 \\
2.2 & 0.32 & 0.42 & 0.954 & \textbf{0.774} & 0.773 & 0.930 & 0.930 & \textbf{0.875} & 0.864 & 0.634 & \textbf{0.714} & 0.647 \\
2.3 & 0.40 & 0.50 & 0.955 & \textbf{0.966} & 0.956 & \textbf{0.957} & 0.951 & \textbf{0.975} & 0.960 & \textbf{0.974} & 0.955 & 0.962 \\
2.4 & 0.46 & 0.56 & 0.955 & \textbf{0.819} & 0.817 & \textbf{0.888} & 0.879 & \textbf{0.901} & 0.884 & 0.763 & \textbf{0.774} & 0.752 \\
2.5 & 0.54 & 0.64 & 0.956 & 0.922 & \textbf{0.929} & 0.935 & \textbf{0.941} & 0.916 & \textbf{0.934} & 0.753 & \textbf{0.918} & 0.868 \\
\bottomrule
\end{tabular}%
}
\vspace{0.5ex}

\end{table}

\begin{table}[htbp]
\centering
\caption{Power for binary endpoints under the pre-specified posterior probability cutoff $C=0.95$.}
\label{tab:supp:bin:prespecified-power}
\resizebox{\textwidth}{!}{%
\begin{tabular}{ccccccccccccc}
\toprule
Scenario & $\theta$ & $\theta_t$ & NP & SAM & Gated SAM & EB-rMAP & Gated EB-rMAP & Mix50 & Gated Mix50 & PIP & Gated PIP & TTP \\
\midrule
\multicolumn{13}{c}{\textbf{Case 1: $\theta_h=0.3$ with fixed $\bD_h$}} \\
\addlinespace[0.5ex]
1.1 & 0.16 & 0.26 & 0.777 & 0.764 & \textbf{0.778} & 0.777 & 0.777 & 0.632 & \textbf{0.776} & 0.433 & \textbf{0.777} & 0.777 \\
1.2 & 0.18 & 0.28 & 0.750 & 0.691 & \textbf{0.751} & 0.752 & 0.752 & 0.514 & \textbf{0.750} & 0.298 & \textbf{0.749} & 0.731 \\
1.3 & 0.20 & 0.30 & 0.725 & 0.580 & \textbf{0.693} & 0.724 & \textbf{0.726} & 0.439 & \textbf{0.697} & 0.252 & \textbf{0.694} & 0.641 \\
1.4 & 0.22 & 0.32 & 0.715 & 0.523 & \textbf{0.644} & 0.718 & 0.718 & 0.449 & \textbf{0.649} & 0.338 & \textbf{0.643} & 0.560 \\
1.5 & 0.30 & 0.40 & 0.672 & 0.915 & 0.915 & 0.884 & 0.884 & 0.882 & 0.882 & \textbf{0.943} & 0.928 & 0.935 \\
1.6 & 0.34 & 0.44 & 0.652 & \textbf{0.837} & 0.827 & \textbf{0.830} & 0.809 & \textbf{0.843} & 0.814 & \textbf{0.965} & 0.854 & 0.887 \\
1.7 & 0.44 & 0.54 & 0.646 & 0.647 & 0.647 & \textbf{0.651} & 0.647 & \textbf{0.653} & 0.647 & \textbf{0.684} & 0.646 & 0.648 \\
\specialrule{\heavyrulewidth}{1pt}{1pt}
\multicolumn{13}{c}{\textbf{Case 2: $\theta_h=0.4$ with fixed $\bD_h$}} \\
\addlinespace[0.5ex]
2.1 & 0.24 & 0.34 & 0.701 & 0.697 & \textbf{0.700} & \textbf{0.702} & 0.701 & 0.603 & \textbf{0.700} & 0.448 & \textbf{0.701} & 0.700 \\
2.2 & 0.26 & 0.36 & 0.686 & 0.670 & \textbf{0.685} & 0.686 & \textbf{0.687} & 0.503 & \textbf{0.686} & 0.325 & \textbf{0.685} & 0.680 \\
2.3 & 0.30 & 0.40 & 0.673 & 0.539 & \textbf{0.628} & 0.672 & 0.672 & 0.375 & \textbf{0.629} & 0.207 & \textbf{0.624} & 0.575 \\
2.4 & 0.32 & 0.42 & 0.654 & 0.484 & \textbf{0.566} & \textbf{0.661} & 0.658 & 0.401 & \textbf{0.572} & 0.285 & \textbf{0.563} & 0.487 \\
2.5 & 0.40 & 0.50 & 0.628 & 0.893 & 0.893 & 0.850 & 0.850 & \textbf{0.858} & 0.857 & \textbf{0.919} & 0.906 & 0.916 \\
2.6 & 0.46 & 0.56 & 0.649 & \textbf{0.753} & 0.725 & \textbf{0.743} & 0.714 & \textbf{0.786} & 0.722 & \textbf{0.935} & 0.732 & 0.806 \\
2.7 & 0.54 & 0.64 & 0.673 & \textbf{0.675} & 0.674 & 0.677 & 0.677 & \textbf{0.692} & 0.673 & \textbf{0.723} & 0.674 & 0.675 \\
\specialrule{\heavyrulewidth}{1pt}{1pt}
\multicolumn{13}{c}{\textbf{Case 3: $\theta_h=0.4$ with random $\bD_h$}} \\
\addlinespace[0.5ex]
3.1 & 0.24 & 0.34 & 0.702 & 0.697 & \textbf{0.701} & \textbf{0.703} & 0.702 & 0.604 & \textbf{0.702} & 0.455 & \textbf{0.701} & 0.701 \\
3.2 & 0.26 & 0.36 & 0.689 & 0.671 & \textbf{0.688} & 0.688 & 0.688 & 0.510 & \textbf{0.687} & 0.321 & \textbf{0.688} & 0.684 \\
3.3 & 0.30 & 0.40 & 0.666 & 0.544 & \textbf{0.629} & 0.667 & 0.667 & 0.376 & \textbf{0.630} & 0.214 & \textbf{0.627} & 0.574 \\
3.4 & 0.32 & 0.42 & 0.658 & 0.496 & \textbf{0.582} & \textbf{0.664} & 0.662 & 0.412 & \textbf{0.587} & 0.295 & \textbf{0.579} & 0.496 \\
3.5 & 0.40 & 0.50 & 0.641 & \textbf{0.900} & 0.899 & 0.855 & 0.855 & \textbf{0.862} & 0.861 & \textbf{0.930} & 0.913 & 0.924 \\
3.6 & 0.46 & 0.56 & 0.645 & \textbf{0.753} & 0.730 & \textbf{0.740} & 0.714 & \textbf{0.790} & 0.726 & \textbf{0.937} & 0.737 & 0.803 \\
3.7 & 0.54 & 0.64 & 0.663 & 0.663 & \textbf{0.664} & \textbf{0.667} & 0.664 & \textbf{0.679} & 0.662 & \textbf{0.713} & 0.663 & 0.663 \\
\bottomrule
\end{tabular}%
}
\vspace{0.5ex}

\end{table}

\begin{table}[htbp]
\centering
\caption{Type I error for binary endpoints under the pre-specified posterior probability cutoff $C=0.95$.}
\label{tab:supp:bin:type1-error}
\resizebox{\textwidth}{!}{%
\begin{tabular}{ccccccccccccc}
\toprule
Scenario & $\theta$ & $\theta_t$ & NP & SAM & Gated SAM & EB-rMAP & Gated EB-rMAP & Mix50 & Gated Mix50 & PIP & Gated PIP & TTP \\
\midrule
\multicolumn{13}{c}{\textbf{Case 1: $\theta_h=0.3$ with fixed $\bD_h$}} \\
\addlinespace[0.5ex]
1.1 & 0.16 & 0.16 & 0.043 & 0.043 & \textbf{0.043} & 0.043 & \textbf{0.043} & \textbf{0.042} & 0.043 & \textbf{0.042} & 0.043 & 0.043 \\
1.2 & 0.18 & 0.18 & 0.045 & \textbf{0.044} & 0.045 & 0.046 & 0.045 & \textbf{0.041} & 0.045 & \textbf{0.034} & 0.044 & 0.046 \\
1.3 & 0.20 & 0.20 & 0.048 & 0.048 & 0.048 & 0.048 & 0.048 & \textbf{0.039} & 0.048 & \textbf{0.025} & 0.048 & 0.048 \\
1.4 & 0.22 & 0.22 & 0.053 & 0.051 & 0.052 & 0.054 & \textbf{0.052} & \textbf{0.025} & 0.053 & \textbf{0.010} & 0.052 & 0.052 \\
1.5 & 0.30 & 0.30 & 0.050 & \textbf{0.034} & 0.045 & 0.059 & 0.059 & \textbf{0.028} & 0.041 & \textbf{0.032} & 0.046 & 0.037 \\
1.6 & 0.34 & 0.34 & 0.050 & \textbf{0.182} & 0.183 & 0.149 & 0.150 & 0.137 & 0.136 & 0.218 & \textbf{0.211} & 0.212 \\
1.7 & 0.44 & 0.44 & 0.049 & 0.079 & \textbf{0.071} & 0.082 & \textbf{0.067} & 0.094 & \textbf{0.070} & 0.243 & \textbf{0.075} & 0.095 \\
\specialrule{\heavyrulewidth}{1pt}{1pt}
\multicolumn{13}{c}{\textbf{Case 2: $\theta_h=0.4$ with fixed $\bD_h$}} \\
\addlinespace[0.5ex]
2.1 & 0.24 & 0.24 & 0.055 & \textbf{0.054} & 0.055 & \textbf{0.053} & 0.056 & \textbf{0.053} & 0.054 & \textbf{0.051} & 0.054 & 0.054 \\
2.2 & 0.26 & 0.26 & 0.059 & 0.058 & 0.058 & \textbf{0.058} & 0.059 & \textbf{0.057} & 0.058 & \textbf{0.051} & 0.058 & 0.058 \\
2.3 & 0.30 & 0.30 & 0.049 & \textbf{0.049} & 0.050 & \textbf{0.049} & 0.050 & \textbf{0.034} & 0.050 & \textbf{0.015} & 0.049 & 0.050 \\
2.4 & 0.32 & 0.32 & 0.046 & \textbf{0.044} & 0.046 & \textbf{0.046} & 0.048 & \textbf{0.020} & 0.048 & \textbf{0.006} & 0.047 & 0.046 \\
2.5 & 0.40 & 0.40 & 0.038 & \textbf{0.033} & 0.043 & 0.052 & 0.051 & \textbf{0.027} & 0.038 & \textbf{0.028} & 0.043 & 0.034 \\
2.6 & 0.46 & 0.46 & 0.049 & \textbf{0.243} & 0.244 & 0.172 & 0.172 & 0.174 & 0.174 & 0.358 & \textbf{0.288} & 0.335 \\
2.7 & 0.54 & 0.54 & 0.056 & 0.086 & \textbf{0.077} & 0.081 & \textbf{0.069} & 0.105 & \textbf{0.073} & 0.295 & \textbf{0.083} & 0.122 \\
\specialrule{\heavyrulewidth}{1pt}{1pt}
\multicolumn{13}{c}{\textbf{Case 3: $\theta_h=0.4$ with random $\bD_h$}} \\
\addlinespace[0.5ex]
3.1 & 0.24 & 0.24 & 0.047 & \textbf{0.046} & 0.047 & \textbf{0.046} & 0.047 & \textbf{0.046} & 0.047 & \textbf{0.043} & 0.047 & 0.046 \\
3.2 & 0.26 & 0.26 & 0.048 & \textbf{0.047} & 0.048 & \textbf{0.047} & 0.048 & 0.047 & 0.047 & \textbf{0.041} & 0.047 & 0.047 \\
3.3 & 0.30 & 0.30 & 0.049 & 0.049 & 0.049 & \textbf{0.049} & 0.050 & \textbf{0.035} & 0.049 & \textbf{0.016} & 0.049 & 0.049 \\
3.4 & 0.32 & 0.32 & 0.048 & \textbf{0.045} & 0.048 & 0.048 & 0.048 & \textbf{0.020} & 0.048 & \textbf{0.006} & 0.048 & 0.047 \\
3.5 & 0.40 & 0.40 & 0.041 & \textbf{0.033} & 0.041 & 0.052 & \textbf{0.051} & \textbf{0.025} & 0.036 & \textbf{0.026} & 0.041 & 0.033 \\
3.6 & 0.46 & 0.46 & 0.048 & 0.247 & \textbf{0.246} & 0.169 & \textbf{0.168} & 0.173 & 0.173 & 0.360 & \textbf{0.297} & 0.338 \\
3.7 & 0.54 & 0.54 & 0.048 & 0.082 & \textbf{0.073} & 0.075 & \textbf{0.064} & 0.100 & \textbf{0.069} & 0.289 & \textbf{0.076} & 0.119 \\
\bottomrule
\end{tabular}%
}
\vspace{0.5ex}

\end{table}

\section{Simulation of Continuous Endpoint}

\yx

This section provides implementation details and additional results for the continuous-endpoint simulation study. We first describe the continuous-outcome data-generating model and design choices, then present results in parallel with the binary-endpoint simulations: point-estimation performance for the concurrent control parameter, interval-estimation performance, treatment-effect bias and coverage, calibrated power, and power and type I error under a pre-specified posterior probability cutoff.
\xx 
\subsection{Simulation setup}
For the continuous endpoint simulations, historical control data were generated from \(D_h \sim \mathcal N(\theta_h,\sigma^2)\) with \(\theta_h=0\), and concurrent control data were generated from \(D \sim \mathcal N(\theta,\sigma^2)\). Varying \(\theta\) away from \(\theta_h\) induced different levels of prior--data conflict. Across all continuous simulations, the standard deviation was fixed at \(\sigma=3\), and the non-informative prior was \(\pi_0(\theta)=\mathcal N(0,10^2)\). \yx We considered both fixed and stochastic historical-data settings. In the fixed setting, the historical sufficient statistics were fixed deterministically, yielding the informative prior \(\pi_h(\theta)=\mathcal N(0,\sigma^2/n_h)\). In the stochastic setting, a historical dataset was generated in each replicate, and the informative prior was constructed from the resulting historical sample mean and variance. \xx 

The same borrowing methods and benchmarks as in the binary endpoint simulations were evaluated: SAM, EB-rMAP, Mix50, PIP, their WOW-gated counterparts, no borrowing (NP), and test-then-pool (TTP). The tuning parameters were \(\delta=0.15\) for SAM, \(\gamma=0.8\) for EB-rMAP, and \(\alpha=0.05\) for PIP. For TTP, we used a two-sided \(z\)-test at level \(0.05\) to test \(H_0:\theta_h=\theta\); if the null hypothesis was not rejected, the historical and concurrent controls were pooled, and otherwise no borrowing was used.

For point-estimation performance of the concurrent control mean \(\theta\), we fixed the concurrent control sample size at \(n=80\) and the treatment-arm sample size at \(n_t=160\). We considered historical sample sizes \(n_h=100,900,\) and \(3600\), which correspond to historical standard errors \(\sigma/\sqrt{n_h}=0.30,0.10,\) and \(0.05\), respectively. We also included an additional stochastic historical-data setting with \(n_h=3600\). The concurrent control mean \(\theta\) was varied from \(-2\) to \(2\), with 2000 simulation replicates at each setting. Methods were evaluated using the same point-estimation metrics as in the binary endpoint simulations: relative bias against NP, ratio MSE against NP, and absolute bias. Interval-estimation performance for \(\theta\) was evaluated using coverage probability and interval score in two settings with \(\theta_h=0\) and \(n_h=900\): one with fixed historical data and one with stochastic historical data. Similarly, treatment-effect bias and coverage were evaluated in two settings with \(d=0.40\), \(\theta_h=0\), and \(n_h=900\), again considering both fixed and stochastic historical data.

\yx
For trial-level operating characteristics, we considered two fixed historical-data settings with standardized effect sizes \(d=0.40\) and \(d=0.23\), and one additional stochastic historical-data setting with \(d=0.40\), all with \(\theta_h=0\) and \(n_h=900\). Concurrent control outcomes were generated from \(\mathcal N(\theta,\sigma^2)\), with \(\theta\) varied around \(0\) to induce different levels of prior--data conflict. Treatment outcomes were generated from \(\mathcal N(\theta_t,\sigma^2)\), where \(d=(\theta_t-\theta)/\sigma\). For \(d=0.23\), we used 2:1 randomization with \(n_t=300\) treatment and \(n=150\) control subjects. For \(d=0.40\), we used smaller sample sizes with \(n_t=160\) and \(n=80\). For calibrated power, we used the same posterior-probability decision rule as in the binary endpoint simulations, with method-specific cutoffs chosen to control type I error at 5\% under \(\theta_t=\theta\). Power and type I error were also evaluated under the pre-specified posterior probability cutoff \(C=0.95\).
\xx

\subsection{Simulation results}
\yx
Figure~\ref{fig:supp:cont:theta-relative-bias} reports relative bias in estimating the concurrent control mean \(\theta\) over \(\theta\in[-2,2]\) with \(\theta_h=0\). Each panel compares the original and WOW-gated versions of SAM, EB-rMAP, Mix50, and PIP, with TTP included as a benchmark, under three fixed historical sample sizes and one stochastic historical-data scenario at \(n_h=3600\). As in the binary endpoint simulations, non-gated methods show increasing bias as \(\theta\) departs from \(\theta_h\), especially under moderate prior--data conflict. PIP behaves similarly near \(\theta=\theta_h\) but often has larger bias away from \(\theta_h\), reflecting its sensitivity to discordant data. EB-rMAP is nearly symmetric because its PPP-based weight is direction invariant for continuous outcomes. Across methods, WOW gating consistently reduces bias and generally performs better than TTP. This benefit becomes more pronounced as \(n_h\) increases. At \(n_h=3600\), non-gated methods show bias exceeding \(0.20\) in magnitude, whereas the WOW-gated versions remain below \(0.12\). The stochastic historical-data scenario in the final row follows the fixed \(n_h=3600\) pattern, with slightly noisier curves because the historical data are regenerated in each replicate.
\xx

\begin{figure}[htbp]
\centering
\includegraphics[width=1\textwidth]{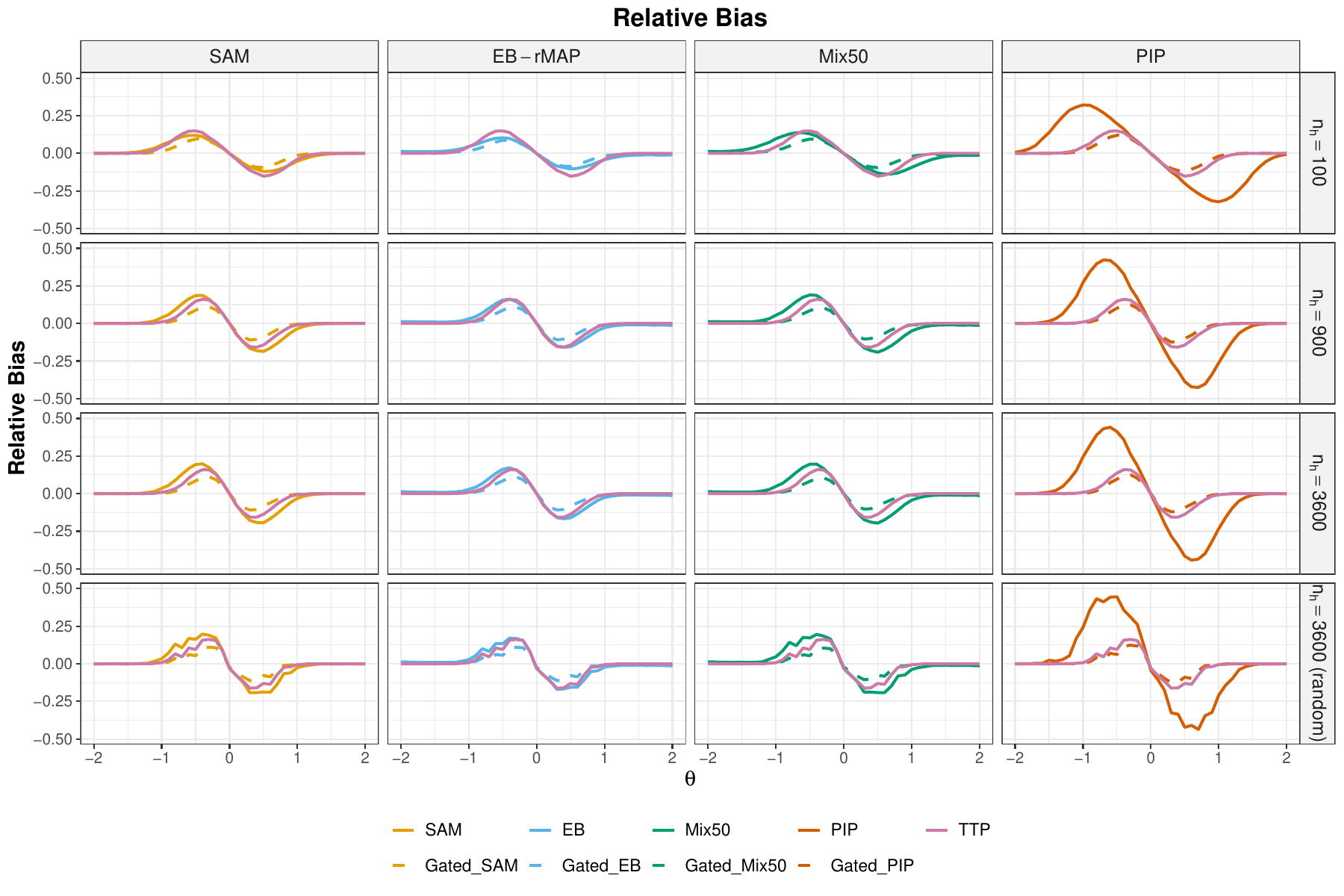}
\caption{
Relative bias in estimating the concurrent control $\theta$ for the continuous endpoint. Original and WOW-gated versions of SAM, EB-rMAP, Mix50, and PIP are compared across historical sample sizes, with an additional stochastic historical-data scenario. TTP is included as a benchmark.}
\label{fig:supp:cont:theta-relative-bias}
\end{figure}

Figure~\ref{fig:supp:cont:theta-ratio-mse} presents the corresponding ratio MSE results under the same settings. The patterns closely mirror those for relative bias: WOW-gated methods reduce MSE in regions with substantial prior--data conflict while maintaining comparable performance to the original methods when \(\theta\) is close to \(\theta_h\). The ratio MSE curves have similar shapes across methods, reflecting their shared mixture-prior borrowing structure. The stochastic historical-data scenario shows the same overall pattern.

\begin{figure}[htbp]
\centering
\includegraphics[width=1\textwidth]{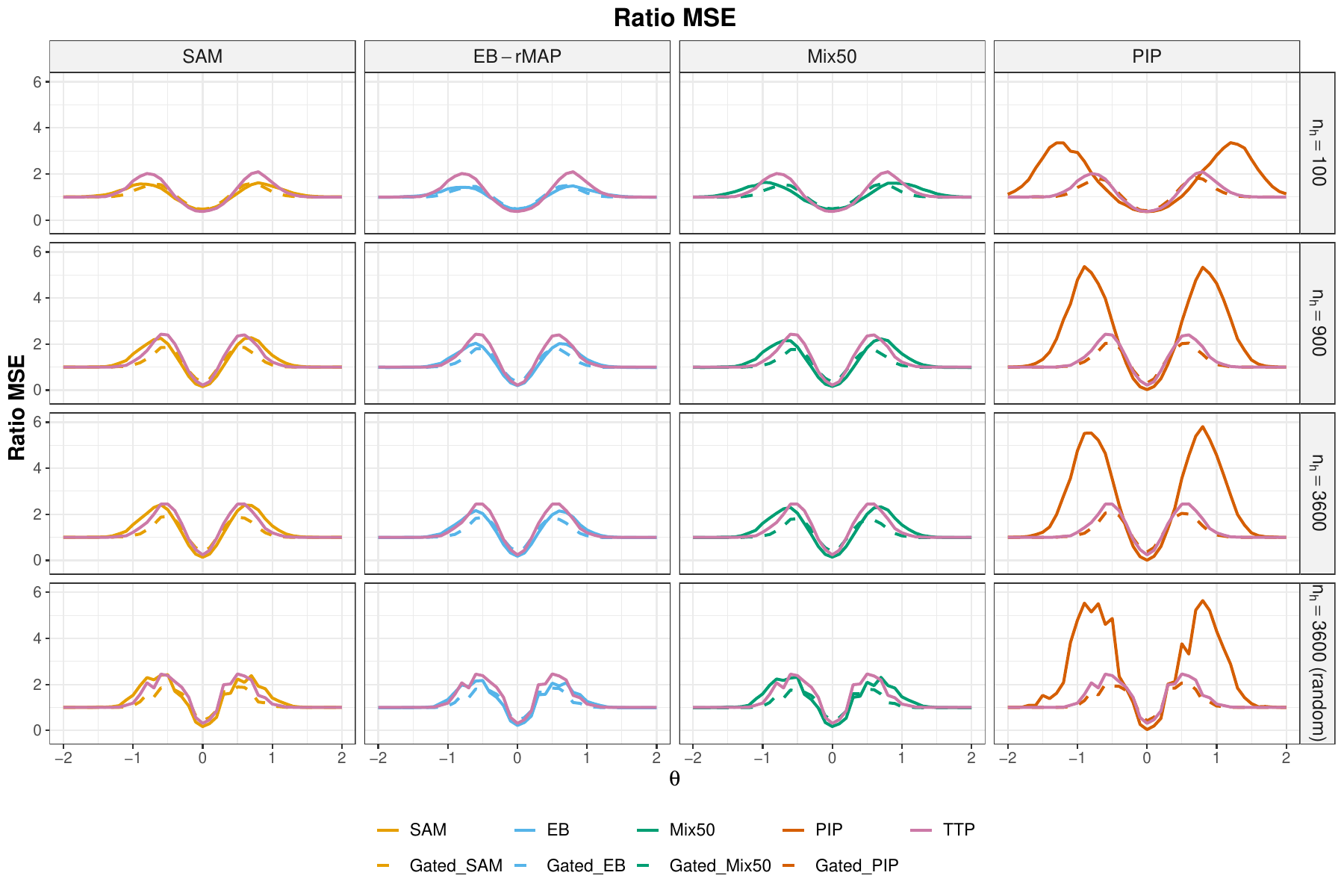}

\caption{Ratio of MSE in estimating the concurrent control response rate $\theta$ for the continuous endpoint. Original and WOW-gated versions of SAM, EB-rMAP, Mix50, and PIP are compared across historical sample sizes, with an additional stochastic historical-data scenario. TTP is included as a benchmark.}
\label{fig:supp:cont:theta-ratio-mse}
\end{figure}

\yx The absolute bias results are reported in  Figure~\ref{fig:supp:cont:theta-absolute-bias}. Consistent with the relative-bias results, the WOW-gated methods generally exhibit smaller absolute bias than their non-gated counterparts under prior--data conflict, with the advantage becoming more pronounced as the historical sample size increases. Tables~\ref{tab:supp:cont:theta-c-coverage} and~\ref{tab:supp:cont:theta-c-interval-score} report coverage probability and interval score for the concurrent control parameter \(\theta\), respectively. Tables~\ref{tab:supp:cont:treatment-effect-bias} and~\ref{tab:supp:cont:treatment-effect-coverage} report bias and coverage for the treatment effect \(\theta_t-\theta\), respectively.

\begin{figure}[htbp]
\centering
\includegraphics[width=1\textwidth]{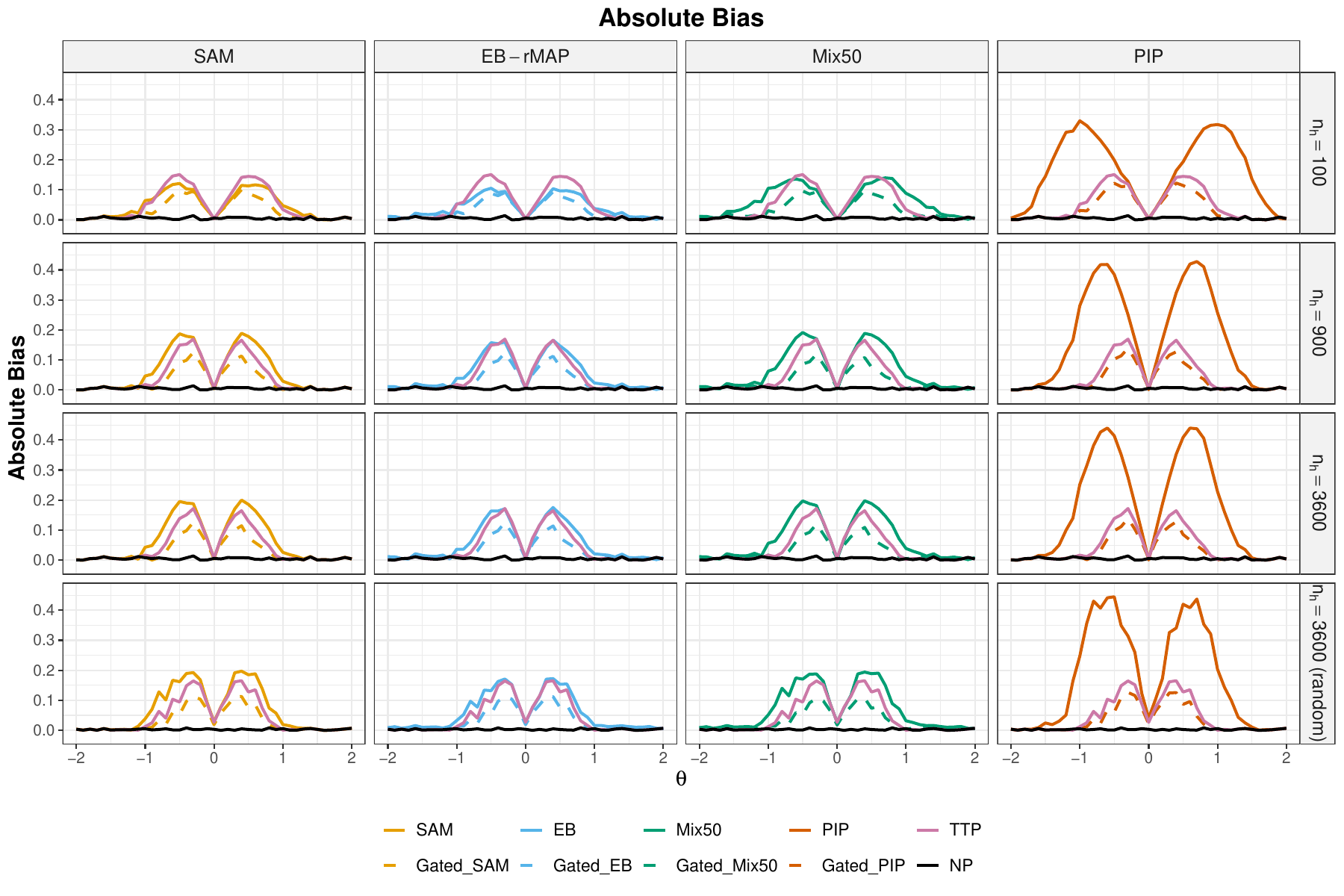}

\caption{Absolute bias in estimating the concurrent control response rate $\theta$ for the continuous endpoint. Original and WOW-gated versions of SAM, EB-rMAP, Mix50, and PIP are compared across historical sample sizes, with an additional stochastic historical-data scenario. TTP is included as a benchmark.}

\label{fig:supp:cont:theta-absolute-bias}
\end{figure}

\begin{table}[htbp]
\centering
\caption{Coverage probability for estimating the concurrent control response rate $\theta$ in the continuous endpoint simulations.}
\label{tab:supp:cont:theta-c-coverage}
\resizebox{\textwidth}{!}{%
\begin{tabular}{ccccccccccccc}
\toprule
Scenario & $\theta$ & $\theta_t$ & NP & SAM & Gated SAM & EB-rMAP & Gated EB-rMAP & Mix50 & Gated Mix50 & PIP & Gated PIP & TTP \\
\midrule
\multicolumn{13}{c}{\textbf{Case 1: $d=0.4$ with fixed $\bD_h$}} \\
\addlinespace[0.5ex]
1.1 & -1.0 & 0.2 & 0.953 & 0.872 & \textbf{0.874} & \textbf{0.898} & 0.897 & 0.891 & \textbf{0.894} & 0.552 & \textbf{0.860} & 0.806 \\
1.2 & -0.8 & 0.4 & 0.954 & 0.807 & 0.807 & 0.838 & 0.838 & \textbf{0.855} & 0.853 & 0.397 & \textbf{0.706} & 0.631 \\
1.3 & 0.0 & 1.2 & 0.953 & \textbf{0.999} & 0.953 & \textbf{0.993} & 0.955 & \textbf{0.999} & 0.955 & \textbf{1.000} & 0.953 & 0.960 \\
1.4 & 0.8 & 2.0 & 0.939 & \textbf{0.777} & 0.774 & \textbf{0.808} & 0.807 & 0.826 & 0.826 & 0.373 & \textbf{0.672} & 0.583 \\
1.5 & 1.0 & 2.2 & 0.953 & 0.881 & \textbf{0.883} & 0.900 & 0.900 & 0.901 & 0.901 & 0.565 & \textbf{0.872} & 0.819 \\
\specialrule{\heavyrulewidth}{1pt}{1pt}
\multicolumn{13}{c}{\textbf{Case 2: $d=0.4$ with random $\bD_h$}} \\
\addlinespace[0.5ex]
2.1 & -1.0 & 0.2 & 0.950 & 0.852 & \textbf{0.853} & 0.884 & \textbf{0.885} & 0.878 & 0.878 & 0.519 & \textbf{0.835} & 0.771 \\
2.2 & -0.8 & 0.4 & 0.949 & \textbf{0.798} & 0.797 & 0.827 & 0.827 & \textbf{0.845} & 0.844 & 0.390 & \textbf{0.700} & 0.621 \\
2.3 & 0.0 & 1.2 & 0.944 & \textbf{0.974} & 0.945 & \textbf{0.969} & 0.946 & \textbf{0.975} & 0.947 & \textbf{0.980} & 0.951 & 0.952 \\
2.4 & 0.8 & 2.0 & 0.942 & \textbf{0.762} & 0.761 & 0.789 & \textbf{0.790} & \textbf{0.824} & 0.822 & 0.328 & \textbf{0.624} & 0.532 \\
2.5 & 1.0 & 2.2 & 0.948 & 0.879 & \textbf{0.883} & 0.897 & \textbf{0.898} & 0.896 & 0.896 & 0.595 & \textbf{0.881} & 0.829 \\
\bottomrule
\end{tabular}%
}
\vspace{0.5ex}

\end{table}

\begin{table}[htbp]
\centering
\caption{Interval score for the concurrent control mean \(\theta\) in the continuous endpoint simulations.}

\label{tab:supp:cont:theta-c-interval-score}
\resizebox{\textwidth}{!}{%
\begin{tabular}{ccccccccccccc}
\toprule
Scenario & $\theta$ & $\theta_t$ & NP & SAM & Gated SAM & EB-rMAP & Gated EB-rMAP & Mix50 & Gated Mix50 & PIP & Gated PIP & TTP \\
\midrule
\multicolumn{13}{c}{\textbf{Case 1: $d=0.4$ with fixed $\bD_h$}} \\
\addlinespace[0.5ex]
1.1 & -1.0 & 0.2 & 1.535 & 2.836 & \textbf{2.701} & 2.772 & \textbf{2.679} & 2.344 & \textbf{2.193} & 11.774 & \textbf{5.074} & 6.637 \\
1.2 & -0.8 & 0.4 & 1.559 & 3.469 & \textbf{3.357} & 3.765 & \textbf{3.678} & 2.553 & \textbf{2.429} & 13.215 & \textbf{7.409} & 9.040 \\
1.3 & 0.0 & 1.2 & 1.542 & \textbf{0.608} & 0.834 & \textbf{0.580} & 0.780 & \textbf{0.769} & 0.983 & \textbf{0.382} & 0.648 & 0.616 \\
1.4 & 0.8 & 2.0 & 1.633 & 3.888 & \textbf{3.787} & 4.250 & \textbf{4.174} & 2.835 & \textbf{2.714} & 13.804 & \textbf{8.109} & 10.016 \\
1.5 & 1.0 & 2.2 & 1.568 & 2.679 & \textbf{2.537} & 2.551 & \textbf{2.462} & 2.261 & \textbf{2.103} & 11.404 & \textbf{4.540} & 6.100 \\
\specialrule{\heavyrulewidth}{1pt}{1pt}
\multicolumn{13}{c}{\textbf{Case 2: $d=0.4$ with random $\bD_h$}} \\
\addlinespace[0.5ex]
2.1 & -1.0 & 0.2 & 1.562 & 3.006 & \textbf{2.874} & 2.935 & \textbf{2.846} & 2.425 & \textbf{2.275} & 12.310 & \textbf{5.475} & 7.216 \\
2.2 & -0.8 & 0.4 & 1.610 & 3.617 & \textbf{3.507} & 3.912 & \textbf{3.828} & 2.661 & \textbf{2.540} & 13.409 & \textbf{7.564} & 9.263 \\
2.3 & 0.0 & 1.2 & 1.609 & \textbf{0.703} & 0.923 & \textbf{0.697} & 0.867 & \textbf{0.849} & 1.063 & \textbf{0.412} & 0.729 & 0.668 \\
2.4 & 0.8 & 2.0 & 1.595 & 3.938 & \textbf{3.846} & 4.450 & \textbf{4.374} & 2.790 & \textbf{2.687} & 13.709 & \textbf{8.481} & 10.272 \\
2.5 & 1.0 & 2.2 & 1.589 & 2.751 & \textbf{2.608} & 2.631 & \textbf{2.540} & 2.337 & \textbf{2.178} & 10.857 & \textbf{4.347} & 5.905 \\
\bottomrule
\end{tabular}%
}
\vspace{0.5ex}

\end{table}

\begin{table}[htbp]
\centering
\caption{Bias for the treatment effect \(\theta_t-\theta\) in the continuous endpoint simulations.}

\label{tab:supp:cont:treatment-effect-bias}
\resizebox{\textwidth}{!}{%
\begin{tabular}{ccccccccccccc}
\toprule
Scenario & $\theta$ & $\theta_t$ & NP & SAM & Gated SAM & EB-rMAP & Gated EB-rMAP & Mix50 & Gated Mix50 & PIP & Gated PIP & TTP \\
\midrule
\multicolumn{13}{c}{\textbf{Case 1: $d=0.4$ with fixed $\bD_h$}} \\
\addlinespace[0.5ex]
1.1 & -1.0 & 0.2 & -0.003 & -0.168 & \textbf{-0.046} & -0.111 & \textbf{-0.052} & -0.166 & \textbf{-0.052} & -0.564 & \textbf{-0.054} & -0.085 \\
1.2 & -0.8 & 0.4 & 0.007 & -0.223 & \textbf{-0.080} & -0.155 & \textbf{-0.081} & -0.210 & \textbf{-0.079} & -0.572 & \textbf{-0.094} & -0.138 \\
1.3 & 0.0 & 1.2 & 0.008 & 0.004 & 0.004 & 0.003 & 0.003 & 0.005 & \textbf{0.004} & 0.008 & \textbf{0.004} & 0.002 \\
1.4 & 0.8 & 2.0 & 0.006 & 0.236 & \textbf{0.098} & 0.171 & \textbf{0.099} & 0.222 & \textbf{0.096} & 0.565 & \textbf{0.113} & 0.165 \\
1.5 & 1.0 & 2.2 & -0.021 & 0.139 & \textbf{0.015} & 0.082 & \textbf{0.022} & 0.138 & \textbf{0.022} & 0.532 & \textbf{0.022} & 0.053 \\
\specialrule{\heavyrulewidth}{1pt}{1pt}
\multicolumn{13}{c}{\textbf{Case 2: $d=0.4$ with random $\bD_h$}} \\
\addlinespace[0.5ex]
2.1 & -1.0 & 0.2 & 0.005 & -0.172 & \textbf{-0.043} & -0.111 & \textbf{-0.049} & -0.168 & \textbf{-0.048} & -0.563 & \textbf{-0.051} & -0.088 \\
2.2 & -0.8 & 0.4 & 0.001 & -0.227 & \textbf{-0.086} & -0.160 & \textbf{-0.087} & -0.214 & \textbf{-0.084} & -0.568 & \textbf{-0.100} & -0.146 \\
2.3 & 0.0 & 1.2 & 0.001 & 0.095 & \textbf{0.070} & 0.086 & \textbf{0.070} & 0.086 & \textbf{0.064} & 0.131 & \textbf{0.079} & 0.094 \\
2.4 & 0.8 & 2.0 & 0.003 & 0.249 & \textbf{0.109} & 0.180 & \textbf{0.107} & 0.232 & \textbf{0.104} & 0.557 & \textbf{0.125} & 0.180 \\
2.5 & 1.0 & 2.2 & -0.017 & 0.131 & \textbf{0.015} & 0.079 & \textbf{0.024} & 0.132 & \textbf{0.023} & 0.525 & \textbf{0.021} & 0.051 \\
\bottomrule
\end{tabular}%
}
\vspace{0.5ex}

\end{table}

\begin{table}[htbp]
\centering
\caption{Coverage for the treatment effect $\theta_t-\theta$ in the continuous endpoint simulations.}
\label{tab:supp:cont:treatment-effect-coverage}
\resizebox{\textwidth}{!}{%
\begin{tabular}{ccccccccccccc}
\toprule
Scenario & $\theta$ & $\theta_t$ & NP & SAM & Gated SAM & EB-rMAP & Gated EB-rMAP & Mix50 & Gated Mix50 & PIP & Gated PIP & TTP \\
\midrule
\multicolumn{13}{c}{\textbf{Case 1: $d=0.4$ with fixed $\bD_h$}} \\
\addlinespace[0.5ex]
1.1 & -1.0 & 0.2 & 0.950 & 0.886 & \textbf{0.894} & 0.906 & \textbf{0.909} & 0.904 & \textbf{0.910} & 0.584 & \textbf{0.857} & 0.808 \\
1.2 & -0.8 & 0.4 & 0.953 & \textbf{0.835} & 0.833 & 0.845 & \textbf{0.846} & \textbf{0.879} & 0.877 & 0.488 & \textbf{0.742} & 0.679 \\
1.3 & 0.0 & 1.2 & 0.949 & \textbf{0.965} & 0.948 & \textbf{0.957} & 0.945 & \textbf{0.969} & 0.952 & \textbf{0.957} & 0.943 & 0.944 \\
1.4 & 0.8 & 2.0 & 0.942 & 0.814 & \textbf{0.816} & 0.820 & \textbf{0.823} & \textbf{0.852} & 0.849 & 0.463 & \textbf{0.707} & 0.630 \\
1.5 & 1.0 & 2.2 & 0.955 & 0.896 & \textbf{0.905} & 0.913 & \textbf{0.915} & 0.913 & \textbf{0.920} & 0.589 & \textbf{0.873} & 0.822 \\
\specialrule{\heavyrulewidth}{1pt}{1pt}
\multicolumn{13}{c}{\textbf{Case 2: $d=0.4$ with random $\bD_h$}} \\
\addlinespace[0.5ex]
2.1 & -1.0 & 0.2 & 0.948 & 0.872 & \textbf{0.881} & 0.890 & \textbf{0.895} & 0.895 & \textbf{0.900} & 0.558 & \textbf{0.838} & 0.777 \\
2.2 & -0.8 & 0.4 & 0.946 & 0.826 & 0.826 & \textbf{0.839} & 0.837 & \textbf{0.869} & 0.867 & 0.483 & \textbf{0.733} & 0.667 \\
2.3 & 0.0 & 1.2 & 0.940 & \textbf{0.943} & 0.918 & \textbf{0.933} & 0.913 & \textbf{0.952} & 0.928 & \textbf{0.931} & 0.907 & 0.909 \\
2.4 & 0.8 & 2.0 & 0.945 & \textbf{0.807} & 0.803 & \textbf{0.816} & 0.813 & \textbf{0.854} & 0.848 & 0.463 & \textbf{0.687} & 0.613 \\
2.5 & 1.0 & 2.2 & 0.950 & 0.888 & \textbf{0.899} & 0.905 & \textbf{0.910} & 0.905 & \textbf{0.912} & 0.612 & \textbf{0.878} & 0.827 \\
\bottomrule
\end{tabular}%
}
\vspace{0.5ex}

\end{table}

\yx 

 Table~\ref{tab:supp:cont:calibrated-power-revised} reports calibrated power for NP, SAM, EB-rMAP, Mix50, PIP, their WOW-gated counterparts, and TTP. Results are shown for three scenarios: fixed historical data with \(d=0.40\) in Case~1, fixed historical data with \(d=0.23\) in Case~2, and stochastic historical data with \(d=0.40\) in Case~3. \xx

\begin{table}[htbp]
\centering

\caption{Power for continuous endpoints under the calibrated posterior probability cutoff $C$.}

\label{tab:supp:cont:calibrated-power-revised}
\resizebox{\textwidth}{!}{%
\begin{tabular}{ccccccccccccc}
\toprule
Scenario & $\theta$ & $\theta_t$ & NP & SAM & Gated SAM & EB-rMAP & Gated EB-rMAP & Mix50 & Gated Mix50 & PIP & Gated PIP & TTP \\
\midrule
\multicolumn{13}{c}{\textbf{Case 1: $d=0.4$ with fixed $\bD_h$}} \\
\addlinespace[0.5ex]
1.1 & -1.5 & -0.3 & 0.895 & 0.832 & \textbf{0.897} & 0.876 & \textbf{0.902} & 0.837 & \textbf{0.910} & 0.578 & \textbf{0.887} & 0.887 \\
1.2 & -1.3 & -0.1 & 0.909 & 0.769 & \textbf{0.906} & 0.820 & \textbf{0.912} & 0.757 & \textbf{0.901} & 0.467 & \textbf{0.896} & 0.892 \\
1.3 & -1.2 & 0.0 & 0.890 & 0.721 & \textbf{0.886} & 0.805 & \textbf{0.891} & 0.723 & \textbf{0.900} & 0.410 & \textbf{0.892} & 0.880 \\
1.4 & -1.0 & 0.2 & 0.895 & 0.712 & \textbf{0.870} & 0.784 & \textbf{0.875} & 0.725 & \textbf{0.872} & 0.550 & \textbf{0.874} & 0.845 \\
1.5 & 0.0 & 1.2 & 0.896 & \textbf{0.983} & 0.967 & \textbf{0.975} & 0.965 & \textbf{0.980} & 0.964 & \textbf{1.000} & 0.971 & 0.980 \\
1.6 & 0.2 & 1.4 & 0.881 & \textbf{0.926} & 0.898 & \textbf{0.894} & 0.879 & \textbf{0.924} & 0.900 & \textbf{0.987} & 0.896 & 0.927 \\
1.7 & 0.5 & 1.7 & 0.905 & \textbf{0.779} & 0.714 & \textbf{0.681} & 0.657 & \textbf{0.832} & 0.782 & \textbf{0.859} & 0.628 & 0.708 \\
\specialrule{\heavyrulewidth}{1pt}{1pt}
\multicolumn{13}{c}{\textbf{Case 2: $d=0.23$ with fixed $\bD_h$}} \\
\addlinespace[0.5ex]
2.1 & -1.2 & -0.5 & 0.737 & 0.725 & \textbf{0.739} & 0.734 & \textbf{0.743} & 0.714 & \textbf{0.744} & 0.572 & \textbf{0.751} & 0.754 \\
2.2 & -1.0 & -0.3 & 0.708 & 0.626 & \textbf{0.710} & 0.647 & \textbf{0.713} & 0.595 & \textbf{0.712} & 0.320 & \textbf{0.711} & 0.705 \\
2.3 & -0.9 & -0.2 & 0.755 & 0.633 & \textbf{0.753} & 0.673 & \textbf{0.756} & 0.587 & \textbf{0.758} & 0.282 & \textbf{0.750} & 0.747 \\
2.4 & -0.8 & -0.1 & 0.749 & 0.572 & \textbf{0.743} & 0.619 & \textbf{0.745} & 0.535 & \textbf{0.744} & 0.280 & \textbf{0.748} & 0.743 \\
2.5 & 0.0 & 0.7 & 0.750 & \textbf{0.959} & 0.939 & \textbf{0.948} & 0.934 & \textbf{0.952} & 0.932 & \textbf{0.994} & 0.953 & 0.971 \\
2.6 & 0.2 & 0.9 & 0.731 & \textbf{0.821} & 0.803 & \textbf{0.800} & 0.792 & \textbf{0.821} & 0.795 & \textbf{0.963} & 0.832 & 0.884 \\
2.7 & 0.4 & 1.1 & 0.773 & \textbf{0.631} & 0.596 & \textbf{0.564} & 0.545 & \textbf{0.679} & 0.625 & \textbf{0.823} & 0.602 & 0.691 \\
\specialrule{\heavyrulewidth}{1pt}{1pt}
\multicolumn{13}{c}{\textbf{Case 3: $d=0.4$ with random $\bD_h$}} \\
\addlinespace[0.5ex]
3.1 & -1.5 & -0.3 & 0.894 & 0.805 & \textbf{0.894} & 0.857 & \textbf{0.898} & 0.800 & \textbf{0.896} & 0.527 & \textbf{0.892} & 0.892 \\
3.2 & -1.3 & -0.1 & 0.899 & 0.824 & \textbf{0.898} & 0.860 & \textbf{0.900} & 0.819 & \textbf{0.899} & 0.587 & \textbf{0.899} & 0.898 \\
3.3 & -1.2 & 0.0 & 0.896 & 0.725 & \textbf{0.889} & 0.812 & \textbf{0.892} & 0.731 & \textbf{0.892} & 0.412 & \textbf{0.893} & 0.882 \\
3.4 & -1.0 & 0.2 & 0.896 & 0.713 & \textbf{0.870} & 0.787 & \textbf{0.875} & 0.724 & \textbf{0.875} & 0.489 & \textbf{0.872} & 0.840 \\
3.5 & 0.0 & 1.2 & 0.899 & \textbf{0.962} & 0.938 & \textbf{0.947} & 0.931 & \textbf{0.960} & 0.938 & \textbf{0.995} & 0.937 & 0.954 \\
3.6 & 0.2 & 1.4 & 0.888 & \textbf{0.909} & 0.876 & \textbf{0.878} & 0.862 & \textbf{0.913} & 0.883 & \textbf{0.980} & 0.867 & 0.903 \\
3.7 & 0.5 & 1.7 & 0.885 & \textbf{0.747} & 0.696 & \textbf{0.626} & 0.612 & \textbf{0.794} & 0.744 & \textbf{0.832} & 0.636 & 0.720 \\
\bottomrule
\end{tabular}%
}
\vspace{0.5ex}

\end{table}

In Case 1, WOW-gated methods consistently outperform their non-gated counterparts when prior–data conflict is present with  $\theta<\theta_h$, where aggressive borrowing leads to underestimated treatment effects. \yx For instance, Gated Mix50, Gated EB-rMAP,  Gated SAM and Gated PIP achieve substantially higher power than their original versions. \xx  When $\theta = 0$,  where historical and current data align, gated methods yield slightly lower power relative to their non-gated counterparts, but the difference remains small, demonstrating that WOW preserves much of the efficiency when borrowing is appropriate.  In settings where $\theta>\theta_h$, borrowing tends to underestimate the control response and inflate the treatment effect. Under these conditions, WOW-gated methods adopt a more conservative borrowing approach, yielding slightly lower power compared to their non-gated versions. This trade-off reflects a deliberate focus on robustness to mitigate the risk of overestimating treatment effects due to incompatibility.  Non-gated methods, as demonstrated in  Figure~\ref{fig:supp:cont:theta-relative-bias}, produce substantial bias when incompatible data are borrowed, further justifying the need for WOW’s safeguards against misleading inference. \yx Case 2 ($d=0.23$) and Case 3 ($d =0.4$) show similar trends. \xx  WOW consistently reduces bias and delivers stable power performance across varying levels of prior–data conflict.

\yx
Tables~\ref{tab:supp:cont:prespecified-power} and~\ref{tab:supp:cont:type1-error} report the corresponding pre-specified power and type I error for the continuous endpoint under the common cutoff $C=0.95$, and show patterns similar to calibrated power results.
\xx

\begin{table}[htbp]
\centering
\caption{Power for continuous endpoints under the pre-specified posterior probability cutoff $C=0.95$.}

\label{tab:supp:cont:prespecified-power}
\resizebox{\textwidth}{!}{%
\begin{tabular}{ccccccccccccc}
\toprule
Scenario & $\theta$ & $\theta_t$ & NP & SAM & Gated SAM & EB-rMAP & Gated EB-rMAP & Mix50 & Gated Mix50 & PIP & Gated PIP & TTP \\
\midrule
\multicolumn{13}{c}{\textbf{Case 1: $d=0.4$ with fixed $\bD_h$}} \\
\addlinespace[0.5ex]
1.1 & -1.5 & -0.3 & 0.885 & 0.820 & \textbf{0.886} & 0.855 & \textbf{0.880} & 0.807 & \textbf{0.881} & 0.578 & \textbf{0.887} & 0.888 \\
1.2 & -1.3 & -0.1 & 0.898 & 0.760 & \textbf{0.896} & 0.825 & \textbf{0.894} & 0.750 & \textbf{0.894} & 0.446 & \textbf{0.896} & 0.891 \\
1.3 & -1.2 & 0.0 & 0.899 & 0.722 & \textbf{0.894} & 0.805 & \textbf{0.891} & 0.712 & \textbf{0.891} & 0.392 & \textbf{0.893} & 0.884 \\
1.4 & -1.0 & 0.2 & 0.897 & 0.706 & \textbf{0.870} & 0.778 & \textbf{0.871} & 0.702 & \textbf{0.871} & 0.415 & \textbf{0.870} & 0.840 \\
1.5 & 0.0 & 1.2 & 0.900 & \textbf{0.978} & 0.967 & \textbf{0.973} & 0.966 & \textbf{0.973} & 0.964 & \textbf{1.000} & 0.971 & 0.980 \\
1.6 & 0.2 & 1.4 & 0.884 & \textbf{0.956} & 0.927 & \textbf{0.943} & 0.928 & \textbf{0.953} & 0.930 & \textbf{0.996} & 0.929 & 0.946 \\
1.7 & 0.5 & 1.7 & 0.903 & \textbf{0.926} & 0.908 & \textbf{0.919} & 0.911 & \textbf{0.929} & 0.912 & \textbf{0.981} & 0.908 & 0.914 \\
\specialrule{\heavyrulewidth}{1pt}{1pt}
\multicolumn{13}{c}{\textbf{Case 2: $d=0.23$ with fixed $\bD_h$}} \\
\addlinespace[0.5ex]
2.1 & -1.2 & -0.5 & 0.741 & 0.727 & \textbf{0.740} & 0.724 & \textbf{0.735} & 0.702 & \textbf{0.737} & 0.562 & \textbf{0.741} & 0.741 \\
2.2 & -1.0 & -0.3 & 0.738 & 0.646 & \textbf{0.737} & 0.668 & \textbf{0.731} & 0.607 & \textbf{0.733} & 0.329 & \textbf{0.740} & 0.739 \\
2.3 & -0.9 & -0.2 & 0.755 & 0.630 & \textbf{0.754} & 0.664 & \textbf{0.751} & 0.577 & \textbf{0.751} & 0.261 & \textbf{0.752} & 0.754 \\
2.4 & -0.8 & -0.1 & 0.739 & 0.557 & \textbf{0.733} & 0.599 & \textbf{0.730} & 0.509 & \textbf{0.730} & 0.199 & \textbf{0.738} & 0.730 \\
2.5 & 0.0 & 0.7 & 0.735 & \textbf{0.941} & 0.935 & \textbf{0.935} & 0.933 & \textbf{0.929} & 0.925 & \textbf{0.984} & 0.953 & 0.966 \\
2.6 & 0.2 & 0.9 & 0.732 & \textbf{0.895} & 0.861 & \textbf{0.879} & 0.858 & \textbf{0.890} & 0.859 & \textbf{0.991} & 0.865 & 0.902 \\
2.7 & 0.4 & 1.1 & 0.741 & \textbf{0.841} & 0.777 & \textbf{0.820} & 0.778 & \textbf{0.845} & 0.779 & \textbf{0.967} & 0.775 & 0.802 \\
\specialrule{\heavyrulewidth}{1pt}{1pt}
\multicolumn{13}{c}{\textbf{Case 3: $d=0.4$ with random $\bD_h$}} \\
\addlinespace[0.5ex]
3.1 & -1.5 & -0.3 & 0.894 & 0.805 & \textbf{0.895} & 0.850 & \textbf{0.890} & 0.794 & \textbf{0.891} & 0.521 & \textbf{0.893} & 0.893 \\
3.2 & -1.3 & -0.1 & 0.898 & 0.818 & \textbf{0.897} & 0.856 & \textbf{0.893} & 0.804 & \textbf{0.894} & 0.535 & \textbf{0.897} & 0.897 \\
3.3 & -1.2 & 0.0 & 0.898 & 0.727 & \textbf{0.891} & 0.807 & \textbf{0.889} & 0.722 & \textbf{0.890} & 0.393 & \textbf{0.891} & 0.881 \\
3.4 & -1.0 & 0.2 & 0.898 & 0.713 & \textbf{0.870} & 0.776 & \textbf{0.871} & 0.714 & \textbf{0.870} & 0.452 & \textbf{0.868} & 0.837 \\
3.5 & 0.0 & 1.2 & 0.900 & \textbf{0.964} & 0.946 & \textbf{0.955} & 0.944 & \textbf{0.959} & 0.944 & \textbf{0.996} & 0.944 & 0.958 \\
3.6 & 0.2 & 1.4 & 0.890 & \textbf{0.950} & 0.919 & \textbf{0.938} & 0.921 & \textbf{0.947} & 0.921 & \textbf{0.993} & 0.920 & 0.937 \\
3.7 & 0.5 & 1.7 & 0.892 & \textbf{0.924} & 0.900 & \textbf{0.914} & 0.902 & \textbf{0.926} & 0.903 & \textbf{0.980} & 0.898 & 0.905 \\
\bottomrule
\end{tabular}%
}
\vspace{0.5ex}

\end{table}

\begin{table}[htbp]
\centering
\caption{Type I error for continuous endpoints under the pre-specified posterior probability cutoff $C=0.95$.}

\label{tab:supp:cont:type1-error}
\resizebox{\textwidth}{!}{%
\begin{tabular}{ccccccccccccc}
\toprule
Scenario & $\theta$ & $\theta_t$ & NP & SAM & Gated SAM & EB-rMAP & Gated EB-rMAP & Mix50 & Gated Mix50 & PIP & Gated PIP & TTP \\
\midrule
\multicolumn{13}{c}{\textbf{Case 1: $d=0.4$ with fixed $\bD_h$}} \\
\addlinespace[0.5ex]
1.1 & -1.5 & -1.5 & 0.050 & \textbf{0.049} & 0.050 & 0.043 & 0.043 & 0.044 & 0.044 & \textbf{0.050} & 0.051 & 0.051 \\
1.2 & -1.3 & -1.3 & 0.048 & 0.048 & 0.048 & 0.046 & 0.046 & 0.045 & 0.046 & \textbf{0.045} & 0.049 & 0.049 \\
1.3 & -1.2 & -1.2 & 0.053 & 0.053 & 0.053 & \textbf{0.050} & 0.051 & \textbf{0.050} & 0.051 & \textbf{0.043} & 0.051 & 0.051 \\
1.4 & -1.0 & -1.0 & 0.051 & \textbf{0.048} & 0.050 & \textbf{0.046} & 0.048 & \textbf{0.043} & 0.050 & \textbf{0.031} & 0.047 & 0.047 \\
1.5 & 0.0 & 0.0 & 0.052 & \textbf{0.035} & 0.051 & \textbf{0.042} & 0.053 & \textbf{0.031} & 0.050 & \textbf{0.037} & 0.054 & 0.050 \\
1.6 & 0.2 & 0.2 & 0.052 & 0.102 & 0.102 & \textbf{0.111} & 0.114 & \textbf{0.085} & 0.086 & 0.143 & \textbf{0.133} & 0.139 \\
1.7 & 0.5 & 0.5 & 0.049 & 0.262 & \textbf{0.255} & 0.252 & 0.253 & 0.196 & \textbf{0.195} & 0.498 & \textbf{0.362} & 0.410 \\
\specialrule{\heavyrulewidth}{1pt}{1pt}
\multicolumn{13}{c}{\textbf{Case 2: $d=0.23$ with fixed $\bD_h$}} \\
\addlinespace[0.5ex]
2.1 & -1.2 & -1.2 & 0.051 & 0.051 & 0.051 & 0.049 & 0.049 & 0.049 & 0.049 & 0.048 & 0.049 & 0.049 \\
2.2 & -1.0 & -1.0 & 0.059 & 0.059 & 0.059 & \textbf{0.056} & 0.058 & \textbf{0.055} & 0.058 & \textbf{0.056} & 0.059 & 0.060 \\
2.3 & -0.9 & -0.9 & 0.050 & \textbf{0.049} & 0.051 & 0.048 & 0.048 & \textbf{0.047} & 0.048 & \textbf{0.043} & 0.051 & 0.052 \\
2.4 & -0.8 & -0.8 & 0.048 & \textbf{0.047} & 0.048 & 0.045 & 0.045 & \textbf{0.043} & 0.046 & \textbf{0.026} & 0.047 & 0.046 \\
2.5 & 0.0 & 0.0 & 0.051 & \textbf{0.032} & 0.048 & \textbf{0.036} & 0.049 & \textbf{0.029} & 0.045 & \textbf{0.029} & 0.050 & 0.041 \\
2.6 & 0.2 & 0.2 & 0.053 & \textbf{0.141} & 0.142 & \textbf{0.151} & 0.155 & \textbf{0.120} & 0.121 & 0.193 & \textbf{0.180} & 0.185 \\
2.7 & 0.4 & 0.4 & 0.043 & 0.234 & \textbf{0.233} & 0.229 & \textbf{0.226} & 0.193 & \textbf{0.189} & 0.462 & \textbf{0.317} & 0.357 \\
\specialrule{\heavyrulewidth}{1pt}{1pt}
\multicolumn{13}{c}{\textbf{Case 3: $d=0.4$ with random $\bD_h$}} \\
\addlinespace[0.5ex]
3.1 & -1.5 & -1.5 & 0.051 & 0.051 & 0.051 & 0.046 & 0.046 & 0.046 & 0.046 & \textbf{0.048} & 0.051 & 0.051 \\
3.2 & -1.3 & -1.3 & 0.050 & \textbf{0.049} & 0.050 & 0.047 & 0.047 & \textbf{0.045} & 0.047 & \textbf{0.035} & 0.048 & 0.049 \\
3.3 & -1.2 & -1.2 & 0.051 & 0.051 & 0.051 & 0.047 & 0.047 & 0.047 & 0.047 & \textbf{0.043} & 0.049 & 0.048 \\
3.4 & -1.0 & -1.0 & 0.051 & 0.050 & 0.050 & 0.047 & 0.047 & 0.047 & \textbf{0.046} & \textbf{0.042} & 0.048 & 0.048 \\
3.5 & 0.0 & 0.0 & 0.051 & \textbf{0.054} & 0.066 & \textbf{0.061} & 0.071 & \textbf{0.050} & 0.060 & \textbf{0.062} & 0.074 & 0.070 \\
3.6 & 0.2 & 0.2 & 0.051 & \textbf{0.118} & 0.119 & 0.128 & 0.128 & \textbf{0.097} & 0.099 & 0.164 & \textbf{0.151} & 0.158 \\
3.7 & 0.5 & 0.5 & 0.053 & 0.268 & \textbf{0.264} & 0.256 & 0.256 & 0.205 & 0.205 & 0.580 & \textbf{0.376} & 0.441 \\
\bottomrule
\end{tabular}%
}
\vspace{0.5ex}

\end{table}

\endgroup

\end{document}